\documentclass[10pt,twocolumn,letterpaper]{article}
\usepackage[usenames, dvipsnames]{xcolor}
\usepackage{usenix, times,color,graphicx,amsmath,array,amssymb, subcaption}
\usepackage{color}
\usepackage{xcolor}
\usepackage{multirow}
\usepackage{amsmath}
\usepackage{wrapfig}
\usepackage{amsfonts}
\usepackage{mdframed}
\usepackage{amsthm}
\usepackage{listings}
\usepackage{multirow}
\usepackage{xspace}
\usepackage{comment}
\usepackage{rotating}
\usepackage{colortbl}
\usepackage{siunitx}
\usepackage{makecell}
\usepackage{transparent}
\usepackage{tightenum}
\usepackage{adjustbox}
\usepackage{listings}
\usepackage{microtype}
\usepackage{tabulary}
\usepackage{floatflt}
\usepackage{textcomp}
\usepackage[small,compact]{titlesec}
\usepackage{enumitem}
\usepackage{booktabs}
\usepackage[
  colorlinks=true,
  citecolor=red
]{hyperref}
\usepackage[normalem]{ulem}
\usepackage{standalone}
\usepackage{tikz}
\usetikzlibrary{shapes.geometric}
\usetikzlibrary{arrows.meta,arrows}
\usepackage{pgfplots}
\usepackage{lipsum}
\usepackage[available,functional,reproduced]{usenixbadges}

\newcommand\blfootnote[1]{%
  \begingroup
  \renewcommand\thefootnote{}\footnote{#1}%
  \addtocounter{footnote}{-1}%
  \endgroup
}

\pgfplotsset{compat=1.16}

\hypersetup{
  colorlinks,
  linkcolor={red!50!black},
  citecolor={blue!50!black},
  urlcolor={blue!80!black}
}

\newcommand{\squishlist}{
   \begin{list}{$\bullet$}
    { \setlength{\itemsep}{0pt}      \setlength{\parsep}{3pt}
      \setlength{\topsep}{3pt}       \setlength{\partopsep}{0pt}
      \setlength{\leftmargin}{1.0em} \setlength{\labelwidth}{1em}
      \setlength{\labelsep}{0.5em} } }
\newcommand{\squishend}{
    \end{list}  }
    
\newcommand{\tool}[0]{Dorylus\xspace}

\newcommand{\naively}[0]{na\"{i}vely\xspace}

\newcommand{\codeIn}[1]{{\small\texttt{#1}}}

\newcommand{\mysection}[1]{\vspace{-.2em}\section{#1}\vspace{-.2em}}
\newcommand{\mysubsection}[1]{\subsection{#1}\vspace{-.1em}}
\newcommand{\mysubsubsection}[1]{\subsubsection{{\normalsize #1}}\vspace{-.5em}}
\newcommand{\MyPara}[1]{\vspace{.1em}\noindent\textbf{\textit{#1}}~~}

%%%%%%%%%%%%%%%%%%%%%%%%%%%%%%%%%%%%%%%%%%%%%%%%
% new commands to reduce the white space
%%%%%%%%%%%%%%%%%%%%%%%%%%%%%%%%%%%%%%%%%%%%%%%%

\sloppy

 %labeling
 %labeling
 %labeled tree
 %labeled tree
 %labeled tree
   %label

% \newtheorem{theorem}{Theorem}[section]
% \newtheorem{definition}{Definition}[section]
% \newtheorem{lemma}{Lemma}[section]
% \newtheorem{corollary}{Corollary}[section]
% \newtheorem{proposition}{Proposition}[section]
\newtheorem{theorem}{Theorem}
\newtheorem{definition}{Definition}
\newtheorem{lemma}{Lemma}
\newtheorem{proposition}{Proposition}

\newenvironment{customthm}[1]
  {\innercustomthm}
{\endinnercustomthm}

%\newcommand{\captionfonts}{\small}

%\makeatletter  % Allow the use of @ in command names
%\long\def\@makecaption#1#2{%
%  \vskip\abovecaptionskip
%  \sbox\@tempboxa{{\captionfonts #1: #2}}%
%  \ifdim \wd\@tempboxa >\hsize
%    {\captionfonts #1: #2\par}
%  \else
%    \hbox to\hsize{\hfil\box\@tempboxa\hfil}%
%  \fi
%  \vskip\belowcaptionskip}
%\makeatother   % Cancel the effect of \makeatletter

\newcommand{\squishlistree}{
   \begin{list}{$\bullet$}
    { \setlength{\itemsep}{0pt}      \setlength{\parsep}{0pt}
      \setlength{\topsep}{3pt}       \setlength{\partopsep}{0pt}
      \setlength{\leftmargin}{1em} \setlength{\labelwidth}{1em}
      \setlength{\labelsep}{0.5em} } }

\newcommand{\squishlisttwo}{
   \begin{list}{$\bullet$}
    { \setlength{\itemsep}{0pt}    \setlength{\parsep}{0pt}
      \setlength{\topsep}{0pt}     \setlength{\partopsep}{0pt}
      \setlength{\leftmargin}{2em} \setlength{\labelwidth}{1.5em}
      \setlength{\labelsep}{0.5em} } }

\newcommand{\hx}[1] {{\textcolor{red}{HX: {#1}}}}
\newcommand{\jt}[1] {{#1}}

\newcommand{\eg}{\hbox{\emph{e.g.}}\xspace}
\newcommand{\ie}{\hbox{\emph{i.e.}}\xspace}

\newcommand{\wrt}{\hbox{\emph{w.r.t.}}\xspace}

\begin{document}
%\twocolumn[\begin{@twocolumnfalse}

%\begin{centering}
%\vspace{0.1cm}
%{\large \bf \tool: Affordable, Scalable, and Accurate GNN Training over Billion-Edge Graphs\\ }
%\vspace{0.3cm}
%To appear in OSDI'21 \\

%\end{centering}

%\vspace{\baselineskip}

%\end{@twocolumnfalse}]

\title{\tool: Affordable, Scalable, and Accurate GNN Training with \\Distributed CPU Servers and Serverless Threads\vspace{-1em}}

\author{\rm{John Thorpe}$^{\dag\clubsuit}$\hspace{1.2em}Yifan Qiao$^{\dag\clubsuit}$\hspace{1.2em}Jonathan Eyolfson$^{\dag}$\hspace{1.2em}Shen Teng$^{\dag}$\hspace{1.2em}Guanzhou Hu$^{\dag\ddag}$\hspace{1.2em}Zhihao Jia$^{\S}$\\[.3em]\rm{Jinliang Wei}$^{\ast}$\hspace{1.2em}Keval Vora$^{\flat}$\hspace{1.2em}Ravi Netravali$^{\sharp}$\hspace{1.2em}Miryung Kim$^{\dag}$\hspace{1.2em}Guoqing Harry Xu$^{\dag}$
\\[.3em]
UCLA$^{\dag}$\hspace{1.2em}University of Wisconsin$^{\ddag}$\hspace{1.2em}CMU$^\S$\hspace{1.2em}Google Brain$^\ast$\hspace{1.2em}Simon Fraser$^{\flat}$\hspace{1.2em}Princeton University$^{\sharp}$
%\vspace{-15em}
}

\interfootnotelinepenalty 100000
\widowpenalty 100000
\clubpenalty 100000
\newfont{\tf}{phvro at 9.5pt}
\newfont{\tft}{phvro at 7.25pt}

\begin{sloppypar}
\maketitle
\setcounter{page}{1}
\section*{Abstract}
\blfootnote{$^\clubsuit$ Contributed equally.}
A graph neural network (GNN) enables deep learning on structured graph data.
There are two major GNN training obstacles: 1) it relies on high-end servers with many GPUs which are expensive to purchase and maintain, and 2) limited memory on GPUs cannot scale to today's billion-edge graphs.
This paper presents \tool{}: a distributed system for training GNNs.
Uniquely, \tool{} can take advantage of \emph{serverless computing} to increase scalability at a low cost.

The key insight guiding our design is \emph{computation separation}.
Computation separation makes it possible to construct a \emph{deep, bounded-asynchronous} pipeline where graph and tensor parallel tasks can fully overlap, effectively hiding the network latency incurred by Lambdas.
With the help of thousands of Lambda threads, \tool scales GNN training to billion-edge graphs.
Currently, for large graphs, CPU servers offer the best performance per dollar over GPU servers.
Just using Lambdas on top of \tool{} offers up to $2.75\times$ more performance-per-dollar than CPU-only servers.
Concretely, \tool{} is $1.22\times$ faster and $4.83\times$ cheaper than GPU servers for massive sparse graphs.
\tool{} is up to $3.8\times$ faster and $10.7\times$ cheaper compared to existing sampling-based systems.
%\tool is \textbf{only 1.51$\times$} slower than training entirely with GPUs (and way faster than CPUs), and yet has a cost \textbf{2.33$\times$} lower than GPUs.

\mysection{Introduction\label{sec:intro}}

\emph{Graph Neural Networks (GNN)}~\cite{kipf-iclr17,neugraph-atc19,scarselli-tnn09,li-iclr16, hyperbolic-gnn-nips19, jia-kdd20} are a family of NNs designed for deep learning on graph structured data~\cite{gnn-review-18,wu-gnn-survey-19}. The most well-known model in this family is the graph convolutional network (GCN)~\cite{kipf-iclr17}, which uses the connectivity structure of the graph as the filter to perform neighborhood mixing. Other models include graph recursive network (GRN)~\cite{ggsnn-iclr16,graph-lstms-tacl17}, graph attention network (GAT)~\cite{residual-gnn-17, gan-iclr18, gaan-uai18}, and graph transformer network (GTN)~\cite{gtn-nips19}. Due to the prevalence of graph datasets, GNNs have gained increasing popularity across diverse domains such as drug discovery~\cite{gnn-recommendation}, chemistry~\cite{duvenaud-nips15}, program analysis~\cite{program-graphs-iclr18, balog-iclr17}, and recommendation systems~\cite{gnn-recommendation-aaai19,gnn-recommender-kdd18}. In fact, GNN is one of the most popular topics in recent AI/ML conferences~\cite{nips-keywords, iclr-keywords}.% (\eg, 111 out of 657 accepted papers in ICLR'20 are on GNNs)!
GPUs are the de facto platform to train a GNN  due to their ability to provide highly-parallel computations.  While GPUs offer great efficiency for training, they (and their host machines) are expensive to use. %\rn{can we contextualize the following number with respect to today's graphs? otherwise it feels a bit arbitrary?} 
%\jt{\sout{For example, a 2-socket NUMA machine with 4 Tesla V100 GPUs can cost up to \$300K.}} 
To train a (small) million-edge graph, recent works such as NeuGraph~\cite{neugraph-atc19} and Roc~\cite{roc-mlsys20} need at least four such machines. A public cloud offers flexible pricing options, but cloud GPU instances still incur a non-trivial cost --- the lowest-configured p3 instance type on AWS has a price of \$3.06/h; training realistic models requires dozens/hundreds of such machines to work 24/7.  While cost is not a concern for big tech firms, it can place a heavy financial burden on small businesses and organizations.
%\jt{\sout{A recent Deloitte survey~\cite{deloitte} show that 40\% of businesses reported \emph{expense} as a major factor stopping AI initiatives.}}% at their companies. }}

%which need to pay millions of dollars each year to use the GPU cloud. 
%Developing \emph{affordable AI solutions}~\cite{dawn-vision-17,gavel-osdi20, tech-republic} -- reducing the monetary costs individuals/small businesses must pay -- is paramount for making AI accessible to everyone. 

%\rn{can we move the scaling up to the previous para and just pitch it as making costs even worse, i.e., they face scaling challenges and somewhat solve them by hurting costs further. this para could then be 'existing solutions'}
%\MyPara{Training Platforms.} 
In addition to being expensive GPU's have limited memory, hindering \emph{scalability}.
For context, real-world graphs are routinely \emph{billion-edge} scale~\cite{roy-sosp15} and continue to grow~\cite{gnn-recommender-kdd18}. NeuGraph and Roc enable coordinated use of multiple GPUs to improve scalability (at higher costs), but they remain unable to handle the billion-edge graphs that are commonplace today. 
%\rn{this para reads a bit strange: the cost example is for training on a billion-edge graph, but then the second part says that even multi-GPU systems can't scale to them?}
%. Despite their efficiency, they rely on expensive multi-GPU machines. Moreover, these systems still cannot scale to large graphs (with billion edges).
Two main approaches exist for reducing the costs and improving the scalability of GNN training, but they each introduce new drawbacks:

\squishlist   
%\item \textbf{GPUs} have become the de facto platform for GNN training, primarily due to their ability to offer highly-parallel (and thus, efficient) computations. However, GPUs present two limitations. First, they are expensive to use, limiting accessibility; training a simple two-layer GCN on a billion-edge graph requires 32 P3 EC2 instances (\$0.77/hour), costing \$75.3 in total (\S\ref{sec:scalability}). Second, they possess limited memory~\cite{!!!}, hindering their ability to operate over large graphs. Recent systems such as NeuGraph~\cite{neugraph-atc19} and RoC~\cite{roc-mlsys20} enable the coordinated use of multiple GPUs to operate efficiently on larger graphs, but in doing so, drastically worsen costs.
\item CPUs face far looser memory restrictions than GPUs, and operate at significantly lower costs. However, CPUs are unable to provide the parallelism in computations that GPUs can, and thus deliver far inferior \emph{efficiency} (or speed).
    
\item Graph sampling techniques select certain vertices and sample their neighbors when gathering data~\cite{graphsage-nips17, gnn-recommender-kdd18}. Sampling techniques improve scalability by considering less graph data, and it is a generic technique that can be used on either GPU or CPU platforms. However, our experiments (\S\ref{sec:comparison}) and prior work~\cite{roc-mlsys20} highlight two limitations with graph sampling: (1) sampling must be done repeatedly per epoch, incurring time overheads and (2) sampling typically reduces \emph{accuracy} of the trained GNNs. Furthermore, although sampling-based training converges often in practice, there is no guarantee for trivial sampling methods~\cite{chen-gcnvr-pmlr18}. 
\squishend

\MyPara{Affordable, Scalable, and Accurate GNN Training.} This paper devises a \emph{low-cost} training framework for GNNs on \emph{billion-edge graphs}.
Our goal is to simultaneously deliver high efficiency (\eg, close to GPUs) and high accuracy (\eg, higher than sampling).
Scaling to billion-edge graphs is crucial for applicability to real-world use cases.
Ensuring low costs and practical performance improves the accessibility for small organizations and domain experts to make the most out of their rich graph data.

To achieve these goals, we turn to the \emph{serverless computing} paradigm, which has gained increasing traction~\cite{excamera-nsdi17,pocket-osdi18,occupy-cloud-socc17} in recent years through platforms such as AWS Lambda, Google Cloud Functions, or Azure Functions. Serverless computing provides large numbers of parallel ``cloud  function'' threads, or \textit{Lambdas}, at an extremely low price (\ie, \$0.20 for launching one million threads on AWS~\cite{lambda-price}). 
%\rn{I didn't understand what 'spotty' meant. can this say that they enable high concurrency while preserving a pay-only-for-what-you-use model (rather than for dedicated machines)?} %\yf{Describe why Lambdas are more cost efficient than CPU servers.} 
Furthermore, Lambda presents a pay-only-for-what-you-use model, which is much more appealing than dedicated servers for applications that need only massive parallelism.  
%only in short periods. \je{Short periods isn't helping you though, right?}.
%--- AWS allows users to create many thousands of threads at the same time, bursting to an extremely high degree of concurrency, but these threads get charged only  for the (short) periods they span.

%A Lambda provides only a pack of limited vCPUs and memory as its computation resources, with no extra memory, storage, NICs and other essential hardware that a CPU server has. A Lambda is charged flexibly for allocated memory and elapsed execution time (rounded up to the nearest 100ms), thus makes it a cost-efficient choice for various kinds of workload.

Although it appears that Lambdas could be used to complement CPU servers without significantly increasing costs, Lambdas were built to execute light asynchronous tasks, presenting two challenges for NN training: 

\squishlist   
\item \emph{Limited compute resources} (\eg, 2 weak vCPUs)
\item \emph{Restricted network resources} (\eg, 200 Mbps between Lambda servers and standard EC2 servers~\cite{serverless-storage-atc18})
\squishend

%\noindent \rn{this paragraph became about general DNNs. why not just stick to GNNs?}
A neural network makes heavy use of (linear algebra based) tensor kernels. A Lambda\footnote{We use ``Lambda'' in this paper because our implementation is on AWS while our idea is generally applicable to all types of serverless threads.} thread is often too weak to execute a tensor kernel on large data; breaking the data to tiny minibatches may mitigate the compute problem, but gets penalized by the high data-transfer overhead.  Consequently, using Lambdas \naively for training an NN could result in significant slowdowns (\eg, 21$\times$ slowdowns for training of multi-layer perceptron NNs~\cite{serverless-study-cidr19}, even compared to CPUs).

\MyPara{\tool.}  To overcome these weaknesses, we develop \tool,\footnote{\tool is a genus of army ants that form large marching columns.
%and can consume crop pests much bigger than themselves.
} a distributed system that uses cheap CPU servers and Lambda threads to achieve the aforementioned goals for GNN training. We next discuss how \tool leverages GNN's special computation model to overcome the two challenges associated with the use of Lambdas.% for GNN training.

The \emph{first} challenge is \emph{how to make computation fit into Lambda's weak compute profile?}  We observed: \emph{not all operations in GNN training need Lambda's parallelism}. GNN training comprises of two classes of tasks~\cite{neugraph-atc19} -- neighbor propagations (\eg, \codeIn{Gather} and \codeIn{Scatter}) over the input graph and per-vertex/edge NN operations (such as \codeIn{Apply}) over the tensor data (\eg, features and parameters). Training a GNN over a large graph is dominated by \emph{graph computation} (see \S\ref{sec:breakdown}), not \emph{tensor computation} that exhibits strong SIMD behaviors and benefits the most from massive parallelism. 
%In particular, each forward and backward pass of a GNN not only applies and computes gradients, but spends more time on propagating information over the entire data graph. % (such as \codeIn{Gather} and \codeIn{Scatter}).

Based on this observation, we divide a training pipeline into a set of \emph{fine-grained tasks} (Figure~\ref{fig:backprop}, \S\ref{sec:pipeline}) based on the type of data they process. Tasks that operate over the graph structure belong to a \emph{graph-parallel path}, executed by CPU instances, while those that process tensor data are in a \emph{tensor-parallel path}, executed by Lambdas. 
%In particular, we create \emph{nine distinct tasks} (\S\ref{sec:pipeline}): four in the forward pass and five in the backward pass. Of these tasks, four are graph-parallel tasks, four are tensor-parallel tasks, and one that updates shared parameters is executed at parameter servers. %\yf{Need to update this based on new task namings.}
Since the graph structure is taken out of tensors (\ie, it is no longer represented as a matrix), the amount of tensor data and computation can be significantly reduced, providing an opportunity for each tensor-parallel task to run a \emph{lightweight linear algebra operation} on \emph{a data chunk of a small size} --- a granularity that a Lambda is capable of executing quickly.% even with weak compute.

Note that Lambdas are a perfect fit to GNNs' tensor computations. While one could also employ regular CPU instances for compute, using such instances would incur a much higher monetary cost to provide the same level of burst parallelism (\eg, \textbf{2.2$\times$} in our experiments) since users not only pay for the compute but also other unneeded resources (\eg, storage).

%The training pipeline starts with graph-parallel tasks on CPU servers, which process data chunks and push results to tensor-parallel tasks running on the Lambda threads. \emph{Pipelining} overlaps graph and tensor computations by leveraging Lambdas' burst parallelism, and thus accelerates the training.

The \emph{second} challenge is \emph{how to minimize the negative impact of Lambda's network latency?} Our experiments show that Lambdas can spend one-third of their time on communication. To not let communication bottleneck training, \tool employs a novel parallel computation model, referred to as \emph{bounded pipeline asynchronous computation} (BPAC). BPAC makes full use of \emph{pipelining} where different fine-grained tasks overlap with each other, \eg, when graph-parallel tasks process graph data on CPUs, tensor-parallel tasks process tensor data, simultaneously, with Lambdas. Although pipelining has been used in prior work~\cite{roc-mlsys20, pipedream-sosp19}, in the setting of GNN training, pipelining would be impossible without fine-grained tasks, which are, in turn, enabled by computation separation. 

%Pipelining effectively minimizes the impact of Lambda's network latency --- as training has a significant graph computation component,  for large graphs, graph operations such as \codeIn{Gather} and \codeIn{Scatter} can always keep CPUs busy; hence, any help from Lambdas can drastically improve the overall performance. 
%\rn{doesn't this only help if the CPU's stay busy as the Lambdas compute *and* send their results back?, i.e., the goal would be to not block on the network delays, right?}

To further reduce the wait time between tasks, BPAC incorporates \emph{asynchrony} into the pipeline so that a fast task does not have to wait until a slow task finishes even if data dependencies exist between them. Although asynchronous processing has been widely used in the past, \tool faces a unique technical difficulty that no other systems have dealt with: as \tool has two computation paths, where exactly should asynchrony be introduced?
%\je{Is this enough for a third challenge?}
%\rn{didn't understand why this is a unique challenge?} \hx{because no other systems have two separate paths and asynchrony can be done on both paths.}

%\rn{why are the following the right two locations?}  \hx{}
\tool uses asynchrony in a novel way at two distinct locations where staleness can be tolerated: \emph{parameter updates} (in the tensor-parallel path) and \emph{data gathering from neighbor vertices} (in the graph-parallel path). To not let asynchrony slow down the convergence, \tool \emph{bounds} the degree of asynchrony at each location using different approaches (\S\ref{sec:bounded}): \emph{weight stashing}~\cite{pipedream-sosp19} at parameter updates and \emph{bounded staleness} at data gathering. We have formally proved the convergence of our asynchronous model in \S\ref{sec:bounded}.
%\rn{following sentence felt vague. remove?} 
%We also develop a set of Lambda-internal optimizations to further reduce the impact of latency (\S\ref{sec:lambda}). 

%\begin{wrapfigure}{r}{.42\linewidth}
%\vspace{-1em}
    
%    \includegraphics[scale=.44]{}
%    \vspace{-1.6em}
%    \caption{Three non-sampling approaches.\label{fig:tradeoff}}
%    \vspace{-1.2em}
%\end{wrapfigure}

%Figure~\ref{fig:tradeoff} illustrates a graphical comparison between training with GPUs, CPUs, and \tool. We do not include graph sampling, as it is a generic technique that can be %added on any of the three approaches. 

\MyPara{Results.} We have implemented two popular GNNs -- GCN and GAT -- on \tool and trained them over four real-world graphs: \codeIn{Friendster} (3.6B edges), \codeIn{Reddit-full} (1.3B), \codeIn{Amazon} (313.9M), and \codeIn{Reddit-small} (114.8M). With the help of 32 graph servers and thousands of Lambda threads, \tool was able to train a GCN, \emph{for the first time without sampling}, over billion-edge graphs such as \codeIn{Friendster}. 

To enable direct comparisons among different platforms, we built new GPU- and CPU-based training backends based on \tool' distributed architecture (with computation separation). Across our graphs, 
\tool's performance is \textbf{2.05$\times$} and \textbf{1.83$\times$} higher than that of GPU-only and CPU-only servers \emph{under the same monetary budget}.
%In terms of p,
%\tool is \textbf{only 1.48$\times$} slower than the GPU backend, and \textbf{1.76$\times$} faster than the CPU backend with the same amount of compute power.
Sampling is surprisingly slow --- to reach the same accuracy target, it is \textbf{2.62$\times$} slower than \tool due to its slow accuracy climbing. 
In terms of accuracy, \tool can train a model with an accuracy \textbf{1.05$\times$} higher than sampling-based techniques.
%\yf{not our main contribution? Or we have to say something about accuracy?}

\paragraph{Key Takeaway.}
Prior work has demonstrated that Lambdas can only achieve suboptimal performance for DNN training due to the limited compute resources on a Lambda and the extra overheads to transfer model parameters/gradients between Lambdas.
%%%
Through computation separation, \tool makes it possible, \emph{for the first time}, for Lambdas to provide a scalable, efficient, and low-cost distributed computing scheme for GNN training.
%\textemdash GNNs typically use very small DNN models on very large graphs; as such, we need lambdas to provide a high level of concurrency only for tensor computations that can actually benefit from massive parallelism while leaving graph computations to standard CPU servers for scalability. 
%%%

\tool is useful in two scenarios. First, for small organizations that have tight cost constraints, \tool provides an affordable solution by exploiting Lambdas at an extremely low price. Second, for those who need to train GNNs on very large graphs, \tool provides a scalable solution that supports fast and accurate GNN training on billion-edge graphs. %\tool is open sourced. 

%\rn{will we open-source Dorylus? if yes, I'd say that, especially since affordability and accessability is part of our pitch :)}

%\rn{this feels out of place. instead, can we just add a 1 sentence not on this at the part where we discuss the tensor-parallel vs. graph-parallel paths? maybe a footnote?} 
%yes. somebody suggested we do that : ) I also felt this is out of place
\mysection{Background\label{sec:background}}
%This section provides a gentle background introduction on GNN and its computation model.
A GNN takes graph-structured data as input, where each vertex is associated with a feature vector, and outputs a feature vector for each individual vertex or the whole graph. The output feature vectors can then be used by various downstream tasks, such as, graph or vertex classification. By combining the feature vectors and the graph structure, GNNs are able to learn the patterns and relationships among the data, rather than relying solely 
on the features of a single data point.

GNN training combines graph propagation (\eg, \codeIn{Gather} and \codeIn{Scatter}) and NN computations. Prior work~\cite{dgl,wang-dgl19} discovered that GNN development can be made much easier with a programming model that provides a \emph{graph-parallel} interface, which allows programmers to develop the NN with familiar graph operations. A typical example is the deep graph library (DGL)~\cite{dgl}, which unifies a variety of GNN models with a common GAS-like interface.

%A typical neural network (NN) takes as
%input a set of matrices representing trainable parameters, as well as a matrix
%that represents the data being used for training. By tying them together in
%consecutive layers and using non-linear operations on the output of each layer,
%neural networks are able to generate predictive models for a wide range of
%possible applications. 
%A GNN extends these algorithms by taking graph data as input. 
%By adding the graph structure into an NN, GNNs are able 
%to learn the patterns and relationships among the data, rather than relying solely 
%on the features of a single data point.

\MyPara{Forward Pass.} To illustrate, consider graph convolutional network (GCN) as an example. GCN is the simplest
and yet most popular model in the GNN family, with  the following forward propagation rule for the $L$-th layer~\cite{kipf-iclr17}: \\[-.8em]
%$$(R1)~~~~~~~H_{L+1}~=~ \sigma(\Tilde{D}^{-\frac{1}{2}}\Tilde{A}\Tilde{D}^{-\frac{1}{2}}H_{L}W_{L})$$ \\[-1.52em]
$$(R1)~~~~~~~H_{L+1}~=~ \sigma(\hat{A}H_{L}W_{L})$$ \\[-1.52em]
%\yf{Shall we simplify the equation here? $\hat{A}$ has been defined in the following description.}
$A$ is the adjacency matrix of the input graph, and $\Tilde{A}~=~A~+~I_N$ is the adjacency matrix with self-loops constructed by adding $A$ with $I_N$, the identity matrix. $\Tilde{D}$ is a diagonal matrix such that $\Tilde{D}_{ii}~=~\Sigma_j \Tilde{A}_{ij}$. With $\Tilde{D}$, we can construct a \emph{normalized adjacency matrix}, represented by $\hat{A} = \Tilde{D}^{-\frac{1}{2}}\Tilde{A}\Tilde{D}^{-\frac{1}{2}}$. 
$W_{L}$ is a layer-specific trainable weight matrix.  $\sigma(.)$ denotes a non-linear activation function, such as $\mathit{ReLU}$. $H_{L}$ is the \emph{activations matrix} of the $L$-th layer; $\mathit{H}_{0}$ = $X$ is the input feature matrix for all vertices. 

\begin{figure}
\begin{tabular}{ll}
\begin{minipage}[l]{0.45\linewidth}
\includegraphics[scale=0.45]{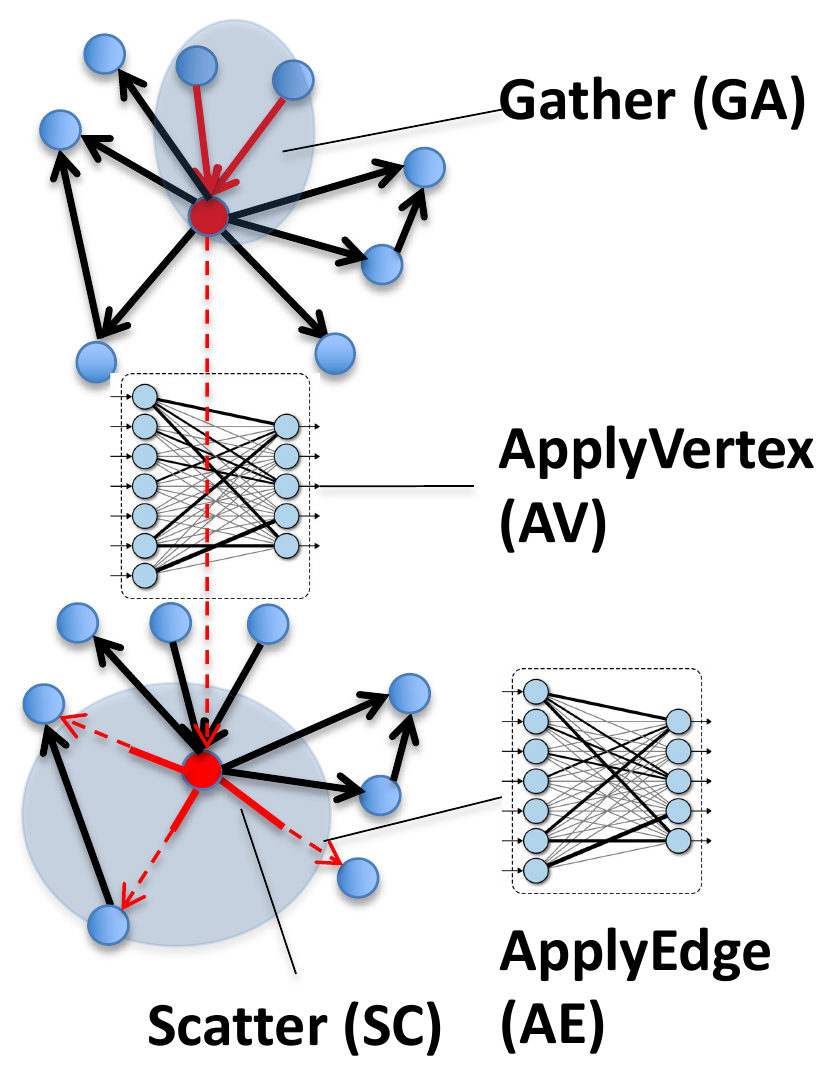}
\end{minipage}
&
\begin{minipage}[l]{0.45\linewidth}
\includegraphics[scale=0.38]{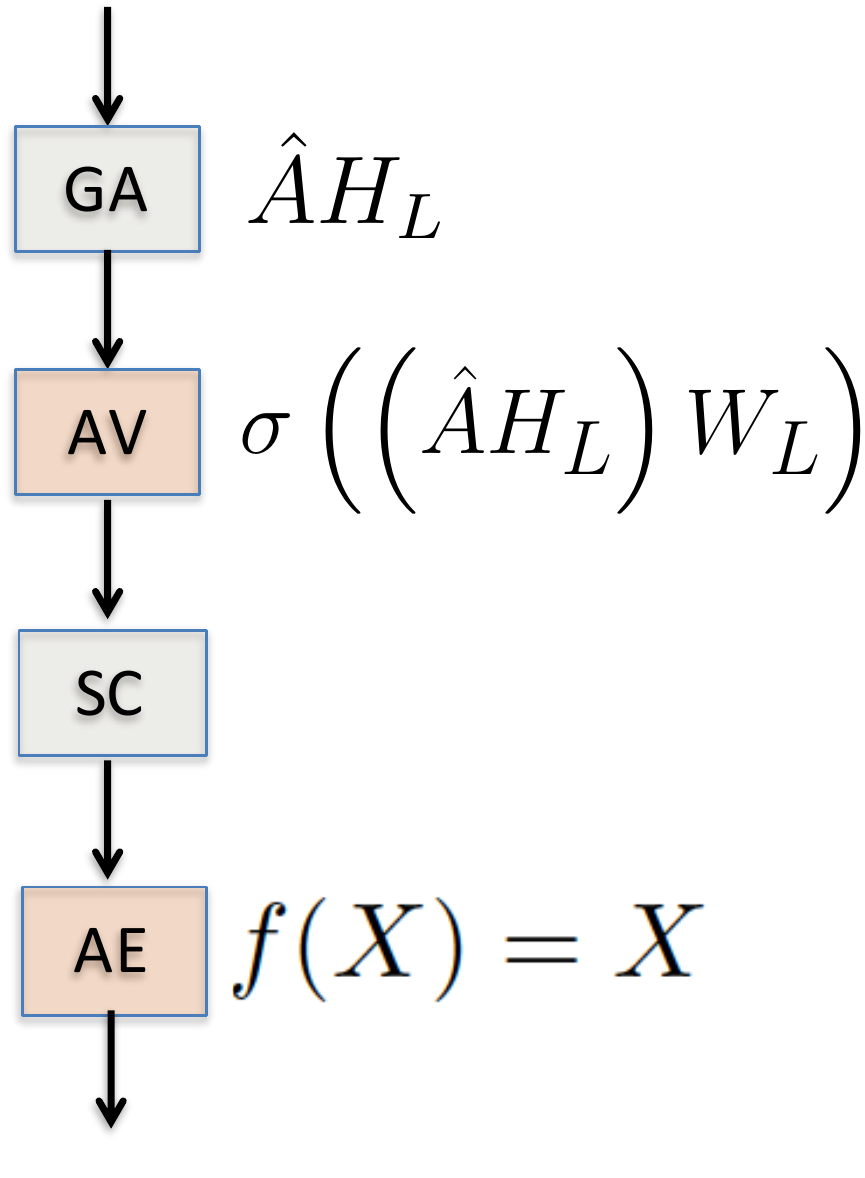}
\end{minipage}
\\%[-.4em]
(a) Computation model & (b) Annotated dataflow graph 
\end{tabular}
%\vspace{-1em}
\caption{A graphical illustration of GCN's computation model and dataflow graph in each forward layer. In (a), edges in {\color{red}red} represent those along which information is being propagated; solid edges represent standard \codeIn{Gather}/\codeIn{Scatter} operations while dashed edges represent NN operations. (b) shows a mapping between the SAGA-NN programming model and the rule R1. \label{fig:computation}}

%\miryung{could we use different background color for AV and AE which is different from GA, AG, and SC to indicate which type of computation / which type of server is responsible for?} %\hx{yes}
\vspace{-1.5em}
\end{figure}

Mapping R1 to the vertex-centric computation model is familiar to the systems community~\cite{neugraph-atc19} --- each forward layer has four components: \codeIn{Gather} (GA), \codeIn{ApplyVertex} (AV), \codeIn{Scatter} (SC), and \codeIn{ApplyEdge} (AE), as shown in Figure~\ref{fig:computation}(a). One can think of layer $L$'s activations matrix $\mathit{H}_{L}$ as a group of \emph{activations vectors}, each associated with a vertex (as analogous to \emph{vertex value} in the graph system's terminology). The goal of each forward layer is to compute a new activations vector for each vertex based on the vertex's previous activations vector (which, initially, is its feature vector) and the information received from its in-neighbors. Different from traditional graph processing, the computation of the new activations matrix $\mathit{H}_{L+1}$ is based on computationally intensive NN operations rather than a numerical function.
%\miryung{minor question, is a correct term "activations matrix" or "activation matrix"? is it "activations vector vs. activation vector?"}
%\hx{should be activations matrix/vector.}

Figure~\ref{fig:computation}(b) illustrates how these vertex-centric graph operations correspond to various components in R1. First, GA retrieves a vector from each in-edge of a vertex and aggregates these vectors into a new vector $v$. 
In essence, applying GA on all vertices can be implemented as a matrix multiplication 
% $\hat{A} = \Tilde{A}H_{L}$
$\hat{A}H_{L}$, where 
% $\Tilde{A}$ 
$\hat{A}$ is the normalized adjacency matrix and $H_{L}$ is the input activations matrix.
Second, $(\hat{A}H_{L})$ 
is fed to AV, which performs neural network operations to obtain a new activations matrix $H_{L+1}$.
For GCN, AV multiplies 
% $\hat{A}$
$(\hat{A}H_{L})$ with a trainable weight matrix $W_{L}$ and applies a non-linear activation function $\sigma$.
Third, the output of AV goes to SC, which propagates the new activations vector of each vertex along all out-edges of the vertex.
Finally, the new activations vector of each vertex goes into an edge-level NN architecture to compute an activations vector for each edge. For GCN, the edge-level NN is not needed, and hence, AE is an identity function.
We leave AE in the figure for generality as it is needed by other GNN models. 

The output of AE is fed to GA in the next layer. Repeating this process $k$ times (\ie, $k$ layers) allows the vertex to consider features of vertices $k$ hops away. Other GNNs such as GGNNs and GATs have similar computation models, but each varies the method used for aggregation and the NN. 

\MyPara{Backward Pass.} A GNN's backward pass computes the gradients for all trainable weights in the vertex- and edge-level NN architectures (\ie, AV and AE). The backward pass is performed following the chain rule of back propagation. For example, the following rule specifies how to compute the gradients in the first layer for a 2-layer GCN: ~~~~~~~~\\[-.2em]
%$$\frac{\partial L}{\partial W_0} ~=~ (\hat{A}\cdot X)^T\cdot[[\frac{\partial}{\partial \mathit{in}_1} \mathit{ReLU}(\mathit{in}_1)] \odot \hat{A}^T\cdot(Z-Y)\cdot W_1^T] $$ \\[-0.8em]
$$(R2)~~~~~ \nabla_{W_{0}} \mathcal{L} = \left(\hat{A}X\right)^{T}\left[\sigma'\left(\mathit{in}_1\right)\odot\hat{A}^T(Z-Y) W_1^T\right]$$\\[-0.8em]
Here $Z$
is the output of the GCN, $Y$ is the label matrix (\ie, ground truth), $X$ is the input features matrix, $W_i$ is the weight matrix for layer $i$, and $\mathit{in}_1$ = $\hat{A}XW_0$. $\hat{A}^T$ and $W_i^T$ are the transpose of $\hat{A}$ and $W_i$, respectively.
%\tool maps this computation also to a sequence of NN and graph operations, as discussed in \S\ref{sec:design}. 
%\yf{I changed the notations in forward pass and backward pass because $\hat{A}$ is ambiguous before. I have marked my changes and maybe John can help double check.}

A training \emph{epoch} consists of a forward and a backward pass, followed by {\em weights update}, which uses the gradients computed in the backward pass to update the trainable weights in the vertex- and edge-level NN architectures in a GNN.
The training process runs epochs repeatedly until reaching acceptable accuracy.

\mysection{Design Overview\label{sec:design}}

This section provides an overview of the \tool architecture. The next three sections discuss technical details including how to split training into fine-grained tasks and connect them in a deep pipeline (\S\ref{sec:pipeline}), and how \tool bounds the degree of asynchrony (\S\ref{sec:bounded}), manages and autotunes Lambdas (\S\ref{sec:lambda}). 

Figure~\ref{fig:architecture} depicts \tool's architecture, which is comprised of three major components: EC2 graph servers, Lambda threads for tensor computation, and EC2 parameter servers. An input graph is first partitioned using an edge-cut algorithm~\cite{zhu-osdi16} that takes care of load balancing across partitions. Each partition is hosted by a graph server (GS). 

%Unlike traditional graph processing where the number of partitions is often specified by the user, \tool employs an autotuner (\S\ref{sec:autotuning}) that dynamically adjusts the number of Lambdas in response to how saturated the pipeline is. Weight data, which are to be learned during training, are hosted on \emph{parameter servers} (PSes). 
GSes communicate with each other to execute graph computations by sending/receiving data along cross-partition edges. GSes also communicate with Lambda threads to execute tensor computations. Graph computation is done in a conventional way, breaking a vertex program into vertex-parallel (\eg, \codeIn{Gather}) and edge-parallel stages (\eg, \codeIn{Scatter}). 

Each vertex carries a vector of float values and each edge carries a value of a user-defined type specific to the model. For example, for a GCN, edges do not carry values and \codeIn{ApplyEdge} is an identity function; for a GGNN, each edge has an integer-represented type, with different weights for different edge types. After partitioning, each GS hosts a graph partition where vertex data are represented as a two-dimension array and edge data are represented as a single array. Edges are stored in the \emph{compressed sparse rows} (CSR) format; inverse edges are also maintained for the backpropagation. 

Each GS maintains a \emph{ghost buffer}, storing data that are scattered in from remote servers. Communication between GSes is needed only during \codeIn{Scatter} in both (1) forward pass where activation values are propagated along cross-partition edges and (2) backward pass where gradients are propagated along the same edges in the reverse direction.
% (1) \codeIn{Scatter} where values are propagated along cross-partition edges and (2) \codeIn{BackScatter} (discussed shortly) where values are propagated along the same edges in the reverse direction. 
%Each GS buffers data to be sent. The output buffers (not shown in Figure~\ref{fig:architecture}) are segregated by remote servers --- data to be sent to the same remote server are placed in the same output buffer and sent together when the buffer is full. On the receiver side, all cross-partition data are streamed into the receiver's ghost buffer for use in subsequent graph operations.  

Tensor operations such as AV and AE, performed by Lambdas, interleave with graph operations. Once a graph operation finishes, it passes data to a Lambda thread, which employs a high-performance linear algebra kernel for tensor computation. Both the forward and backward passes use Lambdas, which communicate frequently with parameter servers (PS) --- the forward-pass Lambdas retrieve weights from PSes to compute layer outputs, while the backward-pass Lambdas compute updated weights. 

%\keval{Overview appears incomplete. It doesn't talk about the bottom-half of Figure 2 (Lamda Threads and Parameter Servers).}

\begin{figure}[ht]
\begin{adjustbox}{max width=\linewidth}
    \centering
    \includegraphics[scale=.5]{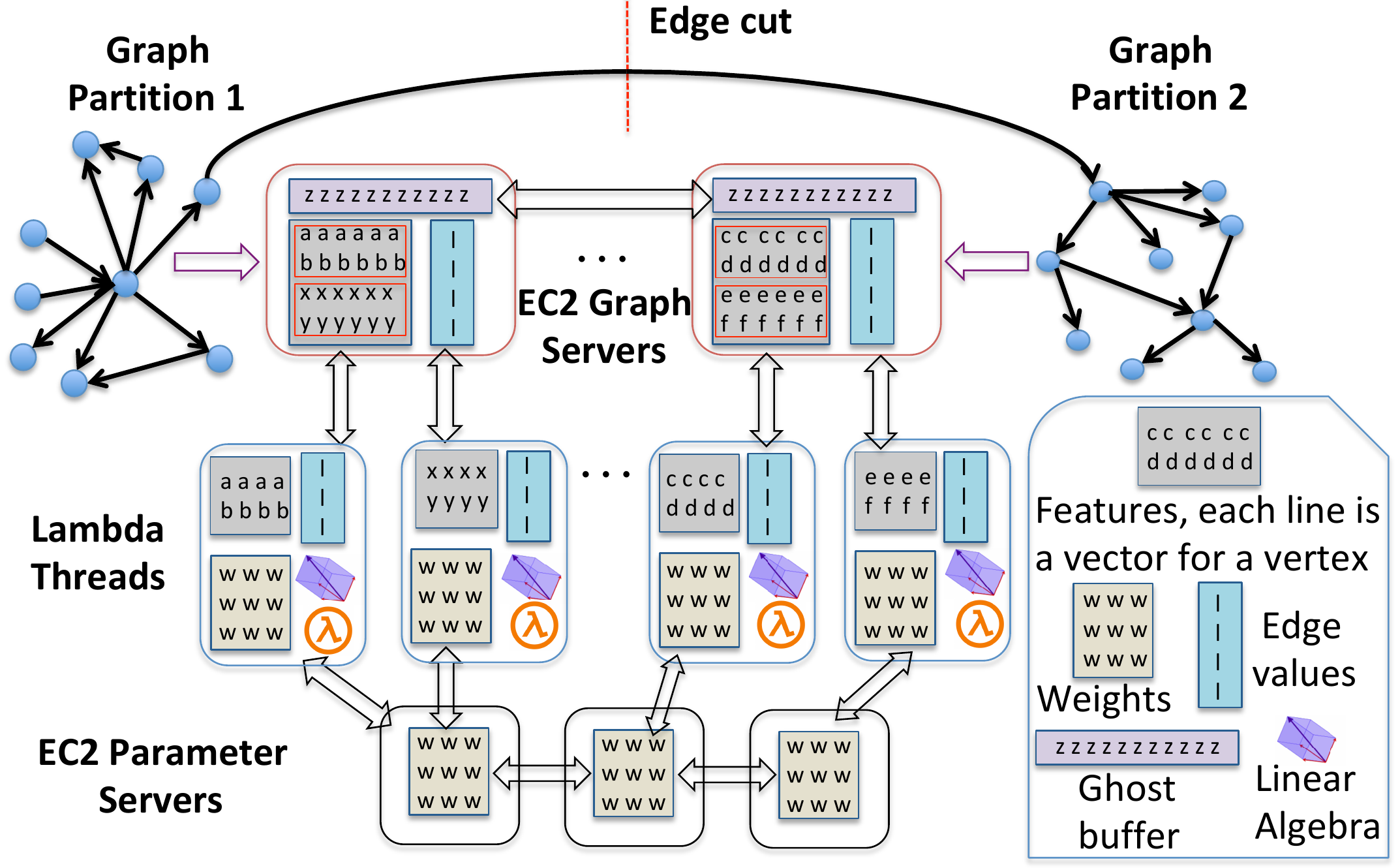}
\end{adjustbox}
    %  \vspace{-2.2em}
    \caption{\tool's architecture. \label{fig:architecture}}
    %\vspace{-1em}
\end{figure}

\mysection{Tasks and Pipelining\label{sec:pipeline}}
\MyPara{Fine-grained Tasks.} As discussed in \S\ref{sec:intro}, the first challenge in using Lambdas for training is to decompose the process into a set of fine-grained tasks that can (1) overlap with each other and (2) be processed by Lambdas' weak compute. In \tool, task decomposition is done based on both \emph{data type} and \emph{computation type}. In general, computations that involve the adjacency matrix of the input graph (\ie, any computation that multiplies any form of the adjacency matrix $A$ with other matrices) are formulated as graph operations performed on GSes, while computations that involve only tensor data can benefit the most from massive parallelism 
%\jt{Just curious if "massive parallelism" or some other word makes more sense here since our emphasis is that Lambdas have weak compute} 
and hence run in Lambdas. Next, we discuss specific tasks over each training epoch, which consists of a \emph{forward pass} that computes the output using current weights, followed by a \emph{backward pass} that uses a loss function to compute weight updates.

\textbf{A forward pass} can be naturally divided into four tasks, as shown in Figure~\ref{fig:computation}(a). \codeIn{Gather} (GA) and \codeIn{Scatter} (SC) perform computation over the graph structure; they are thus graph-parallel tasks for execution on GSes.
%Since GA and AG can often be executed together, we use GA+AG to refer to/eachcombined task.
\codeIn{ApplyVertex} (AV) and \codeIn{ApplyEdge} (AE) multiply matrices involving only features and weights and apply activation functions such as $\mathit{ReLU}$. Hence, they are executed by Lambdas.  

For AV, Lambda threads retrieve vertex data ($\mathit{H}_{L}$ in \S\ref{sec:background}) from GSes and weight data ($\mathit{W}_{L}$) from PSes, compute their product, apply $\mathit{ReLU}$, and send the result back to GSes as the input for \codeIn{Scatter}. When AV returns, SC 
sends data, along cross-partition edges, to the machines that host their destination vertices. 
%It does not perform any operation for edges whose source and destination vertices are both local.  
%\keval{This appears like an extra communication step to go from Lamdas to original graph partitions to the other partitions (based on scatter destinations). The separation of concerns in the design necessitates this extra step. Perhaps we can add a discussion in section 4 about how the model can be optimized by passing the routing information (of scatter phase) to Lamdas so that they directly pass the data to appropriate servers.}\hx{Added a paragraph at the end of Section 4}

AE immediately follows SC. To execute AE on an edge, each Lambda thread retrieves (1) vertex data from the source and destination vertices of the edge (\ie, activations vectors), and (2) edge data (such as edge weights) from GSes. It computes a per-edge update by performing model-specific tensor operations. 
%For example, in the case of GAT, AE multiplies the weight matrix with each of the activation vectors of the edge's source and destination vertices, adds the resulting vectors, and applies \codeIn{sigmoid}.
These updates are streamed back to GSes and become the inputs of the next layer's GA task.  

\begin{figure}[htb]
%  \vspace{-1em}
    \centering
    \includegraphics[scale=.4]{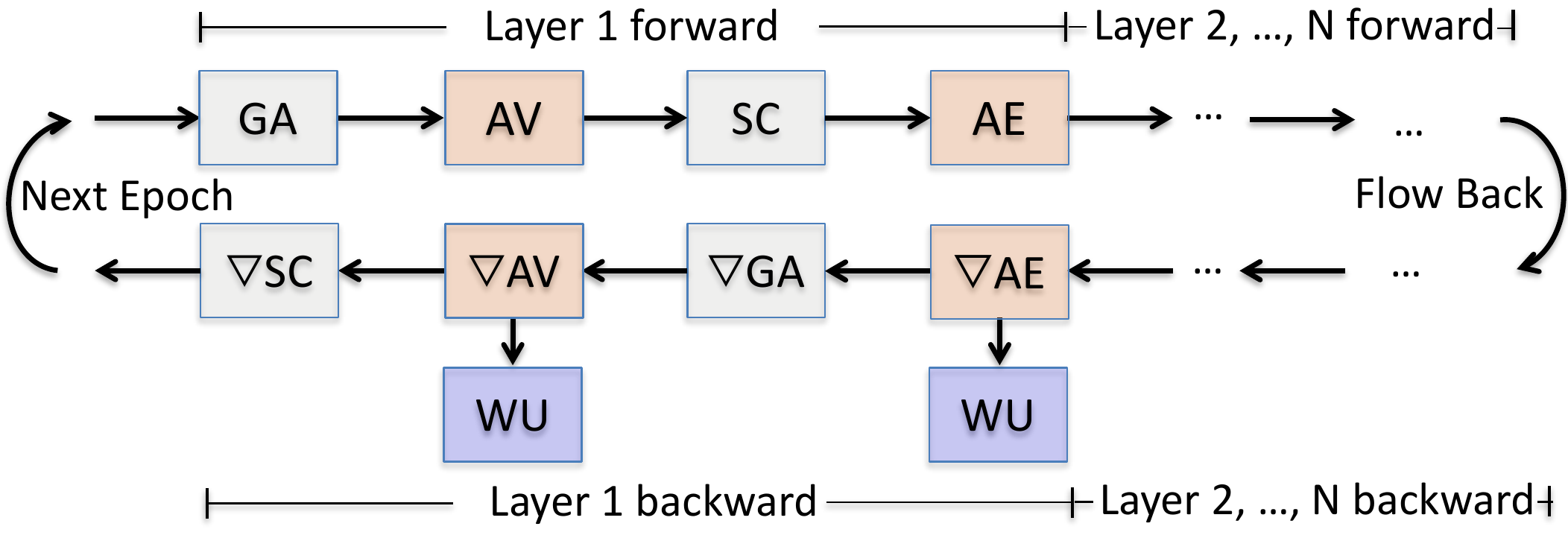}
      \vspace{-2em}
    \caption{\tool's forward and backward dataflow with nine tasks: \codeIn{Gather} (GA) and \codeIn{Scatter} (SC) and their corresponding backward tasks $\triangledown$GA and $\triangledown$SC; \codeIn{ApplyVertex} (AV), \codeIn{ApplyEdge} (AE), and their backward tasks $\triangledown$AV and $\triangledown$AE; the weight update task \codeIn{WeightUpdate} (WU). % $\triangledown$AE and $\triangledown$AV are Lambda tasks, GA and SC is executed on GSes, while WU is performed by PSes.
    %\miryung{could we also color the boxes differently to indicate which ones are done by GS, Lambdas, and 
    %PS?} %\hx{added}
  %  \je{Remove this figure, fix figure 4.}
    \label{fig:backprop}}
    \vspace{0em}
\end{figure}

\textbf{A backward pass} involves GSes, Lambdas, and PSes that \emph{coordinate} to run a graph-augmented SGD algorithm, as specified by R2 in \S\ref{sec:background}. For each task in the forward pass, there is a corresponding backward task that either propagates information in the \emph{reverse direction of edges} on the graph or computes the gradients of its trainable weights with respect to a given loss function. Additionally, a backward pass includes \codeIn{WeightUpdate} (WU), which aggregates the gradients across PSes. Figure~\ref{fig:backprop} shows their dataflow.  $\triangledown$GA and $\triangledown$SC are the same as GA and SC except that they propagate information in the reverse direction.  $\triangledown$AE and $\triangledown$AV are the backward tasks for AE and AV, respectively. AE and AV apply weights to compute the output of the edge and vertex NN. Conversely, $\triangledown$AE and $\triangledown$AV compute weight updates for the NNs, which are the inputs to WU.

$\triangledown$AE and $\triangledown$AV perform tensor-only computation and are executed by Lambdas. Similar to the forward pass, GA and SC in the backward pass are executed on GSes. WU performs weights updates and is conducted by PSes.
% \textbf{The backward pass} involves GSes, Lambdas, and PSes that \emph{coordinate} to implement a graph-augmented SGD algorithm, as specified by R2 in \S\ref{sec:background}. In particular, given $Z - Y$ (\ie, the delta between the output $Z$ of the GNN  and the label matrix $Y$) as the input, \tool divides the backward pass into four fine-grained tasks: \emph{error propagation} (EP), \emph{backward scatter} (BS), \emph{gradient calculation} (GC), as well as \emph{weight update} (WU). Figure~\ref{fig:backprop} shows the dataflow of these tasks and how the computation performed by each task maps to components of Rule 2. In particular, EP and GC perform tensor-only computation and are thus executed by Lambdas. BS, on the contrary, involves the normalized adjacency matrix ($\hat{A}$). It is essentially a backward scatter operation (\ie, propagates information in the reverse direction of edges), and, hence, executed on GSes. WU performs weight updates and is thus conducted by PSes. %Parameter servers synchronize, periodically, among themselves to keep their weights up to date. 

\MyPara{Pipelining.} In the beginning, vertex and weight data take their initial values (\ie, $H_{0}$ and $W_{0}$), which will change as the training progresses. GSes kick off training by running parallel graph tasks. To establish a full pipeline, \tool divides vertices in each partition into intervals (\ie, minibatches).
%\miryung{are the terms that we use as minibatch and epoch consistent with how minibatch and epoch are used in minibatch gradient descent? double checking?} \hx{Yes!}
For each interval, the amount of tensor computation (done by a Lambda) depends on both the numbers of vertices (\ie, AV) and edges (\ie, AE) in the interval, while the amount of graph computation (on a GS) depends primarily on the number of edges (\ie, GA, and SC).  %\keval{This doesn't appear correct. Why intra-interval edges don't lead to computation?}. 
To balance work across intervals, our division uses a simple algorithm to ensure that different intervals have the same numbers of vertices and vertices in each interval have similar numbers of inter-interval edges. These edges incur cross-minibatch dependencies that our asynchronous pipeline needs to handle (see \S\ref{sec:bounded}). 

Each interval is processed by a task. When the pipeline is saturated, different tasks will be executed on distinct \emph{intervals} of vertices. Each GS maintains a \emph{task queue} and enqueues a task once it is ready to execute (\ie, its input is available). To fully utilize CPU resources, the GS uses a thread pool where the number of threads equals the number of vCPUs. When the pool has an available thread, the thread retrieves a task from the task queue and executes it. The output of a GS task is fed to a Lambda for tensor computation. 

\begin{figure}[ht]
 %\vspace{-1em}
\begin{adjustbox}{max width=\linewidth}
    \centering
    \includegraphics[scale=.5]{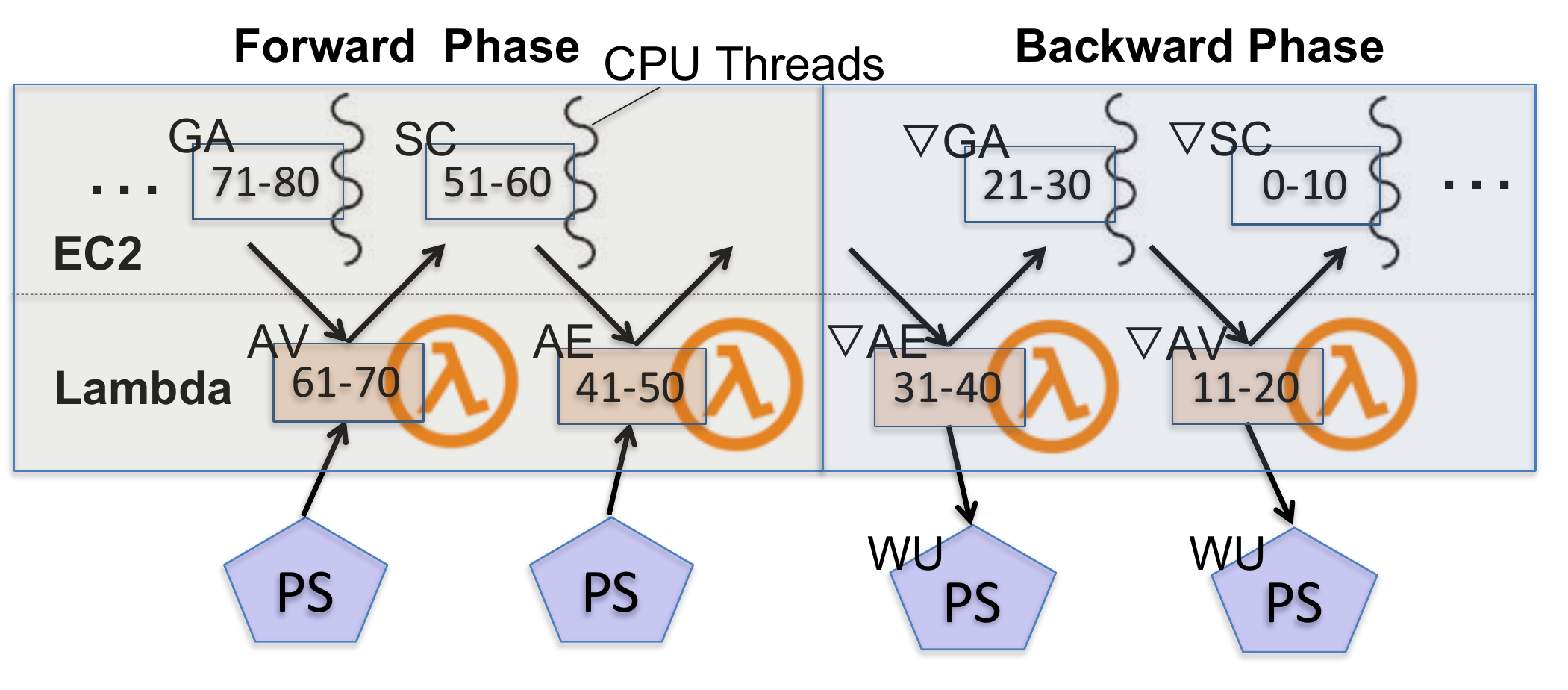}
\end{adjustbox}
     % \vspace{-2.6em}
    \caption{A \tool pipeline for an epoch: the number range (\eg, 71-80) in each box represents a particular vertex interval (\ie, minibatch); different intervals are at different locations of the pipeline and processed by different processing units: GS, Lambda, or PS.\label{fig:pipeline}}
%    \yf{John points out the time arrow on the top doesn't make sense since this is actually a snapshot. I also switch the order of operations to be consistent with Fig.~\ref{fig:backprop}.}
    %\hx{hmm, are we trying to update simultaneously? : )}
    %\miryung{ what is Z-Y in the middle line? What do we mean by 70-80 or 50-60, etc? The idea of pipelining is hard to see in the above diagram. Shall we add an explicit time line (horizontal line) typically drawn in what we see in computer architecture textbook?}
  %  \vspace{-1em}
\end{figure}

Figure~\ref{fig:pipeline} shows a typical training pipeline under \tool. Initially, \tool enqueues a set of GA tasks, each processing a vertex interval. Since the number of threads on each GS is often much smaller than the number of tasks, some tasks finish earlier than others and their results are pushed immediately to Lambda threads for AV. Once they are done, their outputs are sent back to the GS for \codeIn{Scatter}. During a backward phase, both $\triangledown$AE and $\triangledown$AV compute gradients and send them to PSes for weight updates. % \yf{I feel this is inaccurate. As Fig. 3 shows, WU is performed in each layer during the backward pass.}

%\MyPara{Lambda's Profitability for GNNs.}
%\je{Introduce $G$ and $T$. Clean up.}
%To better understand why Lambdas can bring performance benefit to GNN training, consider a training pipeline in Figure~\ref{fig:tensor}, which compares, graphically, between training a GNN with pure CPU instances and \tool with CPUs + Lambdas.
%Assume that in both settings the graph computation $G$ \emph{saturates} CPU cores. In Figure~\ref{fig:tensor}(a) where the tensor computation $T$ also runs on CPUs, $T$ has to execute ``sequentially'' with $G$ (because CPU cores are often fully utilized to execute $G$). In Figure~\ref{fig:tensor}(b), the tensor computation $T^\prime$ runs in Lambdas and hence overlaps with $G$. Due to weak compute and communication overheads, $T^\prime$ is often larger than $T$ and $G$, but smaller than $G$ + $T$, which is the time taken by the CPU backend to train the model. 

%\begin{figure}[!ht]
%\vspace{-1em}
%\centering 
%\includegraphics[scale=.35]{}
%\vspace{-2em}
%%\caption{A graphical comparison between training using CPUs only and using CPUs+Lambdas, in a pipeline where the graph computation $G$ saturates CPU cores. The tensor computation $T$ executes ``sequentially'' with $G$ in (a); in (b) tensor computation $T^\prime$ runs in Lambdas and hence overlaps with $G$. \label{fig:tensor}}
%\vspace{-1.2em}
%\end{figure}

Through effective pipelining, \tool overlaps the graph-parallel and tensor-parallel computations so as to hide Lambdas' communication latency.   
Note that although pipelining is not a new idea, enabling pipelining in GNN training requires fine-grained tasks and the insight of computation separation, which are our unique contributions.

\mysection{Bounded Asynchrony \label{sec:bounded}} 

To unleash the full power of pipelining, \tool performs a unique form of \emph{bounded asynchronous} training so that workers do not need to wait for updates to proceed in most cases. This is paramount for the pipeline's performance especially because Lambdas run in an extremely dynamic environment and stragglers almost always exist. On the other hand, a great deal of evidence~\cite{adam,pipedream-sosp19,poseidon-atc17} shows that asynchrony slows down convergence --- fast-progressing minibatches may use out-of-date weights, prolonging the training time.

Bounded staleness~\cite{cui-atc14, hogwild-nips11} is an effective technique for mitigating the convergence problem by employing lightweight synchronization. However, \tool faces a unique challenge that does not exist in any existing system, that is, there are \emph{two synchronization points} in a \tool pipeline: (1) weight synchronization at each WU task and (2) synchronization of (vertex) activations data from neighbors at each GA. 
%This subsection discusses how asynchrony is handled at these two locations.

\mysubsection{Bounded Asynchrony at Weight Updates}  
To bound the degree of asynchrony for weight updates, we use \emph{weight stashing} proposed in PipeDream~\cite{pipedream-sosp19}. A major reason for slow convergence is that, under full asynchrony, different vertex intervals are at their own training pace; some intervals may use a particular version $v_0$ of weights during a forward pass to compute gradients while applying these gradients on another version $v_1$ of weights on their way back in the backward pass. 
%The differences between $v_0$ and $v_1$ are due to another update tfbethat occurs in the middle. 
In this case, the weights on which gradients are computed are not those on which they are applied, leading to \emph{statistical inefficiency}. Weight stashing is a simple technique that allows any interval to use the latest version of weights available in a forward pass and stashes (\ie, caches) this version for use during the corresponding backward pass. 
%This guarantees that gradient computation and application is statistically meaningful.
%\miryung{I like the above text but is this related work about the async weight update for DNN or does their result generalize to GNN? If not, should we emphasize here as our contribution to empirically demonstrate weight convergence for GNN through async pipelining?} 
%\hx{Our contribution is not weight stashing, but how to implement it in our setting --- in particular, the paragraph below}

Although weight stashing is not new, applying it in \tool poses unique challenges in the PS design. Weight stashing occurs at PSes, which host weight matrices and perform updates.
%\jt{Maybe I am missing something but this "each per layer" seems a little unclear to me. Is this saying that all weight matrices are replicated across all PSes?} 
To balance loads and bandwidth usage, \tool supports multiple PSes and always directs a Lambda to a PS that has the lightest load. Since Lambdas can be in different stages of an epoch (\eg, the forward and backward passes, and different layers), \tool lets each PS host a replication of weight matrices of \emph{all layers}, making load balancing much easier to do since any Lambda can use any PS in any stage. Note that this design is different from that of traditional PSes~\cite{parameterserver}, each of which hosts parameters for a layer. Since a GNN often has very few layers, replicating all weights would not take much memory and is thus feasible to do at each PS. Clearly, this approach does \emph{not} work for regular DNNs with many layers. 

However, weight stashing significantly complicates this design. A vertex interval can be processed by different Lambdas when it flows to different tasks --- \eg, its AV and AE are executed by different Labmdas, which can retrieve weights from different PSes. Hence, if we allow any Lambda to use any PS, each PS has to maintain not only the latest weight matrices but also their stashed versions for \emph{all} intervals in the graph; this is practically impossible due to its prohibitively high memory requirement. 

To overcome this challenge, we do \emph{not} replicate all weight stashes across PSes. Instead, each PS still contains a replication of all the latest weights but weight stashes only for a \emph{subset} of vertex intervals. For each interval in a given epoch, the interval's weight stashes are only maintained on the first PS it interacts with in the epoch. 
In particular, once a Lambda is launched for an AV task, which is the first task that uses weights in the pipeline, its launching GS picks the PS with the lightest load and notifies the Lambda of its address. Furthermore, the GS remembers this choice for the interval --- when this interval flows to subsequent tensor tasks (\ie, AE, $\triangledown$AV, $\triangledown$AE, and WU), the GS assigns the same PS to their executing Lambdas because the stashed version for this interval only exists on that particular PS in this epoch.  

PSes periodically broadcast their latest weight matrices.
%PSes periodically synchronize with each other to update weights. However, we only broadcast the \emph{latest} version of weights on each server, \emph{not} weight stashes, to reduce memory usage. 

\mysubsection{Bounded Asynchrony at \codeIn{Gather}} 
Asynchronous \codeIn{Gather} allows vertex intervals to progress independently using stale vertex values (\ie, activations vectors) from their neighbors without waiting for their updates. Although asynchrony has been used in a number of graph systems~\cite{aspire,cui-atc14}, these systems perform iterative processing with the assumption that with more iterations they will eventually reach convergence. Different from these systems, the number of layers in a GNN is determined statically and an $n$-layer GNN aims to propagate the impact of a vertex's $n$-hop neighborhood to the vertex. Since the number of layers cannot change during training, an important question that needs be answered is: can asynchrony change the semantics of the GNN? This boils down to two sub-questions: 
(1) Can vertices \emph{eventually} receive the effect of their $n$-hop neighborhood?
(2) Is it possible for any vertex to receive the effect of its $m$-hop neighbor where $m > n$ after many epochs? 
% (1) if we allow stale activations data to be used, can a vertex \emph{eventually} receive the effect of its $n$-hop neighbors? 
We provide informal correctness/convergence arguments in this subsection and turn to a formal approach in \S\ref{sec:proof}.

\underline{The answer to the first question is \emph{yes}}. This is because the GNN computation is driven by the accuracy of the computed weights, which is, in turn, based on the effects of $n$-hop neighborhoods. To illustrate, consider a simple 2-layer GNN and a vertex $v$ that moves faster in the pipeline than all its neighbors. Assume that by the time $v$ enters the GA of the second layer, none of its neighbors have finished their first-layer SC yet. In this case, the GA task of $v$ uses stale values from its neighbors (\ie, the same as what were used in the previous epoch). This would clearly generate large errors at the end of the epoch. However, in subsequent epochs, the slow-moving neighbors update their values, which are gradually propagated to $v$. 
%\miryung{we may want to make a forward reference to the evaluation section to be filled to indicate that this is empirically found observation?} 
%\hx{This is guaranteed..}
Hence, the vertex eventually receives the effects of its $n$-hop neighborhood (collectively) across different epochs depending on its neighbors' progress. After each vertex observes the required values from the $n$-hop neighborhood, the target accuracy is reached. 

\underline{The answer to the second question is \emph{no}}. This is because the number of layers determines the \emph{farthest distance} the impact of a vertex can travel despite that training may execute many epochs. When a vertex interval finishes an epoch, it comes back to the initial state where their values are \emph{reset} to their initial feature vectors (\ie, the accumulative effect is cleared). Hence, even though a vertex $v$ progresses asynchronously relative to its neighbors, the neighbors' activation vectors are scattered out in the previous SC and carry the effects of their at most \{$n{-}1$\}-hop neighbors (after which the next GA will cycle back to effects of 1-hop neighbors), which, for vertex $v$, belong strictly in its $n$-hop neighborhood. This means, it is impossible for any vertex to receive the impact of any other vertex that is more than $n$-hops away.

We use \emph{bounded staleness} at \codeIn{Gather} --- a fast-moving vertex interval is allowed to be at most S epochs away from the slowest-moving interval. This means vertices in a given epoch are allowed to use stale vectors from their neighbors only if these vectors are within $S$ epochs away from the current epoch. Bounded staleness allows fast-moving intervals to make quick progress when recent updates are available (for efficiency), but makes them wait when updates are too stale (to avoid launching Lambdas for useless computation). 
%\yf{Could we use $S$ in this paragraph to align with the notations later?}

\mysubsection{Convergence Guarantee\label{sec:proof}}
%\yf{Convergence Guarantee might be a better subsection title?}
Asynchronous weight updates with bounded staleness has been well studied, and its convergence has been proved by \cite{qirong-ssp-nips13}. The convergence of asynchronous \codeIn{Gather} with bounded staleness $S$ is guaranteed by the following theorem:
\begin{theorem}
\label{thm:async-gnn}
Suppose that (1) the activation $\sigma(\cdot)$ is $\rho$-Lipschitz, 
(2) the gradient of the cost function $\nabla_{z}f(y,z)$ is $\rho$-Lipschitz and bounded, 
(3) the gradients for weight updates $\left\| g_{AS}(W) \right\|_\infty$, $\left\| g(W)\right\|_\infty$, and $\left\| \nabla \mathcal{L}(W) \right\|_\infty$ are all bounded by some constant $G > 0$ for all $\hat{A}$, $X$, and $W$, 
(4) the loss $\mathcal{L}(W)$ is $\rho$-smooth\footnote{$\mathcal{L}$ is \textit{$\rho$-Lipschitz smooth} if $\forall W_{1}, W_{2}, |\mathcal{L}(W_2) - \mathcal{L}(W_1) - \left\langle\nabla \mathcal{L}\left(W_{1}\right), W_{2}-W_{1}\right\rangle| \leq \frac{\rho}{2}\left\|W_{2}-W_{1}\right\|_{F}^{2}$, where $\left\langle A, B\right\rangle = tr(A^{T}B)$ is the inner product of matrix $A$ and matrix $B$, and $\left\|A\right\|_{F}$ is the Frobenius norm of matrix $A$.}.  
Then given the local minimizer $W^{*}$, there exists a constant $K > 0$, s.t., $\forall N > L\times S$ where $L$ is the number of layers of the GNN model and $S$ is the staleness bound; if we train a GNN with asynchronous \codeIn{Gather} under a bounded staleness for $R \leq N$ iterations where $R$ is chosen uniformly from $[1, N]$, we will have
\begin{equation*}
\mathbb{E}_{R}
\left\|\nabla \mathcal{L}\left(W_{R}\right)\right\|_{F}^{2} 
\leq 2 \frac{\mathcal{L}\left(W_{1}\right)-\mathcal{L}\left(W^{*}\right)+K+\rho K}{\sqrt{N}},
\end{equation*}
for the updates $W_{i+1} = W_i - \gamma g_{AS}(W_i)$ and the step size $\gamma = min\left\{ \frac{1}{\rho}, \frac{1}{\sqrt{N}}\right\}$.
\end{theorem}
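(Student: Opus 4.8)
The plan is to run the standard nonconvex-SGD descent argument, treating asynchronous \codeIn{Gather} as producing a biased but bounded gradient surrogate $g_{AS}(W)$ and absorbing all staleness- and sampling-induced error into the single constant $K$. First I would use assumption~(4), $\rho$-smoothness, as a descent lemma: plugging the update $W_{i+1}=W_i-\gamma g_{AS}(W_i)$ into the smoothness inequality gives
\begin{equation*}
\mathcal{L}(W_{i+1}) \le \mathcal{L}(W_i) - \gamma\langle\nabla\mathcal{L}(W_i), g_{AS}(W_i)\rangle + \tfrac{\rho\gamma^2}{2}\|g_{AS}(W_i)\|_F^2 .
\end{equation*}
The quadratic term is immediately tamed by the bounded-gradient assumption~(3), since $\|g_{AS}(W_i)\|_F \le \sqrt{d}\,\|g_{AS}(W_i)\|_\infty \le \sqrt{d}\,G$ for the parameter dimension $d$; summed over $N$ steps with $\gamma=1/\sqrt{N}$ it contributes an $O(\rho)$ term, which will become the $\rho K$ in the bound.

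Next I would split the linear term via $g_{AS}(W_i) = \nabla\mathcal{L}(W_i) + \xi_i + \delta_i$, where $\xi_i := g(W_i)-\nabla\mathcal{L}(W_i)$ is the zero-mean minibatch-sampling noise and $\delta_i := g_{AS}(W_i)-g(W_i)$ is the deterministic staleness bias. The $\nabla\mathcal{L}(W_i)$ piece yields the desired $-\gamma\|\nabla\mathcal{L}(W_i)\|_F^2$ descent; $\mathbb{E}[\langle\nabla\mathcal{L}(W_i),\xi_i\rangle]=0$ by the usual conditioning on the history. The crux is the staleness cross-term $\langle\nabla\mathcal{L}(W_i),\delta_i\rangle$, which by Cauchy--Schwarz and assumption~(3) is at most $\sqrt{d}\,G\,\|\delta_i\|_F$, so everything reduces to bounding $\|\delta_i\|_F$.

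The main obstacle is exactly this staleness bound, and it is where the GNN structure and the bounded-staleness invariant earn their keep. The only difference between $g_{AS}$ and $g$ is that each \codeIn{Gather} consumes neighbor activation vectors that are at most $S$ epochs, hence at most $L\times S$ parameter updates, out of date. I would therefore bound the per-step weight drift by $\|W_{i+1}-W_i\|_F = \gamma\|g_{AS}(W_i)\|_F \le \gamma\sqrt{d}\,G$, so that the total drift accumulated within any staleness window is at most $\gamma\sqrt{d}\,G\,L\,S$. Propagating this weight discrepancy forward through the $L$ layers using $\rho$-Lipschitzness of $\sigma$ (assumption~(1)) and of $\nabla_z f$ (assumption~(2)), each layer amplifies it by a bounded factor, yielding $\|\delta_i\|_F \le C\gamma$ for a constant $C=C(\rho,G,L,S,d)$. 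The essential point --- and the reason the accumulation is linear rather than exponential in the epoch count --- is the second informal argument of \S\ref{sec:proof}: activations are reset at each epoch, so a stale value can carry effects of at most an $\{n{-}1\}$-hop neighborhood and the discrepancy cannot compound across epochs. The hypothesis $N > L\times S$ guarantees the staleness window is dominated by the optimization horizon, so this $O(\gamma)$ bias is genuinely lower-order.

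Finally I would take expectations, sum the descent inequality over $i=1,\dots,N$, and telescope, using $\mathcal{L}(W_{N+1})\ge\mathcal{L}(W^*)$ to arrive at
\begin{equation*}
\gamma\sum_{i=1}^{N}\mathbb{E}\|\nabla\mathcal{L}(W_i)\|_F^2 \le \mathcal{L}(W_1)-\mathcal{L}(W^*) + K + \rho K ,
\end{equation*}
where the accumulated staleness cross-terms (each $O(\gamma^2)$, summing to $O(N\gamma^2)=O(1)$) collect into $K$ and the accumulated second-moment terms into $\rho K$. Dividing by $\gamma N$, identifying $\tfrac1N\sum_{i}\mathbb{E}\|\nabla\mathcal{L}(W_i)\|_F^2$ with $\mathbb{E}_R\|\nabla\mathcal{L}(W_R)\|_F^2$ for $R$ drawn uniformly from $[1,N]$, and substituting $\gamma=\min\{1/\rho,1/\sqrt{N}\}=1/\sqrt{N}$ (valid once $N\ge\rho^2$, which is implied for $N$ large relative to $L\times S$) produces the claimed $O(1/\sqrt{N})$ rate; the leading factor $2$ follows from the standard rearrangement of the descent inequality.
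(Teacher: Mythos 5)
Your proposal is correct and follows essentially the same route as the paper's proof: the $\rho$-smoothness descent inequality applied to $W_{i+1}=W_i-\gamma g_{AS}(W_i)$ with bias $\delta_i = g_{AS,i}(W_i)-\nabla\mathcal{L}(W_i)$, the staleness-window drift bound $\max_{i-LS\le j,k\le i}\left\|W_j-W_k\right\|_\infty \le LSG\gamma$, a layer-by-layer Lipschitz propagation of that drift through forward activations and backward gradients (the paper's Lemma~\ref{lemma:act} and Lemma~\ref{lemma:grad}, which formalize your one-sentence propagation sketch via per-entry ``mixing matrices'' of stale activations) yielding $\left\|\delta_i\right\|_\infty=O(\gamma)$, and the standard telescoping with $\gamma=\min\left\{1/\rho,1/\sqrt{N}\right\}$. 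Your one deviation is the zero-mean sampling-noise term $\xi_i=g(W_i)-\nabla\mathcal{L}(W_i)$, which is identically zero in this setting---\tool{} performs whole-graph training, so the paper takes $g_i(W_i)=\nabla\mathcal{L}(W_i)$ exactly, no conditioning or martingale step is needed (nor licensed by the hypotheses, since the expectation $\mathbb{E}_R$ is solely over the uniform draw of $R$)---so your decomposition collapses to the paper's.
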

In particular, we have $\lim_{N\rightarrow\infty} \mathbb{E}_{R}\left\|\nabla \mathcal{L}\left(W_{R}\right)\right\|^{2}=0$, indicating that asynchronous GNN training will eventually converge to a local minimum.
The full-blown proof can be found on Appendix \S\ref{sec:theorem}. It mostly follows the convergence proof of the \emph{variance reduction (VR)} algorithm in \cite{chen-gcnvr-pmlr18}. However, our proof differs from \cite{chen-gcnvr-pmlr18} in two major aspects: (1) \tool performs whole-graph training and updates weights only once per layer per epoch, while VR samples the graph and trains on mini-batches and thus it updates weights multiple times per layer per epoch; (2) \tool's asynchronous GNN training can use neighbor activations that are up to $S$-epoch stale, while VR can take only $1$-epoch-stale neighbor activations. Since $S$ is always \emph{bounded} in \tool, the convergence is guaranteed regardless of the value of $S$.
%\yf{Maybe this can go into the introduction and be the third reason why graph sampling is not the best choice?} 

Note that compared to a sampling-based approach, our asynchronous computation is guaranteed to converge. On the contrary, although sampling-based training converges often in practice, there is no guarantee for trivial sampling methods~\cite{chen-gcnvr-pmlr18}, not to mention that sampling incurs a per-epoch overhead and reduces accuracy. 

%Our results provide another reason why full-graph training can be a better choice.
%\miryung{should we vary $k$ to inform the choice of $k$ on the results and forward reference the result to Evaluation Section here?}
%\hx{yes}

%\MyPara{Eliminating Stragglers.} While bounded staleness speeds up global progress, some slow-moving vertex intervals can potentially continue being stragglers, slowing down the entire computation. \tool overcomes this challenge by dynamically boosting the progress of slow-moving vertex intervals with latest weights computed --- a slow-moving interval picks up the latest available weights during the forward pass (while relying on weight stashing in the backward pass to retain high accuracy). This allows them to directly incorporate the progress made by fast-moving intervals. 

%Furthermore, \tool employs \emph{early termination} ---  when, for all intervals, their errors reach the target threshold (\ie, each interval may take different numbers of epochs to reach that target), the training process is terminated, eliminating the slow-moving stragglers at the end.

%In practice, \tool chooses the minimum $k$ that can saturate the pipeline. \hx{Need to update.}

\mysection{Lambda Management \label{sec:lambda}}  

Each GS runs a Lambda controller, which launches Lambdas, batches data to be sent to each Lambda, monitors each Lambda's health, and routes its result back to the GS. 

Lambda threads are launched by the controller for a task $t$ at the time $t$'s previous task starts executing. For example, \tool launches $n$ Lambda threads, preparing them for AV when $n$ GA tasks start to run. 
Each Lambda runs with OpenBLAS library~\cite{openblas} that is optimized to use AVX instructions for efficient linear algebra operations.
%Each Lambda runs with Intel's Math Kernel Library~\cite{intel-mkl} that is optimized to use AVX instructions for efficient deep learning operations. 
Lambdas communicate with GSes and PSes using ZeroMQ~\cite{zmq}.

%Since our GSes are EC2 instances, 
All of our Lambdas are deployed inside the virtual private cloud (VPC) rather than public networks to maximize Lambdas' bandwidth when communicating with EC2 instances. When a Lambda is launched, it is given the addresses of its launching GS and a PS. Once initialized, the Lambda initiates communication with the GS and the PS, pulling vertex, edge and weight data from these servers. Since Lambda threads are used throughout the training process, these Lambdas quickly become ``warm'' (\ie, the AWS reuses a container that already has our code deployed instead of cold-starting a new container) and efficient. Our controller also times each Lambda execution and relaunches it after timeout.

\MyPara{Lambda Optimizations.} One significant challenge to overcome is Lambdas' limited network bandwidth~\cite{serverless-study-cidr19, serverless-storage-atc18}. Although AWS has considerably improved Lambdas' network performance~\cite{labmda-network-link}, the per-Lambda bandwidth goes down as the number of Lambdas increases. For example, for each GS, when the number of Lambdas it launches reaches 100, the per-Lambda bandwidth drops to $\thicksim$200Mbps, which is more than 3$\times$ lower than the peak bandwidth we have observed ($\thicksim$800Mbps). We suspect that this is because many Lambdas created by the same user get scheduled on the same machine and share a network link. 
%Although future improvements are certainly foreseeable, this limitation needs to be considered for current systems.

%A workaround is to increase the number of GSes so that Lambdas can have higher bandwidth for data transfer. This is shown in Figure~\ref{fig:lambda}(c) --- fixing the number of Lambdas to 100, the per-Lambda bandwidth grows linearly with the number of GSes. However, adding GSes increases training costs. The number of Lambdas per GS thus defines a tradeoff between efficiency and cost; how to determine this number is discussed shortly in \S\ref{sec:resource}.
\tool provides three optimizations for Lambdas:
%: \emph{task fusing}, \emph{recomputing rather than caching and retrieving intermediate results}, and \emph{Lambda-internal streaming}. 
%Due to space constraints, we do not include the details of these optimizations in the submission. 

The first optimization is task fusion. Since AV of the last layer in a forward pass is connected directly to $\triangledown$AV of the last layer in the next backward pass (see Figure~\ref{fig:pipeline}), we merge them into a single Lambda-based task, reducing invocations of thousands of Lambdas for each epoch and saving a round-trip communication between Lambdas and GSes.  

The second optimization is tensor rematerialization~\cite{checkmate-mlsys20, dynamic-rematerialize20}. Existing frameworks cache intermediate results during the forward pass as these results can be reused in the backward pass. For GNN training, for instance, $\hat{A}HW$ is such a computation whose result needs to be cached. Here tensor computation is performed by Lambdas while caching has to be done on GSes. Since a Lambda's bandwidth is limited and network communication is a bottleneck, it is more profitable to rematerialize these intermediate tensors by launching more Lambdas rather than retrieving them from GSes. 

The third optimization is Lambda-internal streaming. In particular, if a Lambda is created to process a data chunk, we let the Lambda retrieve the first half of the data, with which it proceeds to computation while simultaneously retrieving the second half. This optimization overlaps computation with communication from within each Lambda, leading to reduced Lambda response time.

\MyPara{Autotuning Numbers of Lambdas.}
Due to inherent dynamism in Lambda executions, it is not feasible to statically determine the number of Lambdas to be used. On the performance side, the effectiveness of Lambdas depends on whether the pipeline can be saturated. In particular, since certain graph tasks (such as SC) rely on results from tensor tasks (such as AV), too few Lambdas would not generate enough task instances for the graph computation $G$ to saturate CPU cores. On the cost side, too many Lambdas \emph{overstaturate} the pipeline --- they can generate too many CPU tasks for the GS to handle. The optimal number of Lambdas is also related to the pace of the graph computation, which, in turn, depends on the graph structure (\eg, density) and partitioning that are hard to predict before execution.

To solve the problem, we develop an autotuner that starts the pipeline by using \codeIn{min}(\#\codeIn{intervals}, 100) as the number of Lambdas where \codeIn{intervals} represents the number of vertex intervals on each GS. Our autotuner auto-adjusts this number by periodically checking the size of the CPU's task queue --- if the size of the queue constantly grows, this indicates that CPU cores have too many tasks to process, and hence we scale down the number of Lambdas; if the queue quickly shrinks, we scale up the number of Lambdas. The goal here is to stabilize the size of the queue so that the number of Lambdas matches the pace of graph tasks. 

%\end{comment}

%Note that the interval size cannot be autotuned --- due to asynchrony, different intervals are at their own pace in the training. Resizing the size of an interval at any point can make intervals of different sizes exist simultaneously in the pipeline, leading to increased complexity in management and scheduling, as well as redundant work due to vertex overlapping between different intervals. 
%\miryung{what is exactly tuned here? the number of lambda threads or \# of intervals, or both? }

\mysection{Evaluation\label{sec:eval}}

We wrote a total of 11629 SLOC in C++ and CUDA.
There are 10877 of the lines of C++ code: 5393 for graph servers, 2840 for Lambda management (and communication), 1353 for parameter servers, and 1291 for common libraries and utilities.
There are 752 lines of CUDA code for GPU kernels including common graph operations like GCN and mean-aggregators with cuSPARSE~\cite{cusparse}.
Our CUDA code includes deep learning operations such as dense layer and activation layer with cuDNN~\cite{chetlur2014cudnn}. 
\tool supports common stochastic optimizations including \emph{Xavier initialization}~\cite{glorot2010understanding}, \emph{He initialization}~\cite{he-init-iccv15}, a \emph{vanilla SGD optimizer}~\cite{kiefer-sgd-1952am}, and an \emph{Adam optimizer}~\cite{kingma2014adam}, which help training converge smoothly.

\begin{table}[htb]
\vspace{-1em}
    \scriptsize
    \centering
    \begin{tabular}{lrrrr}
        \thead{Graph} & \thead{Size ($|V|$, $|E|$)} & \thead{\# features} & \thead{\# labels} & \thead{Avg. degree} \\
    \hline
        Reddit-small~\cite{graphsage-nips17}  & (232.9K, 114.8M) &  602  & 41 & 492.9 \\
        Reddit-large~\cite{graphsage-nips17} & (1.1M, 1.3B) &  301 & 50 & 645.4 \\
        Amazon~\cite{mcAuley-sigir15, he-www16} & (9.2M, 313.9M)  &  300 & 25 & 35.1 \\
        Friendster~\cite{snap} & (65.6M, 3.6B) & 32 & 50 & 27.5 \\
         %\tool & 1045 lambda threads & \$12 per & Twitter \\
         %(this paper) & 10 c5.xlarge instances & model & ($|$V$|$ = M, $|$E$|$ = 2B) \\\bottomrule
    \end{tabular}
%    \vspace{-1.5em}
    \caption{We use 4 graphs, 2 with billions of edges.}
    \label{tab:graphs}
    \vspace{-1em}
\end{table}

\mysubsection{Experiment Setup}
We experimented with four graphs, as shown in Table~\ref{tab:graphs}.
\codeIn{Reddit-small} and \codeIn{Reddit-large} are both generated from the Reddit dataset~\cite{reddit-dataset}.
%\sout{Note that our \codeIn{Reddit-small} is much larger than the graph with the same name in Neugraph that only has 1.4M edges.}
%\yf{I think Reddit-small is a popular dataset in GNN and ML-sys community and we don't need this.}
%NeuGraph~\cite{neugraph-atc19} scales to a graph also generated from the Reddit dataset (706M edges), which is much smaller than our \codeIn{Reddit-large} that has 1.1M vertices and 1.3B edges.
\codeIn{Amazon} is the \emph{largest graph} in RoC's~\cite{roc-mlsys20} evaluation.
We added a larger 1.8 billion (undirected) edge \codeIn{Friendster} social network graph to our experiments.
For GNN training, we turned undirected edges into two directed edges, effectively doubling the number of edges (which is consistent with how edge numbers are reported in prior GNN work~\cite{roc-mlsys20,neugraph-atc19}).
The first three graphs come with features and labels while \codeIn{Friendster} does not.
For scalability evaluation we generated random features and labels for \codeIn{Friendster}.

We implemented two GNN models on top of \tool: graph convolutional network (GCN)~\cite{kipf-iclr17} and graph attention network (GAT)~\cite{gtn-nips19} with 279 and 324 lines of code. GCN is a popular network that has AV but not AE, while GAT is a recently-developed recurrent network with both AV and AE. Their development is straightforward and other GNN models can be easily implemented on \tool as well. Each model has 2 layers, consistent with those used in prior work~\cite{roc-mlsys20, neugraph-atc19}.

\textit{Value} is the major benefit \tool{} brings to training GNNs.
We define value as a system's \emph{performance per dollar}, computed as $V$ = 1/($T\times C$) where $T$ is the training time and $C$ is the monetary cost. 
For example: if system A trains a network twice as fast as system B, and yet costs the same to train, we say A has twice the value of B.
If one has a time constraint, the most inexpensive option to train a GNN is to pick the system/configuration that meets the time requirement with the best value. %In this example system A would be half as expensive as B (again, assuming the value is constant).
In particular, value is important for training since users cannot take the cheapest option if it takes too long to train; neither can they take the fastest option if it is extremely expensive in practice.  Throughout the evaluation, we use both value and performance (runtime) as our metrics.

We evaluated several aspects of \tool{}.
\jt{First, we compared several different instance types to determine the configurations
that give us the optimal value for each backend.}
\jt{Second}, we compared several synchronous and asynchronous variants of \tool.
In later subsections, we use our best variant  (which is asynchronous with a
staleness value of 0) in comparisons with other existing systems. 
%\tool{} refers to our best implementation 
\jt{Third}, we compared the effects of Lambdas using \tool{} against more traditional
CPU- and GPU-only implementations in terms of value, performance, and scalability. 
Next, we evaluate \tool{} against existing systems.
Finally, we break down our performance and costs to illustrate our system's benefits.

%\color{blue}\textbf{JT:}
\begin{table}[ht]
  \vspace{-0.5em}
  \small
  \centering
  \begin{tabular}{cccc}
    \thead{Backend} & \thead{Graph} & \thead{Instance Type} & \thead{Relative Value} \\

    \hline

    \multirow{4}{*}{CPU} &
        \multirow{2}{*}{Reddit-large} & r5.2xlarge (4) & 1 \\
            & & c5n.2xlarge (12) & 4.46 \\
        & \multirow{2}{*}{Amazon} & r5.xlarge (4) & 1 \\
            & & c5n.2xlarge (8) & 2.72 \\

    \hline

    \multirow{2}{*}{GPU} &
        \multirow{2}{*}{Amazon} & p2.xlarge (8) & 1 \\
    & & p3.2xlarge (8) & 4.93
  \end{tabular}
  \vspace{-1em}
  \caption{
    \jt{Comparison of the values provided by different instance types. r5 
    and p2 instances provided significantly lower values than the (c5 and p3) instances we chose.}
  }
  \label{tab:instance-selection}
  \vspace{-1em}
\end{table}
\mysubsection{Instance Selection}
\label{sec:instance-selection}
\jt{
To choose the instance types 
for our evaluation, we ran a set of experiments to determine the types that gave
us the best value for each backend.
We compared across memory optimized (r5) and compute optimized (c5) instances,
as well as the p2 and p3 GPU instances, which have K80 and V100 GPUs, respectively.
%We chose these instances as candidates as the they gave us an idea of how specific
%resources affect the tradeoff between compute time and cost.
As r5 offers high memory, we were able to fit the graph in a smaller number of
instances, lowering costs in some cases.
However, due to the smaller amount of computational resources available, training
on the r5 instances typically took nearly $3\times$ as long as computation on c5.
Therefore, as shown in Table~\ref{tab:instance-selection} the average increases
in value c5 instances provided relative to r5 instances are 4.46 and 2.72, respectively, for
\codeIn{Reddit-large} and \codeIn{Amazon}.
We therefore selected c5 as our choice for any CPU based computation.}

\jt{
Similarly, for GPU instances,
training on \codeIn{Amazon} with 8 K80s took 1578 seconds and had a total cost
of \$3.16.
Using 8 V100s took 385 seconds and cost \$2.62\textemdash it improves both
costs and performance, resulting in a value increase of $4.93\times$
compared to training on K80 GPUs.
As value is the main metric which we use to evaluate our system, we
choose the instance type which gives the best value to each different
backend to ensure a fair comparison.}

\jt{
Given these results, we selected the following instances to run our evaluation:
(1) c5, compute-optimized instances, and (2) c5n, compute and network
optimized instances.
c5n instances have more memory and faster networking, but their CPUs have
slightly lower frequency than those in c5.
The base c5 instance has 2 vCPU, 4 GB RAM, and 10 Gbps per-instance network
bandwidth costing \$0.085/h\footnote{These prices are from the Northern
Virginia region.}.
The base c5n instance has 2 vCPU, 5.25 GB RAM (33\% more), and 25 Gbps
per-instance network bandwidth, costing \$0.108/h.
We used the base p3 instance, p3.2xlarge, with Telsa V100 GPUs.
Each p3 base instance has 1 GPU (with 16 GB memory), 8 vCPUs, and 61 GB memory,
costing \$3.06/h.}
%\jt{We describe our instance selection in Section~\ref{sec:instance-selection}.}

\jt{
Each Lambda is a container with 0.11 vCPUs and 192 MB memory.
Lambdas have a static cost of \$0.20 per 1 M requests, and a compute cost
of \$0.01125/h (billed per 100 ms).
%Note that Lambdas are not strictly cheaper than their corresponding EC2 instances. However, 
This billing granularity enables serverless threads to handle short bursts of massive
parallelism much better than CPU instances. }

\begin{table}[ht]
  \vspace{-0.5em}
  \small
  \centering
  \begin{tabular}{cccc}
    \thead{Model} & \thead{Graph} & \thead{CPU cluster} & \thead{GPU cluster} \\

    \hline

    \multirow{4}{*}{GCN} & Reddit-small & c5.2xlarge (2) & p3.2xlarge (2) \\
    & Reddit-large & c5n.2xlarge (12) & p3.2xlarge (12) \\
    & Amazon & c5n.2xlarge (8) & p3.2xlarge (8) \\
    & Friendster & c5n.4xlarge (32) & p3.2xlarge (32) \\

    \hline

    \multirow{2}{*}{GAT} & Reddit-small & c5.2xlarge (10) & p3.2xlarge (10) \\
    & Amazon & c5n.2xlarge (12) & p3.2xlarge (12) \\
  \end{tabular}
  \vspace{-1em}
  \caption{
    We used (mostly) c5n instances for CPU clusters, and equivalent numbers of p3 instances for GPU clusters.
  }
  \label{tab:cluster-configurations}
  \vspace{-1em}
\end{table}

\jt{
Table~\ref{tab:cluster-configurations} shows our CPU and GPU clusters for each pair of model and graph we evaluated.
For each graph, we picked the number of servers such that they have just enough memory to hold the graph data and their tensors.
For example, \codeIn{Amazon} needs 8 c5n.2xlarge servers (with 16 GB memory) provide enough memory.
For \codeIn{Friendster} we need 32 c5n.4xlarge instances (with a total of 1344 GB memory).
Our goal is to train a model with the minimum amount of resources.
Of course, using more servers will lead to better performance and higher costs (discussed in \S\ref{sec:lambda-effects}).
% \yf{And improving the networking bandwidth always leads to better performance (10\%-15\%), but not necessarily lower costs, due to the higher price of networking-enhanced instances.}
For all experiments (except \codeIn{Reddit-small}), c5n instances offered the best value.}

\jt{
TPU has become an important type of computation accelerator for machine learning. This paper focuses on AWS and its serverless platform, and hence we did not implement \tool on TPUs. Although we did not compare directly with TPUs, we note several important
features of GNNs that make the limitations of TPUs comparable to GPUs.
First, GNNs are unlike conventional DNNs in that they require large amounts
of data movement for neighborhood aggregation.
As a result, GNN performance is mainly bottlenecked by memory constraints and
the resulting communication overheads (\eg, between GPUs or TPUs), \emph{not} computation
efficiency~\cite{roc-mlsys20}.
Second, GNN training involves computation on large sparse tensors that incur
irregular data accesses, resulting in sub-optimal performance on TPUs which are
optimized for dense matrix operations over regularly structured data.}

\color{black}

\mysubsection{Asynchrony}

\begin{figure*}[t]
\centering
%\vspace{-1em}
\scriptsize
\begin{tabular}{ccc}
\includegraphics[scale=.3]{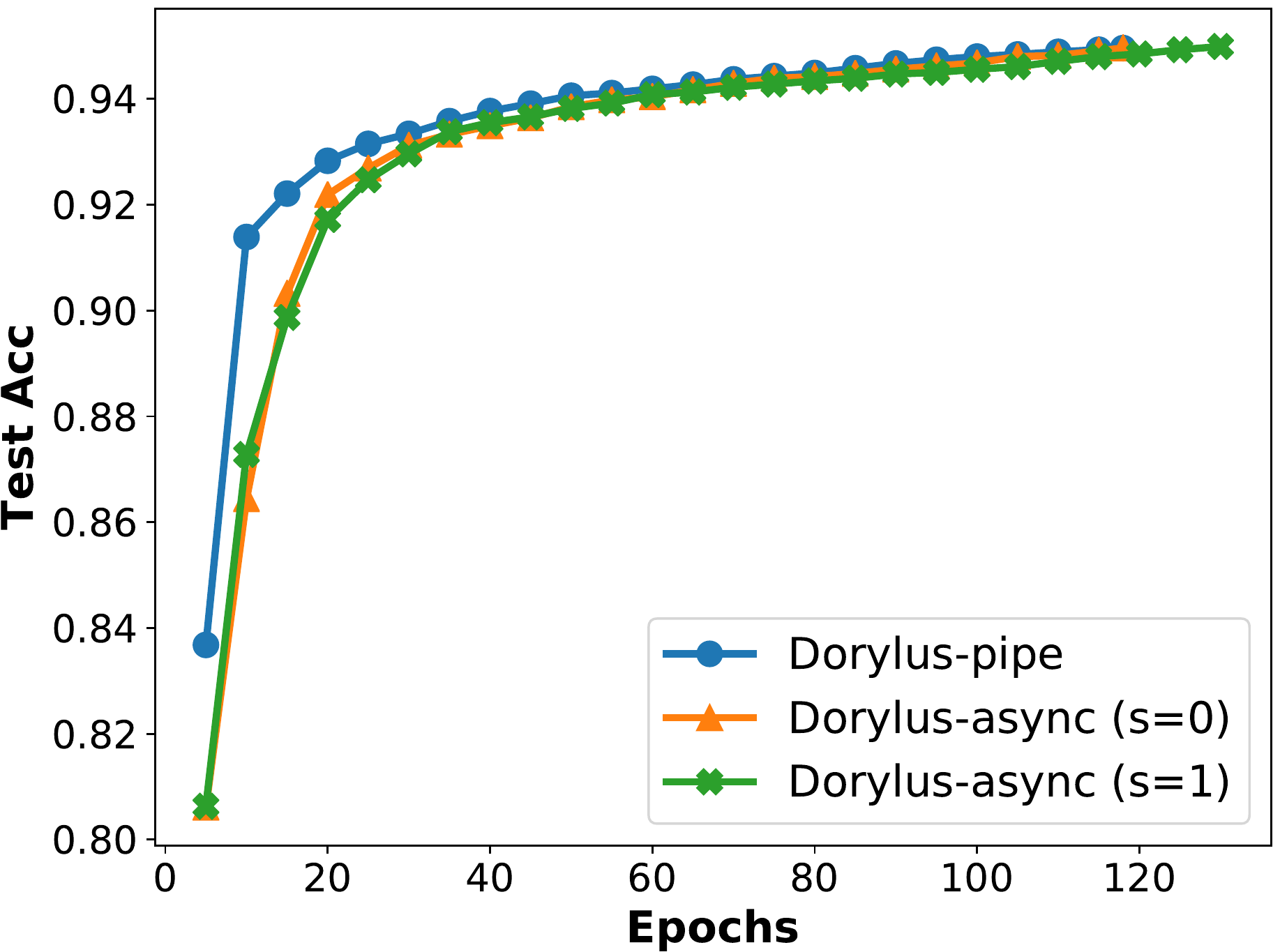}
     &  
\includegraphics[scale=.3]{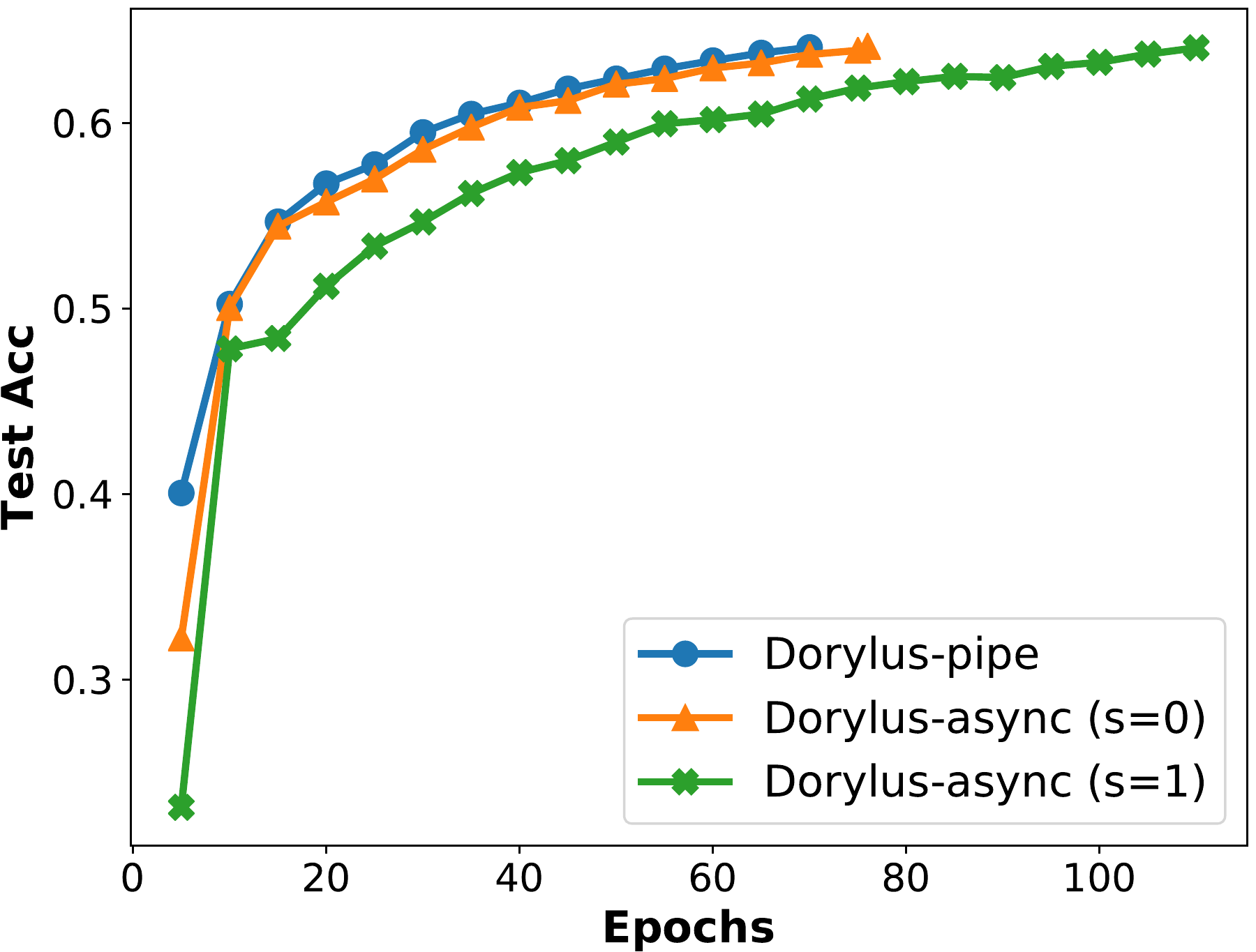}     
    &
\includegraphics[scale=.3]{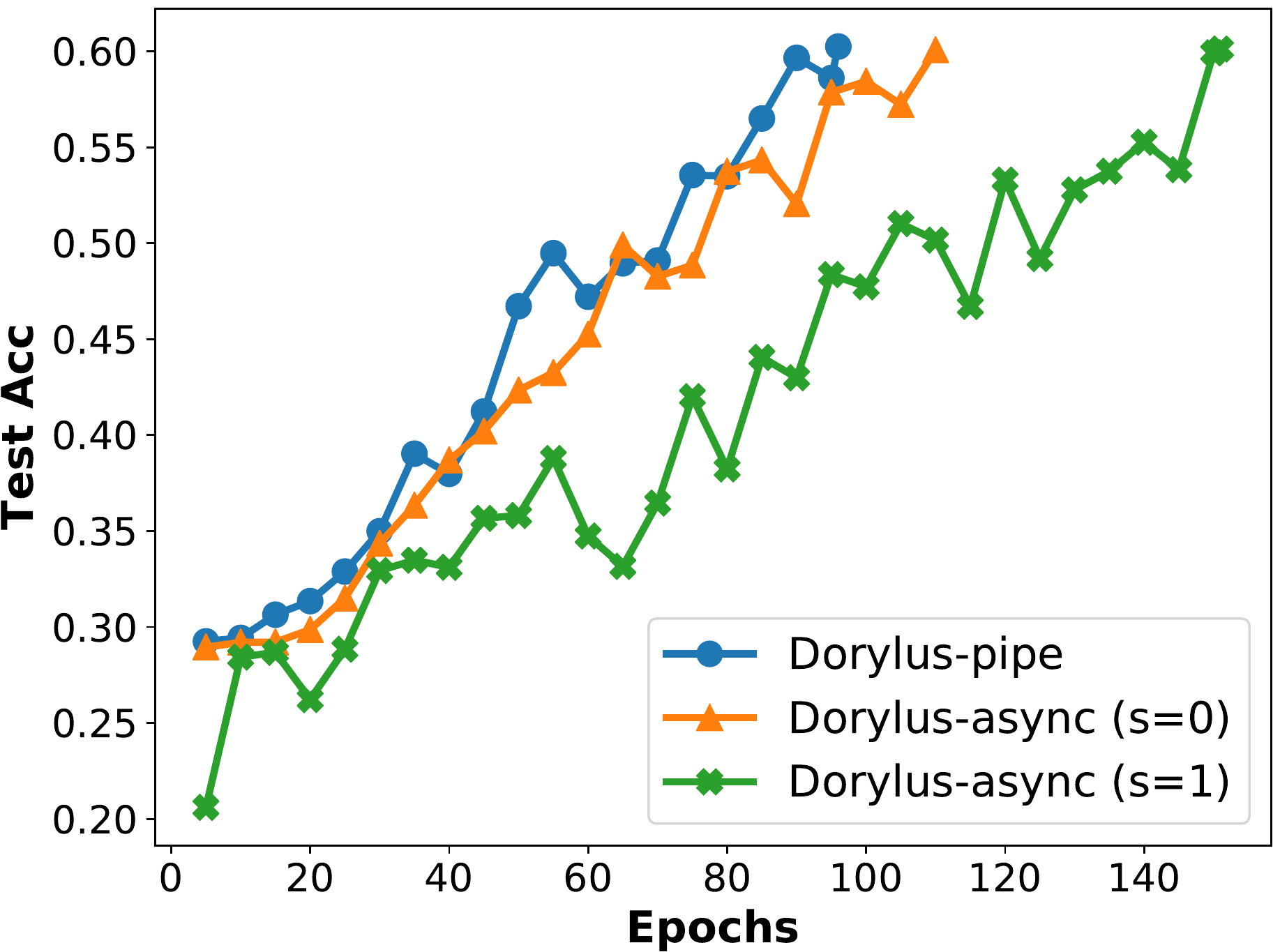}\\
 (a) \codeIn{Reddit-small} &  (b) \codeIn{Amazon} & (c) \codeIn{Reddit-large}  \\
 \textbf{R[$s$=0]:1.00, R[$s$=1]:1.07, Accuracy:94.96\%} & \textbf{R[$s$=0]:1.09, R[$s$=1]:1.57, Accuracy:64.08\% } &  \textbf{R[$s$=0]:1.14, R[$s$=1]:1.58, Accuracy:60.07\%}
 \end{tabular}
 %\vspace{-1.5em}
 \caption{Asynchronous progress for GCN: All three versions of \tool{} achieve the final accuracy \ie, \textbf{94.96\%}, \textbf{64.08\%}, \textbf{60.07\%} for the three graphs). \codeIn{Friendster} is not included because it does not come with meaningful features and labels. %Each dot indicates 5 epochs.
 \label{fig:async-progress}}
 %\vspace{-1.8em}
 \end{figure*}
 
\begin{figure}[!ht]
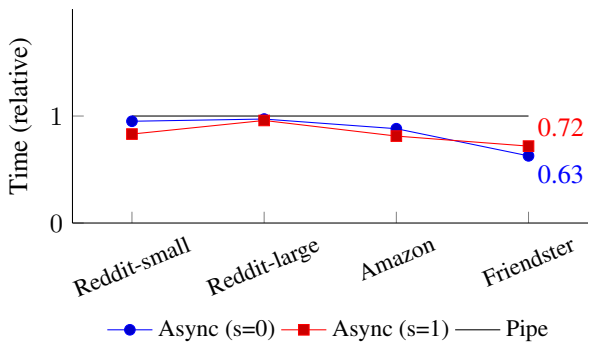

\vspace{-.5em}
\centering
\includestandalone{figs/per-epoch-async-comparison}
\vspace{-1em}
\caption{Per-epoch GCN time for async ($s$=0) and async ($s$=1) normalized to that of pipe. \label{fig:per-epoch}}
\vspace{-1em}
\end{figure}

\begin{comment}
We compared three variants of \tool: a synchronous version with full intra-epoch pipelining (pipe), and two asynchronous versions with the staleness value being 0 and 1 (async, $s$=0 and $s$=1) over all four graphs. Pipe synchronizes at each \codeIn{Gather} --- a vertex cannot go forward until all its neighbors have their latest values scattered. As a result, all vertex intervals have to be in the same epoch. However, inside each epoch, pipelining is enabled, and hence the processing of different tasks is fully overlapped. Async enables both pipelining and asynchrony (\ie, stashing weights and using stale values at GA). When the staleness value is $s$ = 0, \tool allows a vertex to use a stale value from a neighbor as long as the value is computed in the same epoch (\eg, can be in a previous layer). In other words, if the value is generated from a previous epoch, it is not allowed to be used and synchronization is needed.

Note that async ($s$=0) is fundamentally different from pipe in that pipe does \emph{not} allow different intervals to be in different epochs (because GA always waits for the latest values from neighbors), whereas in async ($s$=0) an interval $i$ can be many epochs ahead of another interval $j$ as long as the neighbor intervals of $i$ are as fast as $i$ (\ie, in the same epoch). Similarly, async ($s$=1) allows GA to use neighbor values if they are generated in either the current or the previous epoch.
\end{comment}

We compare three versions of \tool: a synchronous version with full intra-layer pipelining (pipe), and two asynchronous versions using $s=0$ and $s=1$ as the staleness values over all four graphs. Pipe synchronizes at each \codeIn{Gather} --- a vertex cannot go into the next layer until all its neighbors have their latest values scattered. As a result, all vertex intervals have to be in the same layer in the same epoch. However, inside each layer, pipelining is enabled, and hence different tasks are fully overlapped. Async enables both pipelining and asynchrony (\ie, stashing weights and using stale values at GA). When the staleness value is $s$ = 0, \tool allows a vertex to use a stale value from a neighbor as long as the neighbor is in the same epoch (\eg, can be in a previous layer). In other words, Async ($s$=0) enables fully pipelining across different layers in the same epoch, but pipelining tasks in different epochs are not allowed and synchronization is needed every epoch. Similarly, async ($s$=1) enables a deeper pipeline across two consecutive epochs.

\MyPara{Training Progress.} 
Due to the use of asynchrony, it may take the asynchronous version of \tool more epochs to reach the same accuracy as pipe. To enable a fair comparison, we first ran \tool-pipe until convergence (\ie, the difference of the model accuracy between consecutive epochs is within 0.001, unless otherwise stated) and then used this accuracy as the target accuracy to run async when collecting training time. However, this approach does not work for \codeIn{Friendster}, because it uses randomly generated features/labels and hence accuracy is not a meaningful target. To solve this problem, we computed an average ratio, across the other three graphs, between the numbers of epochs needed for async and pipe, and used this ratio to estimate the training time for \codeIn{Friendster}. For example, this ratio is 1.08 for $s$=0 and 1.41 for $s$=1. As such, we let async ($s$=0) run N$\times$1.08 epochs and async ($s$=1) run N$\times$1.41 epochs when measuring performance for \codeIn{Friendster} where $N$ is the number of epochs pipe runs.

Figure~\ref{fig:async-progress} reports the GCN training progress for each variant, that is, how many epochs it took for a version to reach the target accuracy. Annotated with each figure are two ratios: R[$s$=0] and R[$s$=1], representing the ratio between the number of epochs needed by async ($s$=0/1) and that by \tool-pipe to reach the same target accuracy. On average,  async ($s$=0/1) increases the number of epochs by 8\%/41\%. 

Figure~\ref{fig:per-epoch} compares the per-epoch running time for each version of \tool, normalized to that of pipe. As expected, async has lower per-epoch time; in fact, async ($s$=0) achieves almost the same reduction ($\thicksim$15\%) in per-epoch time as $s$=1. This indicates that choosing a large staleness value has little usefulness --- it cannot further reduce per-epoch time and yet the number of epochs grows significantly. 

To conclude, asynchrony can provide overall performance benefits in general but too large a staleness value leads to slow convergence and poor performance, although the per-epoch time reduces. This explains why async ($s$=0) outperforms async ($s$=1) by a large margin. Overall, async ($s$=0) is \textbf{1.234$\times$} faster than pipe and \textbf{1.233$\times$} than async ($s$=1). It also provides \textbf{1.288$\times$} and \textbf{1.494$\times$} higher value than pipe and async ($s$=1) respectively. Thus we choose it as the default Lambda variant in our following experiments unless otherwise specified.
From this point on, \tool refers to this particular version.
%The effectiveness of asynchronous \codeIn{Gather} depends largely on weight stashing. Significant delays in convergence (20\%-30\%) were observed when weight stashing was disabled.% (these results are not shown in the paper due to space constraints).  

\mysubsection{Effects of Lambdas \label{sec:lambda-effects}}

We developed two traditional variants of \tool{} to isolate the effects of serverless computing using Lambdas, one using CPU-only servers for computations, and the other using GPU-only servers (both without Lambdas).
These variants perform all tensor and graph computations directly on the graph server. They both use \tool{}' (tensor and graph) computation separation for scalability. Note that without computation separation, no existing GPU-based training system has been shown to scale to a billion-edge graph. 

Since Lambdas have weak compute that we cannot find in regular EC2 instances, it is not possible for us to translate Lambda resources directly into equivalent EC2 resources, keeping the total amount of compute constant when selecting the number of servers for each variant. To address this concern, we compared the value of different systems in addition to their absolute times and costs.

\begin{table}[ht]
  \vspace{-0.5em}
 % \scriptsize
  \adjustbox{max width=\linewidth}{
  \centering
  \begin{tabular}{cccrr}
    \thead{Model} & \thead{Graph} & \thead{Mode} & \thead{Time (s)} & \thead{Cost (\$)} \\

    \hline

    \multirow{12}{*}{GCN} & \multirow{3}{*}{Reddit-small}  & Dorylus & 860.6 & 0.20 \\
    & &  CPU only & 1005.4 & 0.19 \\
    & & GPU only & 162.9 & 0.28 \\
    
    \cline{2-5}
    
    & \multirow{3}{*}{Reddit-large} & Dorylus (pipe) & 1020.1 & 1.69 \\
    & & CPU only & 1290.5 & 1.85 \\
    & & GPU only & 324.9 & 3.31 \\

    \cline{2-5}

    & \multirow{3}{*}{Amazon} & Dorylus & 512.7 & 0.79 \\
    & & CPU only & 710.2 & 0.68 \\
    & & GPU only & 385.3 & 2.62 \\
    
    \cline{2-5}
    
    & \multirow{3}{*}{Friendster} & Dorylus & 1133.3 & 13.8 \\
    & & CPU only & 1990.8 & 15.3 \\
    & & GPU only & 1490.4 & 40.5 \\
    
    \hline
    
    \multirow{6}{*}{GAT} & \multirow{3}{*}{Reddit-small} & Dorylus & 496.3 & 1.15 \\
    & & CPU only & 1270.4 & 1.20 \\
    & & GPU only & 130.9 & 1.11 \\
    
    \cline{2-5}
    
    & \multirow{3}{*}{Amazon} & Dorylus & 853.4 & 2.67 \\
    & & CPU only & 2092.7 & 3.01 \\
    & & GPU only & 1039.2 & 10.60 \\
  \end{tabular}
  }
  %\vspace{-1em}
  \caption{
    We ran \tool{} in 3 different modes: ``\tool{}'', our best Lambda variant using async(s=0) (except in one case), the ``CPU only'' variant, and the ``GPU only'' variant.
    For each mode we used multiple combinations of models and graphs.
    For each run we report the total end-to-end running time and the total cost.
  }
  \label{tab:dorylus-raw}
  %\vspace{-1.5em}
\end{table}

We ran GCN and GAT on our graphs (Table~\ref{tab:dorylus-raw}).
%Our evaluation is equivalent to training these GNNs multiple times. 
We only ran the GAT model on one small and large graph because it was simply too monetarily expensive (even for our system!).
GAT has an intensive AE computation, which adds cost.
Note that this is \emph{not} a limitation of our system\textemdash our system can scale GAT to graphs larger than \codeIn{Amazon} if cost is not a concern.

Performance and cost by themselves do not properly illustrate the value of \tool{}.
For example, training GAT on \codeIn{Amazon} with \tool{} is both more efficient and cheaper than the CPU- and GPU-only variants. Hence, we report the value results as well. Recall that to compute the value, we take the reciprocal of the total runtime (\ie, the performance or rate of completion) and divide it by the cost. In this case \tool{} with Lambdas provides a $2.75\times$ higher value than CPU-only (\ie, $1/(853.4\times2.67)$ compared to $1/(2092.7\times3.01)$).
Figure~\ref{fig:dorylus-value} shows the value results for all our runs, \emph{normalized to GPU-only servers}.

\begin{figure}[ht]
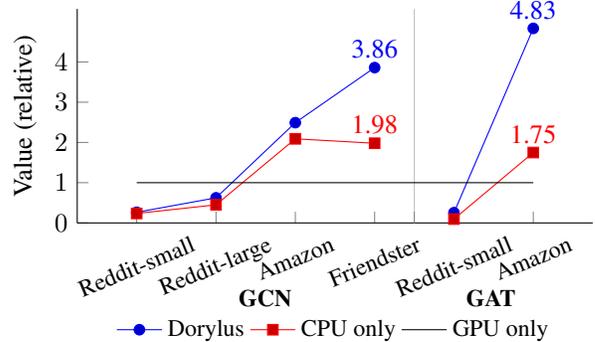

%\vspace{-.75em}
\centering
\includestandalone{figs/dorylus-value}
%\vspace{-1.2em}
\caption{
  \tool{}, with Lambdas, provides up to $2.75\times$ performance-per-dollar than using the CPU-only variant.
}
\label{fig:dorylus-value}
%\vspace{-1em}
\end{figure}

\tool{} adds value for large, sparse graphs (\ie, \codeIn{Amazon} and \codeIn{Friendster}) for both GCN and GAT, compared to CPU- and GPU-only variants. Sparsity of each graph can be seen from the average vertex degree reported in Table~\ref{tab:graphs}. As shown, \codeIn{Amazon} and \codeIn{Friendster} are much more sparse than \codeIn{Reddit-small} and \codeIn{Reddit-large}. 
For these graphs, the GPU-only variant has the lowest value, even compared to the CPU-only variant.
In most cases, the CPU-only variant provides twice as much value (\ie, performance per dollar) than the GPU-only variant.
\tool{} adds another leap in value over the CPU-only variant.

However, for small dense graphs (\ie, \codeIn{Reddit-small} and \codeIn{Reddit-large}), both \tool and the CPU-only variant have a value lower that that of the GPU-only variant (\ie, below 1 in Figure~\ref{fig:dorylus-value}). \tool{} always provides more value than the CPU-only variant. These results suggest that GPUs may be better suited to process small, dense graphs rather than large, sparse graphs. 

%\begin{figure}[ht]
%\centering
%\vspace{-1em}
%\includestandalone{figs/dorylus-performance}
%\vspace{-1.2em}
%\caption{\tool{} outperforms the GPU-only variant by up to $1.23\times$.}
%\label{fig:dorylus-performance}
%\vspace{-1em}
%\end{figure}

%Figure~\ref{fig:dorylus-performance} reports the performance of \tool{} and the CPU-only variant relative to that of the GPU-only variant.
%For the largest sparse graphs, \tool{} outperforms the GPU-only variant for both models.
%For example, in both cases (\ie, GCN on \codeIn{Friendster} and GAT on \codeIn{Amazon}), \tool outperforms the GPU-only version by at least $1.22\times$.

\begin{figure}[!ht]
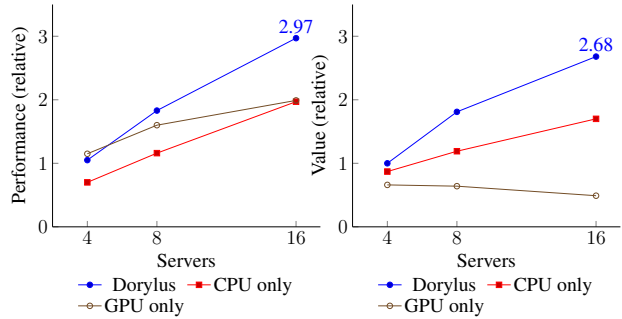

    \centering
    \vspace{-1em}
  \adjustbox{max width=\linewidth}{
    \begin{tabular}{cc}
        \hspace{-1em}
    \resizebox{0.25\textwidth}{!}{\includestandalone{figs/tikz-scale_out_perf}}
    &
    \hspace{-2em}
    \resizebox{0.25\textwidth}{!}{\includestandalone{figs/tikz-scale_out_value}}\\
    \end{tabular}
    }
   % \vspace{-1.5em}
    \caption{Normalized GCN training performance and value over \codeIn{Amazon} with varying numbers of graph servers. \label{fig:scale_out}}
   % \vspace{-1.5em}
\end{figure}

\MyPara{Scaling Out.}
\tool{} can gain even more value by scaling out to more servers, due to the burst parallelism provided by Lambdas and deep pipelining.
To understand the impact of the number of servers on performance/costs, we varied the number of GSes when training a GCN over \codeIn{Amazon}.
In particular, we ran \tool{} and the CPU-only variant with 4, 8, and 16 c5n.4xlarge servers, and the GPU-only variant with the same numbers of p3.xlarge servers.
Figure~\ref{fig:scale_out} reports their performance and values, normalized to those of \tool{} under 4 servers. 

In general, \tool scales well in terms of both performance and value.
\tool gains a 2.82$\times$ speedup with only 5\% more cost when the number of servers increases from 4 to 16, leading to a 2.68$\times$ gain in its value.
As shown in Figure~\ref{fig:scale_out}(b), \tool's value curve is always above that of the CPU-only variant. Furthermore, \emph{\tool can roughly provide the same value as the CPU-only variant with only half of the number of servers}.
For example, \tool with 4 servers provides a comparable value to the CPU-only variant with 8 servers; \tool with 8 servers provides more value to the CPU-only variant with 16 servers. These results suggest that as more servers are added, the value provided by \tool{} increases, at a rate much higher than the value increase of the CPU-only variant. As such, \tool is always a better choice than the CPU-only variant under the same monetary budget.

\MyPara{Other Observations.} In addition to the results discussed above, we make three other observations on performance.
%\begin{comment}

\underline{Our first observation} is that the more sparse the graph, the more useful \tool is. For \codeIn{Amazon} and \codeIn{Friendster}, \tool even outperforms the GPU-only version for two reasons:

%For the largest graph \codeIn{Friendster}, GPU is slower than the other variants. We carefully investigated our results and found two important reasons.
First, for all the three variants, the fraction of time on \codeIn{Scatter} is significantly larger when training over \codeIn{Friendster} and \codeIn{Amazon} than \codeIn{Reddit-small} and \codeIn{Reddit-large}. This is, at first sight, counter-intuitive because one would naturally expect less efforts on inter-partition communications for sparse graphs than dense graphs. A thorough inspection discovered that the \codeIn{Scatter} time actually depends on a \emph{combination} of the number of ghost vertices and inter-partition edges. For the two Reddit graphs, they have many inter-partition edges, but very few ghost vertices, because (1) their $|V|$ is small and (2) many inter-partition edges come from/go to the same ghost vertices due to their high vertex degrees. 

Second, \codeIn{Scatter} takes much longer time in GPU clusters. Moving ghost data between GPU memories on different nodes is much slower than data transferring between CPU memories. 
%Furthermore, \codeIn{Scatter} is executed by CPUs on GSes even if a GPU backend is used.
%It turns out that p3 -- EC2's highest-end GPU server family -- is not optimized for CPUs. p3 uses Intel Xeon Skylake processors (2.3GHz) while c5 -- the compute-optimized family -- uses Xeon Cascade with maximum single core frequency of 3.9GHz.  
As a result, the poor performance of the GPU-only variant is due to a
combinatorial effect of these two factors: \tool scatters significantly
more data for \codeIn{Friendster} and \codeIn{Amazon}, which amplifies
the negative impact of poor scatter performance in a GPU cluster.
\jt{Note that p3 also offers multi-GPU servers, which may potentially reduce scatter time. 
We have also experimented with these servers, but we still observed long scatter time due to extensive communication between between servers and GPUs.
Reducing such communication costs requires fundamentally different techniques such as those proposed by NeuGraph~\cite{neugraph-atc19}. We leave the incorporation of such techniques to future work. 
%We have also implemented a multi-GPU version of Dorylus in an attempt
%to alleviate this problem.
%However, our experiments showed that there was not much improvement from
%doing this and that scatter times still remained long.
%There may be other possible optimizations which, however,
%would not offset the benefit provided by \tool.
}
%We could not find an instance type with both optimized CPUs and GPUs.
% The second reason is that \codeIn{Scatter} (\eg, data buffering, sending and receiving) is executed by CPUs on GSes even if a GPU backend is used. It turns out that p3 -- EC2's highest-end GPU server family -- is not optimized for CPUs. p3 uses Intel Xeon Skylake processors (2.3GHz) while c5 -- the compute-optimized family -- uses Xeon Cascade with maximum single core frequency of 3.9GHz.  GPU's poor performance is due to a combinatorial effect of these two factors: \tool scatter significantly more data for \codeIn{Friendster} and \codeIn{Amazon}, which amplifies the negative impact of the poor CPU performance on p3 machines. We could not find an instance type with both optimized CPUs and GPUs.
%\end{comment}

\underline{Our second observation} is that Lambda threads are more effective in boosting performance for GAT than GCN. This is because GAT includes an additional AE task, which performs intensive per-edge tensor computation and thus benefits significantly from a high degree of parallelism. 

% YIFAN: from the scaling out section
\underline{Our third observation} is that \tool achieves comparable performance with the CPU-only variant that uses twice as many servers. For example, the training time of \tool  under 4 servers is only 1.1$\times$ longer than that of the CPU only variant with 8 servers. Similarly, \tool under 8 servers is only 1.05$\times$ slower than the CPU only variant with 16 servers. These results demonstrate our efficient use of Lambdas.

%In terms of  running time, \tool is about 1.48$\times$ slower than the GPU-only variant and 1.76$\times$ faster than the CPU-only variant. 
%In terms of cost, it is \textbf{2.41$\times$} cheaper than GPU only servers, and \textbf{only 1.03$\times$} more expensive than CPU only servers.

%\footnote{Calculated as averages normalized by numbers of \tool-async(s=0).}

\begin{comment}
\underline{Overall}, Dorylus-async ($s$=0) is about \textbf{2.5$\times$} slower than GPU and \textbf{2.1$\times$} faster than CPU. In terms of cost, it is \textbf{3.3$\times$} cheaper than GPU, and \textbf{1.2$\times$} more expensive than CPU.
\end{comment}

\begin{comment}
\MyPara{Discussion.} Note that in this set of experiments, \tool uses more CPU resources via Lambdas than the CPU-based backend. A more fair comparison would be to run the CPU backend with additional servers that offer equivalent compute to Lambdas. However, the number of Lambdas is dynamically tuned in our setting. In addition, each Lambda has much weaker compute than a typical CPU instance in any EC2 family, making it impossible to find the right number of equivalent CPU instances.  We carefully address this issue with two actions. First, we define a new metric \emph{cost-benefit} as the multiplication of the training time and monetary cost. This metric considers the amount of resources (reflected by the cost) and the gains over them (reflected by the time), putting our comparisons in a more sensible context. Figure~done shows our results.
\hx{Comparison results!}
\end{comment}

\mysubsection{Comparisons with Existing Systems \label{sec:comparison}}

Our goal was to compare \tool with all existing GNN tools. However, NeuGraph~\cite{neugraph-atc19} and AGL~\cite{agl-vldb20} are not publicly available; neither did their authors respond to our requests.  Roc~\cite{roc-mlsys20} is available but we could not run it in our environment due to various CUDA errors; we were not able to resolve these errors after multiple email exchanges with the authors. Roc was not built for scalability because each server needs to load the entire graph into its memory during processing. This is not possible when processing billion-edge graphs. 
This subsection focuses on the comparison of \tool, DGL~\cite{dgl}, which is a popular GNN library with support for sampling, as well as AliGraph~\cite{aligraph-kdd19}, which is also a sampling-based system that trains GNNs only with CPU servers.
All experiments use the cluster configuration specified above for each graph unless otherwise stated.

DGL represents an input graph as a (sparse) matrix; both graph and tensor computations are executed by PyTorch or MXNet as matrix multiplications. We experimented with two versions of DGL, one with sampling and one without. DGL-non-sampling does full-graph training on a single machine. DGL-sampling partitions the graph and distributes partitions to different machines. Each machine performs sampling on its partition and trains a GNN on sampled subgraphs.

AliGraph runs in a distributed setting with a server that stores the graph information.
A set of clients query the server to obtain graph samples and use them as minibatches
for training. 
%The graph is partitioned and distributed to different servers, and we ran clients collocated
%with the servers.
Similar to DGL, AliGraph uses a traditional ML framework as a backend and performs all of its
computation as tensor operations.

\begin{figure}[!ht]
\centering
%\vspace{-1em}
\scriptsize
\begin{tabular}{cc}
\hspace{-1em}
\begin{minipage}[t]{.48\linewidth}
\includegraphics[scale=.25]{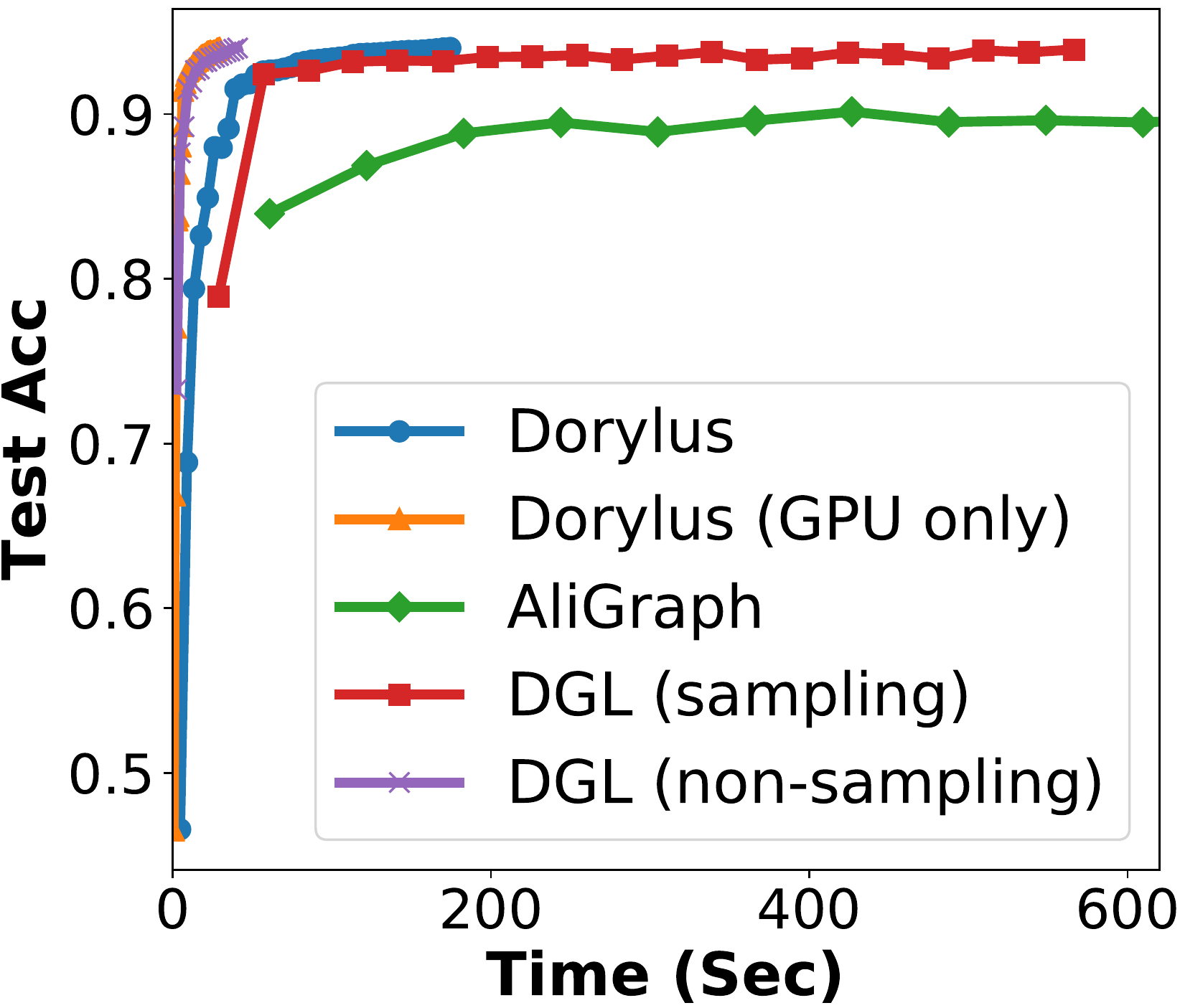}
\end{minipage}
&
\hspace{-1.5em}
\begin{minipage}[t]{.48\linewidth}
\includegraphics[scale=.25]{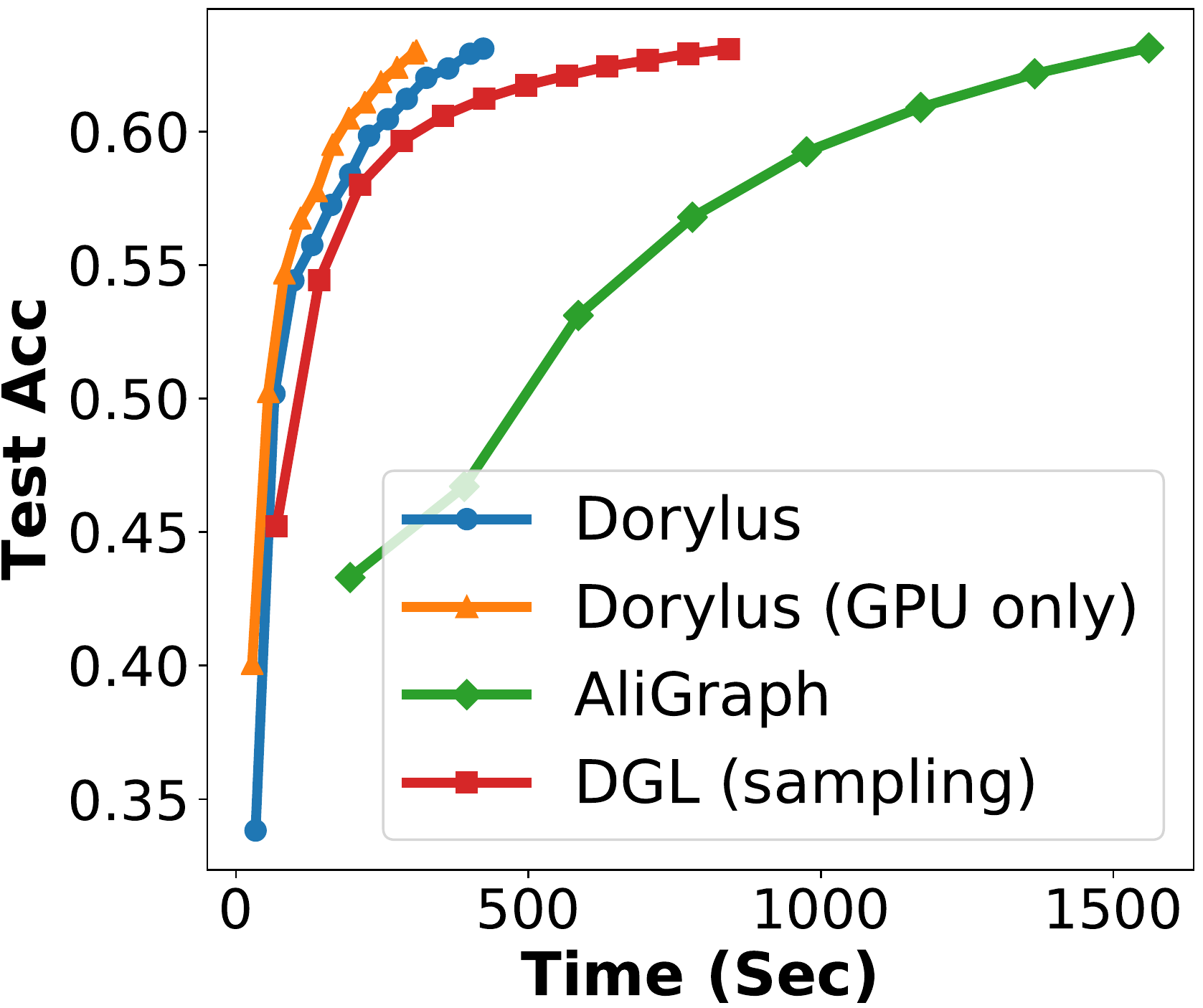}    
\end{minipage}\\
(a)\codeIn{Reddit-small}
&
(b) \codeIn{Amazon}\\
\end{tabular}
 %\vspace{-1.5em}
 \caption{Accuracy comparisons between \tool, \tool (GPU only), AliGraph, DGL (sampling), and DGL (non-sampling). DGL (non-sampling) uses a single V100 GPU and could not scale to \codeIn{Amazon}.
 Each dot indicates five epochs for \tool and DGL (non-sampling), and one epoch for DGL (sampling) and AliGraph. \label{fig:progress}}
 %\vspace{-1.8em}
 \end{figure}

\MyPara{Accuracy Comparison with Sampling.} Figure~\ref{fig:progress} reports the
accuracy-time curve for five configurations: \tool, \tool (GPU-only), DGL (sampling),
DGL (non-sampling), and AliGraph, over \codeIn{Reddit-small} and \codeIn{Amazon}. 
%\yf{I want to differentiate the ``converge'' here from what we described in \S\ref{sec:scalability} (accuracy difference across consecutive epochs within 0.001). 
%Here we train models for a longer time and report the highest numbers we get since accuracy still grows though slowly.}
When run enough epochs to fully converge, \tool can reach an accuracy of
\textbf{95.44\%} and \textbf{67.01\%}, respectively, for the two graphs.
DGL (non-sampling) can run only on the \codeIn{Reddit-small} graph, reaching 94.01\%
as the highest accuracy.
DGL (sampling) is able to scale to both graphs, and its accuracy reaches 93.90\%
and 65.78\%, respectively, for \codeIn{Reddit-small} and \codeIn{Amazon}. 
AliGraph is able to scale to both \codeIn{Reddit-small} and \codeIn{Amazon}.
On \codeIn{Reddit-small} it reaches a maximum accuracy of 91.12\% and 65.23\% on
\codeIn{Amazon}.

\begin{comment}
\begin{figure}[!ht]
%\vspace{-1em}
\includegraphics[scale=.5]{}
\vspace{-1.2em}
\caption{A comparison of training time (for achieving the same accuracy) between the DGL and various versions of \tool; all times are normalized to that of DGL-P100.\label{fig:dgl}}
\vspace{-1em}
\end{figure}
\end{comment}
%    \thead{Graph} & \thead{System} & \thead{Time (s)} & \thead{Cost (\$)} \\
%
%    
%    \multirow{6}{*}{GAT} & \multirow{3}{*}{Reddit-small} & Dorylus & 496.3 & 1.15 \\
%    & & CPU only & 1270.4 & 1.20 \\
%    & & GPU only & 130.9 & 1.11 \\
%    
%    \cline{2-5}
%    
%    & \multirow{3}{*}{Amazon} & Dorylus & 853.4 & 2.67 \\
%    & & CPU only & 2092.7 & 3.01 \\
%    & & GPU only & 1039.2 & 10.60 \\
\begin{table}[t]
  %\vspace{-0.5em}
  \small
  \centering
  \begin{tabular}{cccc}
    \thead{Graph} & \thead{System} & \thead{Time (s)} & \thead{Cost (\$)} \\
    \hline
    
    \multirow{5}{*}{Reddit-small} & Dorylus & 165.77 & 0.045 \\
    & Dorylus (GPU only) & 28.06 & 0.052 \\
    & DGL (sampling) & 566.33 & 0.480 \\
    & DGL (non-sampling) & 33.64 & 0.028 \\
    & AliGraph & -- & -- \\
    \hline
    
    \multirow{5}{*}{Amazon} & Dorylus & 415.23 & 0.654 \\
    & Dorylus (GPU only) & 308.27 & 2.096 \\ 
    & DGL (sampling) & 842.49 & 5.728 \\
    & DGL (non-sampling) & -- & -- \\
    & AliGraph & 1560.66 & 1.498 \\
  \end{tabular}
  \vspace{-1em}
  \caption{
    Evaluation of end-to-end performance and costs of \tool and other GNN training systems.
    Each time reported is the time to reach the target accuracy.
  }
  \label{tab:existing-systems}
 % \vspace{-2em}
\end{table}

\MyPara{Performance.} To enable meaningful performance comparisons and make training finish in
a reasonable amount of time, we set 93.90\% and 63.00\% as our target accuracy for
the two graphs.
As shown in Figure~\ref{fig:progress}(a), \tool (GPU only) has the best performance, followed by DGL (non-sampling).
Since \codeIn{Reddit-small} is a small graph that fits into the memory of a single (V100) GPU,
DGL (non-sampling) performs much better than DGL (sampling), which incurs \emph{per-epoch} sampling overheads.
To reach the same accuracy (93.90\%), \tool is 3.25$\times$ faster than DGL (sampling), but 5.9$\times$ slower than \tool (GPU only).
AliGraph is unable to reach our target accuracy after many epochs.

For the \codeIn{Amazon} graph, DGL cannot scale without sampling.
As shown in Figure~\ref{fig:progress}(b), to reach the same target accuracy,
\tool is 1.99$\times$ faster than DGL (sampling), and 1.37$\times$ slower than \tool (GPU only). % \tool-async ($s$=0) and \tool-GPU have almost the same accuracy-time curve. 
AliGraph is able to reach the target accuracy for \codeIn{Amazon}. However, \tool is significantly faster.
As these results show, graph sampling improves scalability at the cost of increased overheads and reduced accuracy. 

The times reported for \tool and its GPU-only variant in Table~\ref{tab:existing-systems} are smaller than those reported in Table~\ref{tab:dorylus-raw}. This is due to the lower target accuracy we set for these experiments.

\MyPara{Value Comparison.} To demonstrate the promise of Dorylus, we compared these systems
using the value metric.
As expected, given the small size of the \codeIn{Reddit-small} graph, the GPU-based systems perform quite well.
In fact, in this case the normalized value of DGL (non-sampling) is 1.48, providing a higher value than \tool (GPU only).
However, as mentioned earlier, DGL cannot scale without sampling; hence, this benefit is limited only to small graphs.
As we process \codeIn{Amazon}, the value of \tool quickly improves as is consistent with our findings earlier (on large, sparse graphs).
With this dataset, \tool provides a higher performance-per-dollar rate than \emph{all} the other systems\textemdash
17.7$\times$ the value of DGL (sampling) and 8.6$\times$ the value of AliGraph.

%Although we cannot compare directly with NeuGraph~\cite{neugraph-atc19} and RoC~\cite{roc-mlsys20}, their per-epoch time is in the same order of magnitude as that of DGL-P100, and the observations over Figure~\ref{fig:dgl} will likely still hold if these systems are compared. 

\mysubsection{Breakdown of Performance and Costs\label{sec:breakdown}}

\begin{figure}[!ht]
    \centering
   % \vspace{-1.5em}
    \scriptsize
    \begin{tabular}{cc}
        \hspace{-1.2em}
    \includegraphics[scale=.3]{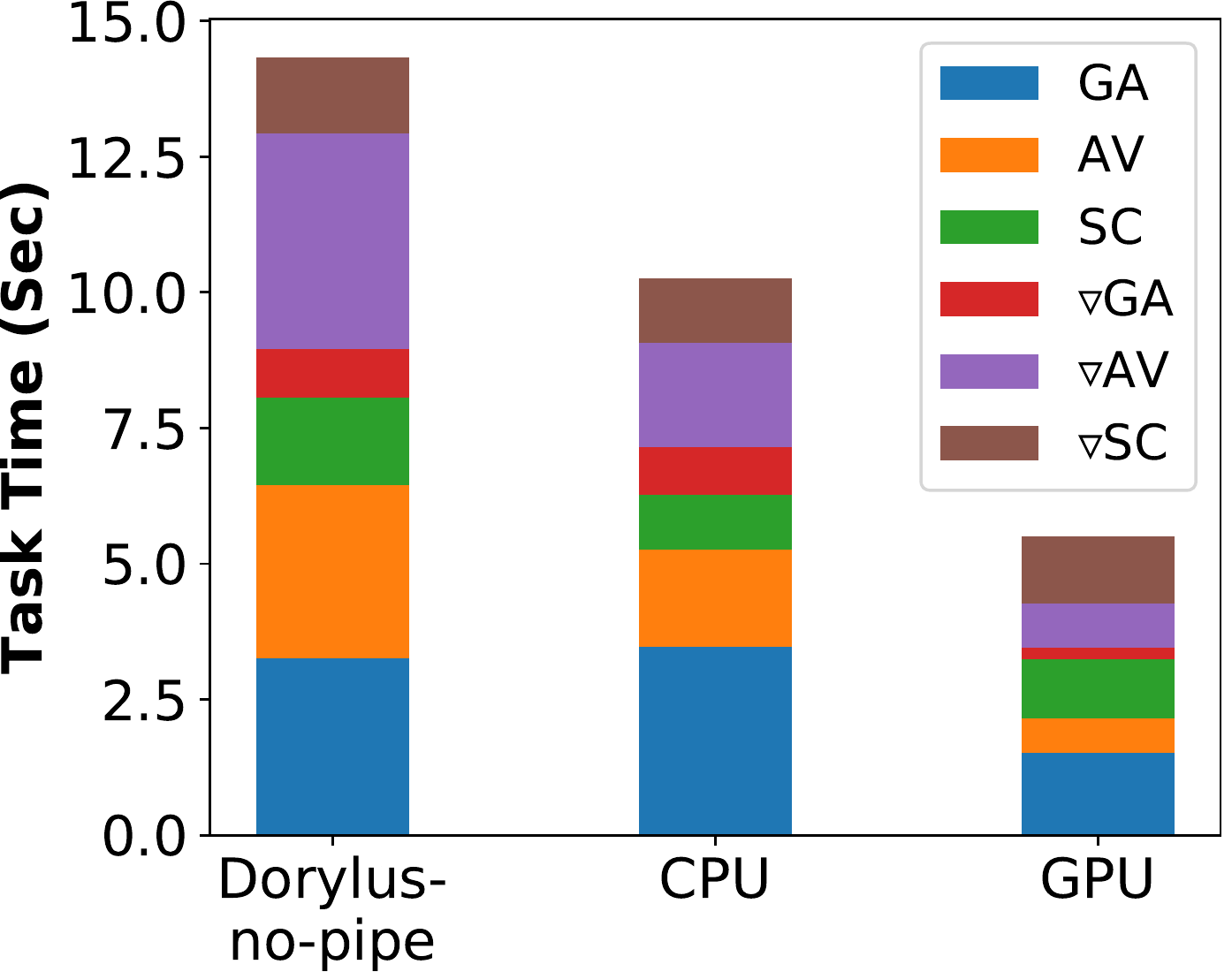}
    &
    \hspace{-1em}
    \includegraphics[scale=.3]{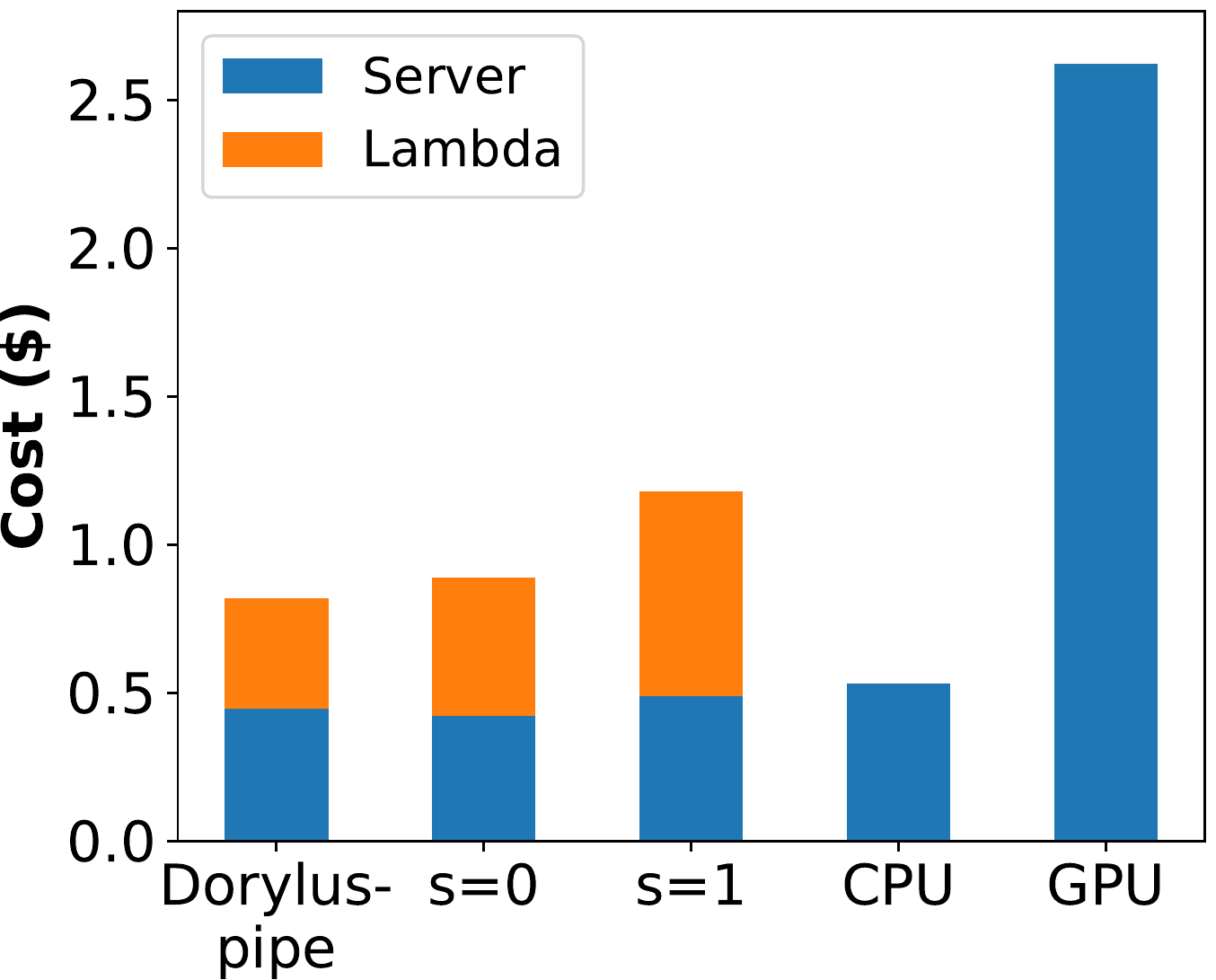}\\
    (a) Task time breakdown & (b) Cost breakdown \\
    \end{tabular}
   % \vspace{-1.8em}
    \caption{Time and cost breakdown for the \codeIn{Amazon} graph.\label{fig:breakdown_amazon}}
   % \vspace{-1.5em}
\end{figure}

Figure~\ref{fig:breakdown_amazon} shows a breakdown in task time (a) and costs (b) for training a GCN over the \codeIn{Amazon} graph. In Figure~\ref{fig:breakdown_amazon}(a), to understand the time each task spends, we disabled pipelining and asynchrony in \tool, producing a version referred to as no-pipe, in which different tasks never overlap. This makes it possible for us to collect each task's running time.  Note that no-pipe represents a version that uses Lambdas \naively to train a DNN. Without pipelining and overlapping Lambdas with CPU-based tasks, 
we saw a \textbf{1.9$\times$} degradation, making no-pipe lose to both CPU and GPU in training time. 

As shown, the tasks GA, AV, and $\triangledown$AV take the majority of the time. 
%\tool-GPU is most effective in tensor computation AV, but spends more time on SC and BS than the other two versions (although the \codeIn{Scatter} implementation is exactly the same across these systems). Since SC is executed by CPUs on GSes (even if the GPU-based backend), we suspect that it is because of the less powerful CPUs used on p3 machines, which are optimized for GPUs.
Another observation is that to execute the tensor computation AV, GPU is the most efficient backend and Lambda is the least efficient one. This is expected --- Lambdas have less powerful compute (much less than CPUs in the c5 family) and high communication overheads.  Nevertheless, these results also demonstrate that {\em when CPUs on graph servers are fully saturated with the graph computation, large gains can be obtained by running tensor computation in Lambdas that fully overlap with CPU tasks!} 

To compute the cost breakdown in Figure~\ref{fig:breakdown_amazon}(b), we simply calculated the total amounts of time for Lambdas and GSes for each of the five \tool variants and used these times to compute the costs of Lambdas and servers. Due to \tool' effective use of Lambdas, we were able to run a large number of Lambdas for the forward and backward pass. As such, the cost of Lambdas is about the same as the cost of CPU servers. 

\mysection{Related Work} 
\tool is the first system that successfully uses tiny Lambda threads to train a GNN by exploiting various graph-related optimizations. 
%Although there does not exist any work in the literature that uses a computation model similar to the one proposed in this paper, 
There are three categories of techniques in parallelization (\S\ref{sec:parallelism}), GNN training (\S\ref{sec:gnn-training}), and graph systems (\S\ref{sec:graph}). 
\mysubsection{Parallel Computation for Model Training\label{sec:parallelism}} 
How to exploit parallelism in model training is a topic that has been extensively studied. 
There are two major dimensions in how to effectively parallelize the training work: (1) what to partition and (2) how to synchronize between workers.

\MyPara{What to Partition.} The most straightforward parallelism model is \emph{data parallelism}~\cite{all-reduce-16, dawnbench-sigops19,geeps-eurosys16,minibatch-sgd-17,seide2014-bit-interspeech14,speech-dnns-icassp14,communication-mpich-jhpca05,poseidon-atc17}, where \emph{inputs} are partitioned and processed by individual workers. Each worker learns parameters (weights) from its own portion of inputs and periodically shares its parameters with other workers to obtain a global view. Both share-memory systems~\cite{all-reduce-16, speech-dnns-icassp14, minibatch-sgd-17} and distributed systems~\cite{parameterserver, poseidon-atc17, adam} have been developed for data-parallel training.
Another parallelization strategy is to partition the work, often referred to as \emph{model parallelism}~\cite{device-placement-icml17} where the operators in a model are partitioned and each worker evaluates and updates only a subset of parameters \wrt its model partition for all inputs.
%Model parallelism alone is often not useful since dependencies between different parts of the model can easily lead to underutilization of resources and poor performance. 

A recent line of work develops techniques for \emph{hybrid parallelism}~\cite{pipedream-sosp19,gpipe-18,flexflow-mlsys18, parallelize-cnn14}. PipeDream~\cite{pipedream-sosp19} adds pipelining into model parallelism to fully utilize compute without introducing significant stalls. Although \tool also uses pipelining, tasks on a \tool pipeline are much finer-grained. For example, instead of splitting a model into layers, we construct graph and tensor tasks in such a way that graph tasks can be parallelized on graph servers, while each tensor task is small enough to fit into a Lambda's resource profile. \tool uses pipelining to overlap graph and tensor computations specifically to mitigate Lambdas' network latency. FlexFlow~\cite{flexflow-mlsys18} automatically splits an iteration along four dimensions.
%samples, operators, attributes, and parameters. 
%However, none of these existing parallelization techniques specifically target GNNs.% whose inputs are graphs. 

\MyPara{How Workers Synchronize.}  When workers work on different portions of inputs (\ie, data parallelism), they need to share their learned parameters with other workers. Parameter updating requires synchronization between workers. For share-memory systems, they often rely on primitives such as \codeIn{all\_reduce}~\cite{all-reduce-16} that broadcasts each worker's parameters to all other workers. Distributed systems including \tool use parameter servers~\cite{parameterserver, poseidon-atc17, adam}, which periodically communicate with workers for updating parameters. The most commonly-used approach for synchronization is the bulk synchronous parallel (BSP) model, which poses a barrier at the end of each epoch. All workers need to wait for gradients from other workers at the barrier. Wait-free backpropagation~\cite{poseidon-atc17} is an optimization of the BSP model.
%that sends out gradients as soon as they become available.
%, but it still falls into the category of synchronous computation. 

Since synchronous training often introduces computation stalls, \emph{asynchronous} training~\cite{all-reduce-16,geeps-eurosys16} has been proposed to reduce such stalls --- each worker proceeds with the next input minibatch before receiving the gradients from the previous epoch. An asynchronous approach reduces time needed for each epoch at the cost of increased epochs to reach particular target accuracy. This is because allowing workers to use parameters learned in epoch $m$ to perform forward computations in epoch $n$ ($n \neq m$) leads to \emph{statistical inefficiency}.
This problem can be mitigated with a hybrid approach such as \emph{bounded staleness}~\cite{pipedream-sosp19, cui-atc14, aspire, hogwild-nips11}.
%which also inspires \tool's design. %The idea is to allow workers to use stale values of parameters only in a window of a limited size. 
%Values that are too old are not used because they are not as relevant to the current iteration as more recent values. 
%Wait is needed when values of bounded recency are not available. Bounded staleness introduces wait occasionally but maintains a high degree of statistical relevance for parameters used. 

\mysubsection{GNN Training and Graph Systems\label{sec:gnn-training}}
As the GNN family keeps growing~\cite{gnn-recommendation-aaai19, gtn-nips19, cnn-graphs-nips16, gnn-motif-attention-cikm19, gnn-recommender-kdd18, gnn-review-18, ggsnn-iclr16,jia-kdd20, aligraph-kdd19,agl-vldb20}, developing efficient and scalable GNN training systems becomes popular. GraphSage~\cite{graphsage-nips17} uses graph sampling, NeuGraph~\cite{neugraph-atc19} extends GNN training to multiple GPUs, and RoC~\cite{roc-mlsys20} uses dynamic graph partitioning to achieve efficiency. Other systems that can scale to large graphs are all based on sampling~\cite{aligraph-kdd19,agl-vldb20}.
%All of these existing systems rely on GPUs, which are costly. The need of fitting a gigantic input graph into GPUs' limited memory clearly hinders the scalability of a training system. Although we could not directly run NeuGraph and RoC, they were built for training efficiency, \emph{not} scalability and costs. For example, NeuGraph is a single machine, multi-GPU system that uses a customized version of TensorFlow as its backend. It relies completely on GPU memory to hold graph/tensor data. With 8 Tesla P100 GPUs, NeuGraph can scale only to a graph with 706M edges. 
%RoC is a distributed system, but it loads the \emph{entire graph} in every compute node to perform dynamic graph partitioning. % This approach is guaranteed to be not scalable and leads to wasted memory. 
%We estimate that both NeuGraph and RoC will outperform \tool in training performance over small graphs (with higher training costs), but cannot scale to graphs such as \codeIn{Reddit-large} and \codeIn{Friendster}. 

Programming frameworks such as DGL~\cite{dgl} have been proposed to create a graph-parallel interface (\ie, GAS) for developers to easily mix graph operations with NNs. However, such frameworks still represent the graph as a matrix and push it to an underlying training framework such as TensorFlow for training. We solve this fundamental scalability problem with a ground-up system redesign that separates the graph computation from the tensor computation. 

\mysubsection{Graph-Parallel Systems\label{sec:graph}}
There exists a body of work on scalable and efficient graph systems of many kinds: single-machine share-memory systems~\cite{ligra,galois,grace,mariappan-eurosys19,mariappan-eurosys21}, disk-based out-of-core systems~\cite{graphchi,xstream,gridgraph,graphq,mmap,flashgraph,graspan,turbograph,vora-atc16, maass-eurosys17, lumos-atc19, ai-atc17,rstream}, and distributed systems~\cite{pregel,graphlab, powergraph, cyclops, roy-sosp15, powerlyra, zhu-osdi16, zhang-osdi16, shi-osdi16,   kickstarter, naiad, wu-socc15, bu-vldb14, vora-taco16,vora-asplos17-coral}. 
These systems were built on top of a graph-parallel computation model, whether it is vertex-centric or edge-centric. Inspired by these systems, \tool formulates operations involving the graph structure as graph-parallel computation and runs it on CPU servers for scalability.

\mysection{Conclusion} 

\tool is a distributed GNN training system that scales to large billion-edge graphs with low-cost cloud resources.
We found that CPU servers, in general, offer more performance per dollar than GPU servers for large sparse graphs.
Adding Lambdas added $2.75\times$ more performance-per-dollar than CPU only servers, and $4.83\times$ more than GPU only servers.
Compared to existing sampling-based systems \tool{} is up to $3.8\times$ faster and $10.7\times$ cheaper.
Based on the trends we observed \tool{} can scale to even larger graphs than we evaluated, offering even higher values.

\section*{Acknowledgments}
We thank the anonymous reviewers for
their comments. We are 
grateful to our shepherd Amar Phanishayee for his feedback.
%helping us improve the paper.
This work is supported by NSF grants CCF-1629126, CNS-1703598, CCF-1723773, CNS-1763172, CCF-1764077,  CNS-1907352, CNS-1901510, CNS-1943621, CHS-1956322, CNS-2007737, CNS-2006437, CNS-2106838, ONR grants N00014-16-1-2913 and N00014-18-1-2037, as well as a Sloan Fellowship. 
\appendix
\mysection{Artifact Appendix}
\mysubsection{Artifact Summary}
\tool is a distributed GNN training system that scales to large billion-edge 
graphs using cheap cloud resources\textemdash specifically CPU servers and serverless
threads.
It launches a set of graph servers which are used for processing graph data
and doing operations such as gather and scatter.
In addition, parameter servers hold the weights for the model.
It can be configured to run with multiple different backends, such as a pure
CPU backend and a GPU backend.
By separating the graph and tensor components of a graph neural network
\tool is able to effectively utilize serverless threads by providing a
deep asynchronous-parallel pipeline in which tensor and graph operations
are overlapped.
By doing this \tool significantly improves the performance-per-dollar of
serverless training over both the CPU and GPU backends.

%\tool is a managed runtime built for a memory-disaggregated cluster where each managed application uses one CPU server and multiple memory servers. When launched on \tool, the process runs its application code (mutator) on the CPU server, and the garbage collector on both the CPU server and memory servers in a coordinated manner.  Due to task offloading and moving computation close to data, \tool significantly improves the locality for both the mutator and GC and, hence, the end-to-end performance of the application. 

\mysubsection{Artifact Check-list}
{\small
\squishlist
  \item {\bf Hardware: AWS cloud account}
  \item {\bf Public link: \url{https://github.com/uclasystem/dorylus} }
  \item {\bf Code licenses: \textbf{The GNU General Public License (GPL)} }
\squishend
}

\mysubsection{Description}
\mysubsubsection{\tool's Codebase} 
\tool contains the following three components: 
\squishlist
\item The Graph Server which performs graph operations and manages
    Lambda threads (which can also use CPU and GPU backends)
\item The Weight Server which holds the model parameters and sends them
    to the workers
\item The Lambda functions which can be uploaded to AWS to be used during
    training
\squishend

\mysubsubsection{Deploying \tool}  
%\tool builds on linux-4.11-rc8 and OpenJDK 12.02. It has been tested on CentOS 7.5 with MLNX\_OFED 4.3 and CentOS 7.6 with MLNX\_OFED 4.5. 

To build \tool, the first step is to make sure you have the following dependencies
installed on your local machine:
\squishlist
    \item \codeIn{awscli}
    \item \codeIn{python3-venv}
\squishend

Make sure to run \codeIn{aws configure} and set up your credentials to allow
you to have access to AWS services.
Once these are installed, we need to download the code and setup the
environment:

\begin{mdframed}[style=MyFrame,nobreak=true]
\codeIn{git clone} \\ \codeIn{git@github.com:uclasystem/dorylus.git} \\
\\
\codeIn{cd dorylus/} \\
\codeIn{git checkout v1.0 \# artifact tag} \\
\\
\codeIn{python3 -m venv venv} \\
\codeIn{source venv/bin/activate} \\
\codeIn{pip install -U pip} \\
\codeIn{pip install -r requirements.txt}
\end{mdframed}

\MyPara{Set Up the Cluster.}
We now discuss how to setup the cluster with all different roles.
To do this, we use the \codeIn{ec2man} python module.
To start, setup the profile in the following way:
\begin{mdframed}[style=MyFrame,nobreak=true]
\codeIn{\$ vim ec2man/profile} \\
\\
\codeIn{default       \# Profile from \~/.aws/credentials} \\
\codeIn{ubuntu        \# Cluster username} \\
\codeIn{\$\{HOME\}/.ssh/id\_rsa  \# Path to SSH key} \\
\codeIn{us-east-2     \# AWS region}
\end{mdframed}

As mentioned previously, we work with two types of workers which we call
'contexts', specifically graph and weight servers.
To add machines to these two contexts, we use one of the following commands:
\begin{mdframed}[style=MyFrame,nobreak=true]
%% Allocate command broken up for indents
\codeIn{python -m ec2man allocate --ami [AMI]}

\codeIn{--type [ec2 type] --cnt [\#servers]}

\codeIn{--sg [security group]}

\codeIn{--ctx [weight|graph]}\\
%% End Allocate command
\\
%Same for add command
\codeIn{python -m ec2man add [graph|weight]}

\codeIn{[list of ec2 ids]}
%End add command
\end{mdframed}
Run the first command with an AMI ID that presents a fresh install of Ubuntu,
ideally with about 36 GB of storage.
Alternatively if you have created instances already, say 4 graph servers
you can add them to the module using the \codeIn{add} command with a list of
their IDs.
Finally, run the command \codeIn{python -m ec2man setup} to get the data
about the instances so they can be managed by the module.
To make sure everything is setup correctly, try SSHing into graph server 0
using \codeIn{python -m ec2man graph 0 ssh}.

\MyPara{Building \tool.}
The next step is to make sure all dependencies are installed to build \tool
on the cluster machines.
To do this, run the following commands:
\begin{mdframed}
\codeIn{local\$ ./gnnman/send-source [--force]}

\codeIn{\# '--force' removes existing code} \\
\\
\codeIn{local\$ ./gnnman/install-dep}
\end{mdframed}

This will sync the source code with the nodes on the cluster.
Then, it will install all dependencies required to build \tool.
If this fails for some reason you may need to ssh into each node, move
into the \codeIn{dorylus/gnnman/helpers} directory, and run:
\begin{mdframed}
\codeIn{remote\$ ./graphserver.install} \\
\codeIn{remote\$ ./weightserver.install}
\end{mdframed}

\MyPara{Parameter Files.}
There are a number of parameter files relating to things such as the ports used
during training.
Most of these will be fine as they are and should only be changed if there is a
conflict.

\MyPara{Compiling the Code.}
To build and synchronize the code on all nodes in the cluster run:
\begin{mdframed}
\codeIn{local\$ ./gnnman/setup-cluster} \\
\codeIn{local\$ ./gnnman/build-system}

\codeIn{[graph|weight] [cpu|gpu]}
\end{mdframed}
The first command sets up each node of the cluster to be aware of each other.
This is important as we only build the code on node 0 and distribute it to other
nodes.
The second command runs CMake to build the actual system.
Not specifying a context builds for all contexts.
Not specifying either \codeIn{cpu} or \codeIn{gpu} as the backend builds the
serverless version.

\MyPara{Setting up Lambda Functions.}
To install the Lambda functions, you can SSH into one of the weight or graph
servers.
Once there, run the following commands:
\begin{mdframed}
\# Install the Lambda dependencies \\
\codeIn{remote\$ ./funcs/manage-funcs.install} \\
\\
\# Build and upload the function to the cloud
\codeIn{remote\$ cd src/funcs} \\
\codeIn{remote\$ ./<function-name>/upload-func}
\end{mdframed}

\mysubsubsection{Preparing the Data}
There are 4 main inputs to \tool:
\squishlist
    \item The graph structure
    \item Graph partition info
    \item Input features
    \item Training labels
\squishend

\MyPara{Graph Input.}
To prepare an input graph for \tool, the format should be a binary edge list
with vertices numbered from $0$ to $|V|$ with no breaks using 4 byte values.
The file should be named \codeIn{graph.bsnap}.

\MyPara{Partition Info.}
\tool uses edge-cut partitioning.
While we do limit partitioning to edge-cuts, we allow flexibility in how the edge
cut is implemented by partitioning at runtime.
Provide a text file that lists partition assignments line by line, where each
line number corresponds to the vertex ID and the number is the partition to which
it is assigned.
The file should be called \codeIn{graph.bsnap.parts}.

\MyPara{Input Features.}
The input features take the form of a tensor of size $|V|\times d$ where $d$ is
the number of input features.
The file should be binary and take the format of:
\begin{mdframed}
\codeIn{[numFeats][v0\_feats][v1\_feats][v2\_feats]...}
\end{mdframed}
The file should be called \codeIn{features.bsnap}.

\MyPara{Training Labels.}
The labels file should be binary and take the form:
\begin{mdframed}
\codeIn{[numLabels][label0][label1]...}
\end{mdframed}
This file should be called \codeIn{labels.bsnap}.

\MyPara{Preparing the NFS Server.}
On an NFS server setup the dataset in the following format under a directory
called \codeIn{/mnt/filepool/}.
If the dataset we are preparing is called \codeIn{amazon}, the directory structure
would look like this:
\begin{mdframed}
\codeIn{amazon} \\
\codeIn{|-- features.bsnap} \\
\codeIn{|-- graph.bsnap} \\
\codeIn{|-- labels.bsnap} \\
\codeIn{|-- parts\_<\#partitions>/}

\codeIn{|-- graph.bsnap.edges}

\codeIn{|-- graph.bsnap.parts}
\end{mdframed}
where \codeIn{graph.bsnap.edges} is a symlink to \codeIn{../graph.bsnap}.
Use the \codeIn{add} command from above to add the NFS server to a special
context called \codeIn{nfs} so that \codeIn{ec2man} knows where to look for
it.
Finally, run
\begin{mdframed}
\codeIn{local\$ ./gnnman/mount-nfs-server}
\end{mdframed}

\mysubsubsection{Running Dorylus.}
Once the cluster has been setup, the code compiled, the Lambda functions installed,
and the datasets prepared, we can run Dorylus.
To run it use the following command from the \codeIn{dorylus/} directory on
your local machine:
\begin{mdframed}
\# \codeIn{<dataset>}: the dataset you prepared \\
\# \codeIn{--l}: the \#lambdas/server \\
\# \codeIn{--p}: enable asynchronous pipelining \\
\# \codeIn{--s}: degree of staleness \\
\# \codeIn{[cpu|gpu]}: backend to use (blank means lambda) \\
\\
\codeIn{./run/run-dorylus <dataset>}

\codeIn{[--l=\#lambdas] [--lr=learning\_rate]}

\codeIn{[--p] [--s=staleness] [cpu|gpu]}
\end{mdframed}

You will see the output of the Graph Servers, but can see the output
of both the Graph and Weight Servers in \codeIn{graphserver-out.txt} and
\codeIn{weightserver-out.txt}.
More details of \tool's installation and deployment can be found in
\tool's code repository.
\newpage
\bibliographystyle{abbrv}
\bibliography{paper}

%\newpage
\mysection{Full Proof\label{sec:theorem}}
To simplify the proof without introducing ambiguity, we refer to running asynchronous SGD on GNNs with asynchronous \codeIn{Gather} as \emph{asynchronous GNN training}, while running normal SGD with synchronous \codeIn{Gather} as \emph{synchronous GNN training}.
\mysubsection{Proof of Theorem~\ref{thm:async-gnn}}
Similar to the proof of Theorem 2 in \cite{chen-gcnvr-pmlr18}, we prove Theorem~\ref{thm:async-gnn} in 3 steps:
\begin{enumerate}
    \item Lemma 1: Given a sequence of weights matrices $W^{(1)}, \ldots, W^{(N)}$ which are close to each other, approximate activations in asynchronous GNN training are close to the exact activations.
    \item Lemma 2: Given a sequence of weights matrices $W^{(1)}, \ldots, W^{(N)}$ which are close to each other, approximate gradients in asynchronous GNN training are close to the exact gradients.
    \item Theorem~\ref{thm:async-gnn}: Asynchronous GNN training generates weights that change slow enough for the gradient bias goes to zero, and thus the algorithm converges.
\end{enumerate}

Let $\|A\|_{\infty} = \max_{i, j}\left|A(i, j)\right|$, we first state the following propositions:
\begin{proposition}
\label{prop:a}
\ 
\begin{itemize}
    \item $\|A B\|_{\infty} \leq \operatorname{col}(A)\|A\|_{\infty}\|B\|_{\infty}$, where $col(A)$ is the number of columns of matrix $A$.
    \item $\|A \odot B\|_{\infty} \leq\|A\|_{\infty}\|B\|_{\infty}$, where $\odot$ is element-wise product.
    \item $\|A+B\|_{\infty} \leq\|A\|_{\infty}+\|B\|_{\infty}$.
\end{itemize}
Let 
\begin{align*}
C:= \max& \left\{\operatorname{col}(\hat{A}), \operatorname{col}\left(H^{(0)}\right), \ldots, \operatorname{col}\left(H^{(L)}\right),\right.\\
        & \left.\operatorname{col}\left(W^{(0)}\right), \ldots, \operatorname{col}\left(W^{(L)}\right) \right\}
\end{align*}
be the largest number of columns of matrices we have in the proof, we have
\begin{itemize}
    \item $\|A B\|_{\infty} \leq C\|A\|_{\infty}\|B\|_{\infty}$.
\end{itemize}
\end{proposition}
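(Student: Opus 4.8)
The plan is to establish all three inequalities directly from the definition $\|A\|_\infty = \max_{i,j}|A(i,j)|$, reducing each claim to a bound on a single arbitrary entry and then taking the maximum over the index pair $(i,j)$. Since every statement is an entrywise fact, no spectral or structural property of the matrices is required; the scalar triangle inequality and the multiplicativity of the absolute value $|xy|=|x|\,|y|$ do all the work. I would prove the three bullets in the order given, since each is self-contained, and then read off the consolidated product bound as a one-line consequence.

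For the product bound I would fix $(i,j)$ and write $(AB)(i,j) = \sum_{k} A(i,k)B(k,j)$, observing that the sum contains exactly $\operatorname{col}(A)$ terms (the inner dimension, equivalently the number of rows of $B$). Bounding each summand by $|A(i,k)|\,|B(k,j)| \le \|A\|_\infty\|B\|_\infty$ and applying the triangle inequality gives $|(AB)(i,j)| \le \operatorname{col}(A)\,\|A\|_\infty\|B\|_\infty$, and maximizing over $(i,j)$ yields the first inequality. For the Hadamard product the $(i,j)$ entry is simply $A(i,j)B(i,j)$, so $|(A\odot B)(i,j)| = |A(i,j)|\,|B(i,j)| \le \|A\|_\infty\|B\|_\infty$, and the maximum over entries finishes it with no dimensional factor. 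For the sum, $|(A+B)(i,j)| \le |A(i,j)| + |B(i,j)| \le \|A\|_\infty + \|B\|_\infty$ by the scalar triangle inequality, again followed by the maximum over $(i,j)$.

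The final consolidated bound $\|AB\|_\infty \le C\|A\|_\infty\|B\|_\infty$ then follows from the product inequality by monotonicity: since $C$ is defined as the largest column count among all matrices arising in the proof, we have $\operatorname{col}(A) \le C$ for every such $A$, and replacing $\operatorname{col}(A)$ by the uniform constant $C$ only weakens the bound. I do not anticipate a genuine obstacle, as every step is an elementary pointwise estimate; the only point requiring care is that the dimensional factor in the product case counts the number of summands $\operatorname{col}(A)$ rather than any outer dimension, and that $C$ is taken large enough to dominate every inner dimension encountered later when Proposition~\ref{prop:a} is chained through the activation and gradient recursions in Lemma~1 and Lemma~2.
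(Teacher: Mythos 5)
Your proof is correct, and it is the standard elementary argument: entrywise bounds via the scalar triangle inequality, with the factor $\operatorname{col}(A)$ arising as the number of summands in $(AB)(i,j)=\sum_k A(i,k)B(k,j)$, and the consolidated bound following since $C$ dominates every inner dimension used later. The paper itself gives no proof of this proposition, deferring to Appendix~C of \cite{chen-gcnvr-pmlr18}, and your argument matches what that reference does, so there is nothing to reconcile.
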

The proof can be found in Appendix C in \cite{chen-gcnvr-pmlr18}.
\begin{definition}
\label{def:compose}
(Mixing Matrix) We say matrix $\tilde{A}$ is a mixing matrix with $N$ source matrices $\left\{A_1,\ldots, A_N\right\}$ iff. $\forall i, j,\ \exists! k \in [1,N] \textit{ s.t. } \tilde{A}(i,j) = A_k(i, j)$.
\end{definition}
Conceptually, every element in a mixing matrix corresponds to one of the source matrices. And we also have the following proposition:
\begin{proposition}
\label{prop:b}

\begin{comment}
Suppose we have a set of matrices $M_k$ and $A_k$ s.t. $A = \sum_{k=1}^{N}M_{k}\odot A_k$, where $A$ is composed of $A_1,\ldots,A_N$, and $M_k$ is the corresponding mask for $A_k$, i.e.,
\begin{equation*}
M_k(i, j) = 
\left\{
\begin{array}{ll}
    1,& \textit{if } A(i, j) = A_k(i, j)\textit{,}\\ 
      & \quad \textit{and }\forall k' \in [1, k], A_{k'}(i, j) = 0\\
    0,& \textit{if } A(i, j) \neq A_k(i, j)
\end{array}
\right.
\end{equation*}
and $\sum_{k=1}^{N}M_k = \mathbf{1}$. We have
\end{comment}

Suppose that $\tilde{A}$ is a mixing matrices with $A_1,\ldots, A_N$ as its sources, then
$$\left\|\tilde{A}\right\|_{\infty} \leq \max_{k = 1}^{N}\left\{\left\|A_k\right\|_{\infty}\right\}.$$
\end{proposition}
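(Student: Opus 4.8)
The plan is to argue entrywise, exploiting the fact that $\|\cdot\|_{\infty}$ is defined as an entrywise supremum and that a mixing matrix is, by Definition~\ref{def:compose}, built pointwise from its sources. The statement should follow almost immediately once both definitions are unpacked, so the work is mostly bookkeeping rather than estimation.

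First I would fix an arbitrary index pair $(i,j)$. By Definition~\ref{def:compose}, since $\tilde{A}$ is a mixing matrix with sources $A_1,\ldots,A_N$, there is a (unique) index $k = k(i,j) \in [1,N]$ such that $\tilde{A}(i,j) = A_{k}(i,j)$. Taking absolute values and using $|A_{k}(i,j)| \leq \max_{i',j'}|A_{k}(i',j')| = \|A_{k}\|_{\infty}$, I obtain
\begin{equation*}
\left|\tilde{A}(i,j)\right| = \left|A_{k(i,j)}(i,j)\right| \leq \left\|A_{k(i,j)}\right\|_{\infty} \leq \max_{k=1}^{N}\left\{\left\|A_{k}\right\|_{\infty}\right\}.
\end{equation*}
The final inequality simply discards the dependence of the chosen source index on $(i,j)$, bounding any single source's sup norm by the maximum over all sources.

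Finally, since the right-hand side is a constant that does not depend on $(i,j)$, I would take the maximum over all index pairs on the left. Because $\|\tilde{A}\|_{\infty} = \max_{i,j}|\tilde{A}(i,j)|$, this yields exactly
\begin{equation*}
\left\|\tilde{A}\right\|_{\infty} = \max_{i,j}\left|\tilde{A}(i,j)\right| \leq \max_{k=1}^{N}\left\{\left\|A_{k}\right\|_{\infty}\right\},
\end{equation*}
which is the claim. There is no real obstacle here: the only subtlety worth stating explicitly is that the source index $k(i,j)$ may vary from entry to entry, so the uniform bound over $k$ is what lets the per-entry estimate survive after taking the outer maximum. Unlike Proposition~\ref{prop:a}, no factor of $C$ (number of columns) appears, precisely because a mixing matrix selects a single source entry per position rather than summing or multiplying across entries.
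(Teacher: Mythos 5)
Your proof is correct and follows essentially the same route as the paper's: both arguments unpack Definition~\ref{def:compose} entrywise, bound each entry $|\tilde{A}(i,j)|$ by the sup-norm of some source, and then take the maximum over index pairs (the paper phrases this as interchanging $\max_{i,j}$ and $\max_{k}$, which is the same bookkeeping you do via the uniform bound on $k(i,j)$). No gaps; your closing remark about why no factor of $C$ appears is accurate but not needed.
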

\begin{proof}
By Definition~\ref{def:compose}:
\begin{align*}
\|\tilde{A}\|_{\infty} =& \max_{i, j}\left|\tilde{A}(i, j)\right|\\ 
\leq &  \max_{i, j}\max_{k=1}^{N}\left|A_k(i, j)\right|\\
=&      \max_{k=1}^{N}\max_{i, j}\left|A_k(i, j)\right|\\
=&      \max_{k = 1}^{N}\left\{\left\|A_k\right\|_{\infty}\right\}.
\end{align*}
\end{proof}

\mysubsubsection{A single layer in GCN}
As a base case, we first consider a single layer in a GCN model.
In this case, at epoch $i$, the output activations $H$ depends on weights $W$ and input activations $X$ only.\\
Under synchronous GNN training, we have
\begin{equation*}
Z_{i}=\hat{A} X_{i} W_{i}, \quad H_{i}=\sigma\left(Z_{i}\right).
\end{equation*}
While under asynchronous GNN training with staleness bound $S$, we have
\begin{equation*}
Z_{AS, i}=\hat{A} \tilde{X}_{AS, i} W_{i}, \quad H_{AS, i}=\sigma\left(Z_{AS, i}\right),
\end{equation*}
where $Z_{AS, i}$, $X_{AS, i}$, and $H_{AS, i}$ are corresponding approximate matrices of $Z_i$, $X_i$, and $H_i$. $\tilde{X}_{AS, i}$ is a mixing matrix of $S$ stale activations $X_{AS, i-S+1},\ldots, X_{AS, i}$.

Now we show that the approximate output activations under asynchronous GNN training are close to the exact ones when the weights change slowly during the training.
% if the weights don't change too much during the training, then the output of 2-layer GCN model with async-GNN algorithm will be close to the output of synchronous training.

\begin{proposition}
\label{prop:single-gcn}
Suppose that the activation $\sigma(\cdot)$ is $\rho$-Lipschitz, and for any series of~~$T$ input activations and weights matrices $\left\{X_{i}, W_i\right\}_{i=1}^{T}$, where
\begin{enumerate}
    \item matrices are bounded by some constant $B$: $\left\|\hat{A}\right\|_{\infty} \leq B$,
        $\left\|X_{i}\right\|_{\infty} \leq B$,
        $\left\|X_{AS, i}\right\|_{\infty} \leq B$,
        and $\left\|W_{i}\right\|_{\infty} \leq B$;
    \item differences are bounded by $\epsilon$: $\left\|X_{AS, i}-X_{AS, j}\right\|_{\infty}<\epsilon$,
        $\left\|X_{AS, i}-X_{i}\right\|_{\infty}<\epsilon$,
        and $\left\|W_{i}-W_{j}\right\|_{\infty}<\epsilon$.
\end{enumerate}

\begin{comment}
% As we have mentioned in single layer GNN, async-GNN training will still give us exactly $X_{AS, i} = H_{i}^{(1)}$, but 
To compute $H_{AS, i}$ with asynchronous gather:
$$Z_{AS, i}=\hat{A} \tilde{X}_{AS, i} W_{i}, \quad H_{AS, i}=\sigma\left(Z_{AS, i}\right), $$
where $\tilde{X}_{AS, i} = \sum\limits_{s = i-S+1}^{i} M_{(i, s)}\odot \tilde{X}_{AS, s}$, where $M_{(i, s)}$ is the mask for epoch $i$ that reads its neighbors' $s$-epoch stale activations.
\end{comment}

Then there exists $K$ that depends on $C$, $B$, and $\rho$, s.t. for all $S < i, j \leq T$, where $S$ is the staleness bound:
\begin{enumerate}
    \item The approximate outputs won't change too fast: $\left\|Z_{AS, i} - Z_{AS, j}\right\|_{\infty} < K\epsilon$, and\\ $\left\|H_{AS, i} - H_{AS, j}\right\|_{\infty} < K\epsilon$,
    \item The approximate outputs are close to the exact outputs: $\left\|Z_{AS, i} - Z_{i}\right\|_{\infty} < K\epsilon$, and\\ $\left\|H_{AS, i} - H_{i}\right\|_{\infty} < K\epsilon$.
\end{enumerate}
\end{proposition}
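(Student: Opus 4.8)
The plan is to reduce everything to two facts already available: the submultiplicativity of $\|\cdot\|_\infty$ from Proposition~\ref{prop:a} (in particular $\|ABD\|_\infty \le C^2\|A\|_\infty\|B\|_\infty\|D\|_\infty$) and the mixing-matrix bound from Proposition~\ref{prop:b}. The only genuinely new observation needed is that a difference of two mixing matrices is again a mixing matrix, whose sources are the corresponding entry-wise difference matrices, so that Proposition~\ref{prop:b} applies to it.

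First I would establish two auxiliary estimates on $\tilde{X}_{AS,i}$. For the change bound, I observe that $\tilde{X}_{AS,i} - \tilde{X}_{AS,j}$ is, entry by entry, of the form $X_{AS,s_1}(p,q) - X_{AS,s_2}(p,q)$ with $s_1 \in [i{-}S{+}1,i]$ and $s_2 \in [j{-}S{+}1,j]$; hence it is a mixing matrix whose sources are the pairwise differences $X_{AS,s_1} - X_{AS,s_2}$. Because $S < i,j$ keeps every stale index valid, assumption~(2) gives $\|X_{AS,s_1} - X_{AS,s_2}\|_\infty < \epsilon$, so Proposition~\ref{prop:b} yields $\|\tilde{X}_{AS,i} - \tilde{X}_{AS,j}\|_\infty < \epsilon$. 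Applying the same reasoning to $\tilde{X}_{AS,i} - X_i$, whose sources are $X_{AS,s} - X_i$, and bounding each source by the triangle inequality $\|X_{AS,s} - X_i\|_\infty \le \|X_{AS,s} - X_{AS,i}\|_\infty + \|X_{AS,i} - X_i\|_\infty < 2\epsilon$, gives $\|\tilde{X}_{AS,i} - X_i\|_\infty < 2\epsilon$. I also record that $\|\tilde{X}_{AS,j}\|_\infty \le B$, since its sources are bounded by $B$ in assumption~(1).

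Next I would bound the pre-activations. For the change bound I add and subtract $\hat{A}\tilde{X}_{AS,j}W_i$ to write $Z_{AS,i} - Z_{AS,j} = \hat{A}(\tilde{X}_{AS,i} - \tilde{X}_{AS,j})W_i + \hat{A}\tilde{X}_{AS,j}(W_i - W_j)$, then apply submultiplicativity and the estimates above to obtain $\|Z_{AS,i} - Z_{AS,j}\|_\infty \le C^2B^2\epsilon + C^2B^2\epsilon = 2C^2B^2\epsilon$. For the exact-versus-approximate bound, $Z_{AS,i} - Z_i = \hat{A}(\tilde{X}_{AS,i} - X_i)W_i$ directly, giving $\|Z_{AS,i} - Z_i\|_\infty \le C^2 B(2\epsilon)B = 2C^2B^2\epsilon$. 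Finally I propagate through $\sigma$: since $\sigma$ acts entrywise and is $\rho$-Lipschitz, $\|H_{AS,i} - H_{AS,j}\|_\infty \le \rho\|Z_{AS,i} - Z_{AS,j}\|_\infty$ and likewise $\|H_{AS,i} - H_i\|_\infty \le \rho\|Z_{AS,i} - Z_i\|_\infty$. Taking $K = 2C^2B^2\max\{1,\rho\}$ then establishes all four inequalities at once.

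The main obstacle is the bookkeeping in the second paragraph: correctly arguing that the difference of two mixing matrices is itself a mixing matrix of difference matrices (indexing the sources by the pair of stale epochs chosen at each entry), and checking that the window $[i{-}S{+}1,i]$ together with $S < i,j$ keeps every index in range so that assumption~(2) covers all the source differences. Once this structural claim is in place, the remainder is a routine chain of applications of Propositions~\ref{prop:a} and~\ref{prop:b} and the Lipschitz property, with all the constants collapsing into $K$.
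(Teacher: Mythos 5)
Your proof is correct and follows essentially the same route as the paper's: bound $\bigl\|\tilde{X}_{AS,i}\bigr\|_\infty$ and the mixing-matrix differences via Proposition~\ref{prop:b}, decompose $Z_{AS,i}-Z_{AS,j}$ by adding and subtracting $\hat{A}\tilde{X}_{AS,j}W_i$, apply the submultiplicativity bounds of Proposition~\ref{prop:a}, and finish with the $\rho$-Lipschitz property of $\sigma$. The only deviation is your observation that a difference of two mixing matrices is itself a mixing matrix whose sources are the entrywise differences $X_{AS,s_1}-X_{AS,s_2}$, which gives $\bigl\|\tilde{X}_{AS,i}-\tilde{X}_{AS,j}\bigr\|_\infty<\epsilon$ directly, whereas the paper chains triangle inequalities through $X_{AS,i}$ and $X_{AS,j}$ to get $3\epsilon$; this yields your slightly sharper constant $K=2C^2B^2\max\{1,\rho\}$ versus the paper's $4C^2B^2\max\{1,\rho\}$, an immaterial improvement since $K$ is unconstrained.
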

\begin{proof}
% Let $\tilde{X}_{AS, i} = \sum\limits_{s = i-S+1}^{i} M_{(i, s)}\odot \tilde{X}_{AS, s}$, where $M_{(i, s)}$ is a mask for epoch $i$ that reads its neighbors' $s$-epoch stale activations at layer 1. All elements in $M_{(i, s)}$ are either $0$ or $1$ and $\sum\limits_{s = i-S+1}^{i} M_{(i, s)} = \mathbf{1}$, which is a matrix of ones.\\
After the $S$ warm up training epochs, we know for all $i > S$, $\tilde{X}_{AS, i}$ consists with either latest neighbor activations (i.e., $X_{AS, i}$) or activations from some previous epochs (i.e., $X_{AS, i-1}, \dots, X_{AS, i-S+1}$), and
\begin{align*}
\left\| \tilde{X}_{AS, i} - X_{AS, i}\right\|_{\infty} 
% = & \left\| \sum\limits_{s = i-S+1}^{i} M_{(i, s)}\odot (X_{AS, i} - X_{AS, s})\right\|_{\infty} \\
\leq & \max\limits_{s \in [i-S+1, i)} \left\| X_{AS, i} - X_{AS, s}\right\|_{\infty}\\
\leq & \epsilon.
\end{align*}
By the triangle inequality,
\begin{align*}
\left\|\tilde{X}_{AS, i}-X_{i}\right\|_{\infty} & < 2\epsilon, \forall i \in (S, T]\\
\left\|\tilde{X}_{AS, i}-\tilde{X}_{AS, j}\right\|_{\infty} & < 3\epsilon, \forall i, j \in (S, T]
\end{align*}
By Prop.~\ref{prop:b} and $\left\|X_{AS, i}\right\|_{\infty} \leq B$, we have $\left\|\tilde{X}_{AS, i}\right\|_{\infty} \leq B$.
Thus, $\forall i,j \in (S, T]$, we have
\begin{align*}
& \left\|Z_{AS, i}-Z_{AS, j}\right\|_{\infty}\\
=    & \left\|\hat{A}\tilde{X}_{AS, i}W_i - \hat{A}\tilde{X}_{AS, j}W_j \right\|_{\infty}\\
=    & \left\|\hat{A}\left(\tilde{X}_{AS, i} - \tilde{X}_{AS, j}\right)W_i + 
              \hat{A}\tilde{X}_{AS, j}(W_i - W_j)
    \right\|_{\infty}\\
\leq & C^2\left(
        \left\|
            \hat{A}\right\|_{\infty}\left\|\tilde{X}_{AS, i} - 
            \tilde{X}_{AS, j}\right\|_{\infty}\left\|W_i
        \right\|_{\infty}\right. \\
    &   \left.+ \left\|
            \hat{A}\right\|_{\infty}\left\|\tilde{X}_{AS, j}\right\|_{\infty}\left\|W_i - W_j
        \right\|_{\infty}
    \right)\\
\leq & C^2 (3\epsilon B^2 + \epsilon B^2)\\
=    & 4C^2B^2\epsilon,
\end{align*}
and
\begin{align*}
\left\|Z_{AS, i}-Z_{i}\right\|_{\infty}
= & \left\|\hat{A}\tilde{X}_{AS, i}W_i - \hat{A}X_{i}W_i\right\|_{\infty}\\
= & \left\|\hat{A} \left(\tilde{X}_{AS, i} - X_{i}\right) W_i\right\|_{\infty}\\
\leq & C^2 \left\|\hat{A}\right\|_{\infty}\left\|\tilde{X}_{AS, i} - 
              X_{i}\right\|_{\infty}\left\|W_i\right\|_{\infty}\\
\leq & 2C^2B^2\epsilon.
\end{align*}
By the Lipschitz continuity of $\sigma(\cdot)$,
\begin{align*}
\left\|H_{AS, i} - H_{AS, j}\right\|_{\infty} 
= &     \left\|\sigma(Z_{AS, i}) - \sigma(Z_{AS, j})\right\|_{\infty} \\
\leq &  \rho\left\|Z_{AS, i} - Z_{AS, j}\right\| \\
\leq &  4\rho C^2B^2\epsilon,\\
\left\|H_{AS, i} - H_{i}\right\|_{\infty}
= &     \left\|\sigma(Z_{AS, i}) - \sigma(Z_{i})\right\|_{\infty} \\
\leq &  \rho\left\|Z_{AS, i} - Z_{i}\right\|\\
\leq &  2\rho C^2B^2\epsilon.
\end{align*}
We set $K = \max\left\{4C^2B^2\epsilon, 4\rho C^2B^2\epsilon \right\}$ thus all the differences are bounded by $K\epsilon$.
\end{proof}

\mysubsubsection{Lemma 1: Activations of Multi-layer GCNs}
Now we generalize the conclusion to multi-layer GCNs by applying Prop.~\ref{prop:single-gcn} layer by layer.
The forward computation of a multi-layer GCN can be expressed as follows:\\
Under synchronous GNN training,
\begin{equation}
\label{equ:forward-s}
\begin{aligned}
Z_{i}^{(l+1)}   =& \hat{A} H_{i}^{(l)} W_{i}^{(l)}  & &\quad l=1, \ldots, L-1\\
H_{i}           =& \sigma\left(Z_{i}^{(l+1)}\right) & &\quad l=1, \ldots, L-1
\end{aligned}
\end{equation}
Under asynchronous GNN training,
\begin{equation}
\label{equ:forward-as}
\begin{aligned}
Z_{A S, i}^{(l+1)}=& \hat{A} \tilde{H}_{A S, i}^{(l)} W_{i}^{(l)} & &\quad l=1, \ldots, L-1\\
H_{A S, i}^{(l+1)}=& \sigma\left(Z_{A S, i}^{(l+1)}\right) & &\quad l=1, \ldots, L-1
\end{aligned}
\end{equation}
The following lemma bounds the approximation error of output activations of a multi-layer GCN under asynchronous GNN training.
The Lemma shows that the approximation error is bounded by the change rate of model weights only, regardless of the staleness bound $S$.
% Intuitively, given a sequence of the model weights changes slowly, we can bound the difference of activations of the first two layers by the previous proposition, and then we can also bound the difference of activations of following layers because the computation in each layer is $\rho$-Lipschitz. Thus the final output activations are also close to the exact activations under synchronous training.

\begin{lemma}
\label{lemma:act}
Suppose that all the activations are $\rho$-Lipschitz.
Given any series of T inputs and weights matrices $\left\{H_{i}^{(0)}, W_{i}\right\}_{i=1}^{T}$, where
\begin{enumerate}
    \item $\left\|\hat{A}\right\|_{\infty} \leq B$,
    $\left\|H_{i}^{(0)}\right\|_{\infty} \leq B$, 
    % $\left\|H_{i}^{(1)}\right\|_{\infty} \leq B$, 
    % $\left\|H_{AS, i}^{(1)}\right\|_{\infty} \leq B$, 
    and $\left\|W_{i}\right\|_{\infty} \leq B$,
    \item 
        % $\left\|H_{AS, i}^{(1)}-H_{AS, j}^{(1)}\right\|_{\infty}<\epsilon$, 
        % $\left\|H_{AS, i}^{(1)}-H_{i}^{(1)}\right\|_{\infty}<\epsilon$, and 
        $\left\|W_{i}-W_{j}\right\|_{\infty}<\epsilon$,
\end{enumerate}

\begin{comment}
and we have the approximate activations computed with asynchronous gather
$$Z_{A S, i}^{(l+1)}=\hat{A} \tilde{H}_{A S, i}^{(l)} W_{i}^{(l)}, \quad H_{A S, i}^{(l+1)}=\sigma\left(Z_{A S, i}^{(l+1)}\right), $$
and exact activations computed with synchronous gather
$$Z_{i}^{(l+1)}=\hat{A} H_{i}^{(l)} W_{i}^{(l)}, \quad H_{i}=\sigma\left(Z_{i}^{(l+1)}\right), $$
\end{comment}

there exists some constant $K$ that depends on $C$, $B$, and $\rho$ s.t. $\forall i > LS$,
\begin{enumerate}
    \item $\left\|H_{i}^{(l)}-H_{AS, i}^{(l)}\right\|_{\infty} < K \epsilon, \quad l=1, \ldots, L$,
    \item $\left\|Z_{i}^{(l)}-Z_{AS, i}^{(l)}\right\|_{\infty} < K \epsilon, \quad l=1, \ldots, L$.
\end{enumerate}
\end{lemma}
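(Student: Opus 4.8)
The plan is to prove Lemma~\ref{lemma:act} by induction on the layer index $l$, using Proposition~\ref{prop:single-gcn} as the inductive engine: at each layer I regard the layer-$l$ activations $H^{(l)}$ as the ``input activations'' $X$ of a single GCN layer and invoke the single-layer estimate to control the layer-$(l{+}1)$ outputs. The induction hypothesis at layer $l$ carries three facts at once: (i) a uniform bound $\|H_i^{(l)}\|_\infty,\ \|H_{AS,i}^{(l)}\|_\infty \le B_l$ for some finite $B_l$; (ii) a slow-change bound $\|H_{AS,i}^{(l)}-H_{AS,j}^{(l)}\|_\infty < K_l\epsilon$; and (iii) a closeness bound $\|H_{AS,i}^{(l)}-H_i^{(l)}\|_\infty < K_l\epsilon$, all valid past the accumulated warmup, i.e.\ for $i,j>lS$. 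For the base case $l=0$ the input $H_i^{(0)}=X$ is identical across epochs and is never gathered, so $H_{AS,i}^{(0)}=H_i^{(0)}$; thus (ii) and (iii) hold trivially with $K_0=0$, and (i) is exactly the hypothesis $\|H_i^{(0)}\|_\infty\le B$, so $B_0=B$.

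For the inductive step I would instantiate Proposition~\ref{prop:single-gcn} at layer $l{+}1$ under the substitution $X\mapsto H^{(l)}$, $X_{AS}\mapsto H_{AS}^{(l)}$, $W\mapsto W^{(l)}$. Two preparations are needed. First, the boundedness hypothesis of the proposition must be re-established at each layer: using $\rho$-Lipschitzness with $\sigma(0)=0$ (as for $\mathit{ReLU}$) gives $\|\sigma(Z)\|_\infty \le \rho\|Z\|_\infty$, which together with Proposition~\ref{prop:a} yields the recurrence $B_{l+1}\le \rho C^2 B^2 B_l$; this stays finite because $L$ is fixed, and the boundedness of $\tilde H_{AS,i}^{(l)}$ needed to apply it follows from Proposition~\ref{prop:b}. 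Absorbing all the $B_l$ into a single common bound lets every layer meet the proposition's hypotheses. Second, the ``difference'' parameter the proposition consumes is now $\max\{K_l\epsilon,\epsilon\}=K_l\epsilon$ (the layer-$l$ activation differences play the role of the input differences, while the weight differences remain $<\epsilon$), so the proposition's conclusion bounds the layer-$(l{+}1)$ differences by a constant multiple of $K_l\epsilon$, giving a multiplicative recurrence $K_{l+1}=c\,K_l$ with $c$ depending only on $C$, $B$, $\rho$. Hence $K_L$ is again a single constant depending on $C$, $B$, $\rho$ (and, harmlessly, on the fixed $L$), and setting $K:=\max_l K_l$ closes the induction and establishes both claims for all $i>LS$.

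The main obstacle I anticipate is the warmup/staleness bookkeeping rather than the algebra. Each asynchronous gather at layer $l$ mixes up to $S$ stale copies of $H^{(l)}$ through the mixing matrix $\tilde H_{AS,i}^{(l)}$, and the single-layer estimate only becomes valid past the first $S$ epochs of that layer; since the layer-$l$ slow-change bound is itself available only for $i>lS$, the admissible index range contracts by $S$ at every layer, which is precisely what forces $i>LS$ in the statement (mirroring $N>L\times S$ in Theorem~\ref{thm:async-gnn}). Making this nesting rigorous amounts to re-running the triangle-inequality step from inside the proof of Proposition~\ref{prop:single-gcn}, but now with the already-controlled layer-$l$ quantities: one shows that $\|\tilde H_{AS,i}^{(l)}-H_{AS,i}^{(l)}\|_\infty$ is dominated by the layer-$l$ slow-change bound, i.e.\ by $K_l\epsilon$ rather than $\epsilon$, for every $i>lS$. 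Once that substitution is verified, the layer-wise application of Proposition~\ref{prop:single-gcn} goes through unchanged and the error stays governed solely by the weight change rate, independent of the staleness bound $S$, exactly as the lemma asserts.
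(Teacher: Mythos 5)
Your proposal takes essentially the same route as the paper: its proof of Lemma~\ref{lemma:act} likewise applies Proposition~\ref{prop:single-gcn} layer by layer, obtaining per-layer constants $K^{(1)},\ldots,K^{(L)}$ whose product yields the final $K$ --- exactly your multiplicative recurrence $K_{l+1}=c\,K_l$, with the conclusions of each layer (both the closeness and the slow-change bounds) serving as the hypotheses of the next. Your explicit bookkeeping of the per-layer boundedness recurrence and of the $S$-per-layer contraction of the admissible epoch range (which is what forces $i>LS$) simply makes rigorous the steps the paper compresses into ``applying Prop.~\ref{prop:single-gcn} repeatedly,'' so the argument is correct and matches the paper's.
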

\begin{proof}
By Prop.~\ref{prop:single-gcn} and Eq.~(\ref{equ:forward-s},\ref{equ:forward-as}), for all $i > LS$, there exists a constant $K^{(1)}$ s.t. $\left\|H_{i}^{(1)}-H_{AS, i}^{(1)}\right\|_{\infty} < K^{(1)} \epsilon$ and  $\left\|Z_{i}^{(1)}-Z_{AS, i}^{(1)}\right\|_{\infty} < K^{(1)} \epsilon$.

By applying Prop.~\ref{prop:single-gcn} repeatedly for all $L$ layers, we will get $K^{(1)}, K^{(2)} ,\ldots, K^{(L)}$, s.t. 
$$\left\|\!H_{i}^{(2)}\!-\!H_{AS, i}^{(2)}\!\right\|_{\infty}\! <\! K^{(1)}\!K^{(2)}\! \epsilon, \ldots, \left\|\!H_{i}^{(L)}\!-\!H_{AS, i}^{(L)}\!\right\|_{\infty}\! <\! K\! \epsilon, $$
and 
$$\left\|\!Z_{i}^{(2)}\!-\!Z_{AS, i}^{(2)}\!\right\|_{\infty} \!<\! K^{(1)}\!K^{(2)} \!\epsilon, \ldots, \left\|\!Z_{i}^{(L)}\!-\!Z_{AS, i}^{(L)}\!\right\|_{\infty} \!<\! K \!\epsilon,$$
where $K=\prod_{l=1}^{L} K^{(i)}$, for all $i > LS$.
\end{proof}

\mysubsubsection{Lemma 2: Gradients of Multi-layer GCNs}
% As we have bounded the difference of the outputs of multi-layer GNNs under asynchronous training, we also want to bound the difference of the gradients of multi-layer GNNs thus bound the difference of weight updates.
Denote $f(y, z)$ as the cost function that takes the model prediction $y$ and the ground truth $z$. The gradients in the backpropagation can be computed as follows:\\
Under synchronous training,
\begin{equation}
\label{equ:grad-s}
\begin{aligned}
\nabla_{H^{(l)}} f &=\hat{A}^{T} \nabla_{Z^{(l+1)}} f W^{(l)^T}             & & l=1, \ldots, L-1 \\
\nabla_{Z^{(l)}} f &=\sigma'\left(Z^{(l)}\right) \odot \nabla_{H^{(l)}} f   & & l=1, \ldots, L \\
\nabla_{W^{(l)}} f &=\left(\hat{A} H^{(l)}\right)^{T} \nabla_{Z^{(l+1)}} f  & & l=0, \ldots, L-1
\end{aligned}
\end{equation}
Under asynchronous training,
\begin{equation}
\label{equ:grad-as}
\begin{aligned}
\nabla_{H_{AS}^{(l)}} f_{AS} &= \hat{A}^{T} \nabla_{Z_{AS}^{(l+1)}} f_{AS} W^{(l)^T}                 & & l=1, \ldots, L-1 \\ 
\nabla_{Z_{AS}^{(l)}} f_{AS} &= \sigma'\left(Z_{AS}^{(l)}\right) \odot \nabla_{H_{AS}^{(l)}} f_{AS}  & & l=1, \ldots, L \\ 
\nabla_{W^{(l)}} f_{AS}      &= \left(\hat{A} H_{AS}^{(l)}\right)^{T} \nabla_{Z_{AS}^{(l+1)}} f_{AS} & & l=0, \ldots, L-1
\end{aligned}
\end{equation}
Denote the final loss function as $\mathcal{L}(W_{i})$. Let $g_{i}(W_{i}) = \nabla \mathcal{L}(W_{i})$, which is the gradient of $\mathcal{L}$ with respect to $W_i$ under synchronous GNN training in epoch $i$. And let $g_{AS, i}\left(W_{i}\right) = \nabla \mathcal{L}_{AS, i}(W_{i})$, which is the corresponding gradients of approximate $\mathcal{L}$ under asynchronous GNN training in epoch $i$.
To simplify the proof, we construct the weight update sequences with layer-wise weight updates instead epoch-wise weight updates. But either of them works for the proof.
The following lemma bounds the difference between gradients under asynchronous GNN training and exact ones.
\begin{lemma}
\label{lemma:grad}
Suppose that $\sigma(\cdot)$ and $\nabla_{z} f(y, z)$ are $\rho$-Lipschitz, and $\left\|\nabla_{z} f(y, z)\right\|_{\infty} \leq B$. For the given inputs matrix $H^{(0)}$ from a fixed dataset and any series of $T$ weights matrices $\left\{W_i\right\}_{i=1}^{T}$, s.t.,
\begin{enumerate}
    \item $\left\|W_{i}\right\|_{\infty} \leq B,\left\|\hat{A}\right\|_{ } \leq B, \text { and }\left\|\sigma^{\prime}\left(Z_{AS, i}\right)\right\|_{\infty} \leq B$,
    \item $\left\|W_{i}-W_{j}\right\|_{\infty}<\epsilon, \forall i, j$,
\end{enumerate}

\begin{comment}
and we have the approximate activations computed with asynchronous gather
$$Z_{A S, i}^{(l+1)}=\hat{A} \tilde{H}_{A S, i}^{(l)} W_{i}^{(l)}, \quad H_{A S, i}^{(l+1)}=\sigma\left(Z_{A S, i}^{(l+1)}\right), $$
and exact activations computed with synchronous gather
$$Z_{i}^{(l+1)}=\hat{A} H_{i}^{(l)} W_{i}^{(l)}, \quad H_{i}^{(l+1)}=\sigma\left(Z_{i}^{(l+1)}\right), $$
\end{comment}

then there exists $K$ that depends on $C$, $B$, and $\rho$ s.t.
$$\left\|g_{AS, i}\left(W_{i}\right) - g_{i}(W_{i})\right\|_{\infty} \leq K \epsilon, \forall i>L S.$$
\end{lemma}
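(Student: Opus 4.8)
The plan is to mirror the forward-pass argument of Prop.~\ref{prop:single-gcn} and Lemma~\ref{lemma:act}, but now to propagate the error estimates \emph{backward} through the layers. The key structural fact I will exploit is that the backward recurrences in Eq.~(\ref{equ:grad-s}) and Eq.~(\ref{equ:grad-as}) have exactly the same bilinear and element-wise form as the forward ones, so the same norm inequalities of Prop.~\ref{prop:a} apply verbatim. Throughout, I will rely on Lemma~\ref{lemma:act}, which already guarantees that for $i > LS$ both the exact and the approximate activations satisfy $\|H_i^{(l)} - H_{AS, i}^{(l)}\|_{\infty} < K\epsilon$ and $\|Z_i^{(l)} - Z_{AS, i}^{(l)}\|_{\infty} < K\epsilon$ at every layer $l$; these activation gaps are precisely what drives the gradient discrepancy.

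First I would establish the base case at the output layer. Because the final activations $H^{(L)}$ and $H_{AS}^{(L)}$ are within $K\epsilon$ by Lemma~\ref{lemma:act}, and $\nabla_{z} f(y,z)$ is assumed $\rho$-Lipschitz and bounded, the output gradient $\nabla_{H_{AS}^{(L)}} f_{AS}$ (and hence, through $\sigma'$, the seed $\nabla_{Z_{AS}^{(L)}} f_{AS}$) differs from its synchronous counterpart by $O(\epsilon)$. Then I would run a backward induction on $l$ from $L$ down to $1$. Assuming $\|\nabla_{Z_{AS}^{(l+1)}} f_{AS} - \nabla_{Z^{(l+1)}} f\|_{\infty} = O(\epsilon)$, I first bound $\|\nabla_{H_{AS}^{(l)}} f_{AS} - \nabla_{H^{(l)}} f\|_{\infty}$ through the recurrence $\nabla_{H} f = \hat{A}^{T}\nabla_{Z} f\, W^{(l)^T}$: adding and subtracting a cross term splits the difference into one piece controlled by the inductive hypothesis and one piece controlled by $\|W_i - W_j\|_{\infty} < \epsilon$, after which Prop.~\ref{prop:a} and the uniform bounds $\|\hat{A}\|_{\infty}, \|W_i\|_{\infty} \le B$ finish the estimate. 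Next I propagate through $\nabla_{Z} f = \sigma'(Z) \odot \nabla_{H} f$, splitting the discrepancy into a term from $\sigma'(Z_{AS}^{(l)}) - \sigma'(Z^{(l)})$ and a term from the $\nabla_{H}$-difference just obtained, using the element-wise inequality of Prop.~\ref{prop:a} and $\|\sigma'\|_{\infty} \le B$.

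Once the activation gradients are controlled at every layer, the weight gradients follow directly. Applying the same add-subtract split to $\nabla_{W^{(l)}} f = (\hat{A}H^{(l)})^{T}\nabla_{Z^{(l+1)}} f$ bounds $\|\nabla_{W^{(l)}} f_{AS} - \nabla_{W^{(l)}} f\|_{\infty}$ by combining the $H$-gap from Lemma~\ref{lemma:act} with the $\nabla_{Z}$-gap from the induction. Since $g_{AS, i}(W_i) - g_i(W_i)$ is assembled from these per-layer weight-gradient differences, taking $K$ to be the (product-)accumulated constant across the $L$ layers, all depending only on $C$, $B$, and $\rho$, yields $\|g_{AS, i}(W_i) - g_i(W_i)\|_{\infty} \le K\epsilon$ for all $i > LS$.

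I expect the main obstacle to be the $\sigma'$ term in the $\nabla_{Z}$ recurrence. Bounding $\|\sigma'(Z_{AS}^{(l)}) - \sigma'(Z^{(l)})\|_{\infty}$ needs Lipschitz control of $\sigma'$, a smoothness condition one step beyond the stated $\rho$-Lipschitzness of $\sigma$, which I would combine with the $\|Z_{AS}^{(l)} - Z^{(l)}\|_{\infty} < K\epsilon$ bound from Lemma~\ref{lemma:act} to keep the term $O(\epsilon)$; if only the boundedness $\|\sigma'\|_{\infty} \le B$ from assumption~(1) is available, this term must instead be absorbed through the Lipschitz constant implicit in that hypothesis. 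The secondary, purely bookkeeping challenge, exactly as in Lemma~\ref{lemma:act}, is that the backward induction multiplies a per-layer constant at each of the $L$ steps, so I must check that the accumulated constant stays finite and independent of the staleness bound $S$, which it does because $S$ enters only through the threshold $i > LS$ inherited from Lemma~\ref{lemma:act}.
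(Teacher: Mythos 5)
Your proposal is correct and follows essentially the same route as the paper's proof: a backward induction from layer $L$ down to $1$, seeded by the $\rho$-Lipschitzness of $\nabla_z f$ together with the $Z$-gap from Lemma~\ref{lemma:act}, propagated through the recurrences of Eq.~(\ref{equ:grad-s},\ref{equ:grad-as}) via the norm inequalities of Prop.~\ref{prop:a}, and finished by the add--subtract split for $\nabla_{W^{(l)}} f$ and a max over layers (the paper merges your two sub-steps by substituting the $\nabla_H$ recurrence into the $\nabla_Z$ one, and, exactly as you anticipated, it silently assumes Lipschitz control of $\sigma'$ beyond the stated Lipschitzness of $\sigma$). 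One small correction: since $g_{AS,i}(W_i)$ and $g_i(W_i)$ are both evaluated at the \emph{same} weights $W_i$, no $\left\|W_i - W_j\right\|_\infty$ cross term arises in the backward recursion --- the weight-drift hypothesis enters only through Lemma~\ref{lemma:act}'s activation bounds --- so that extra piece in your inductive step is unnecessary, though harmless.
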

\begin{proof}
By the Lipschitz continuity of $\nabla f$ and Lemma~\ref{lemma:act}, for the final layer $L$, we have
\begin{equation}
\label{equ:L}
\begin{aligned}
\exists K^{(L)}, \textit{s.t. }
\left\|\nabla_{Z_{AS}^{(L)}} f_{AS}-\nabla_{Z^{(L)}} f\right\|_{\infty} 
\leq &  \rho \left\|Z_{AS}^{(L)}-Z^{(L)}\right\|_{\infty} \\
\leq &  \rho K^{(L)}\epsilon.
\end{aligned}
\end{equation}
Besides, by the Lipschitz continuity of $\sigma(\cdot)$ and Lemma~\ref{lemma:act}, $\exists\dot{K}, \textit{ s.t. }\forall l \in [1, L], \left\|\sigma'\left(Z_{AS}^{(l)}\right)-\sigma'\left(Z^{(l)}\right)\right\|_{\infty} \leq 
\rho \dot{K} \epsilon$.\\
We prove by induction on $l$ to bound the difference of $\nabla_{Z_{AS}}f_{AS}$ and $\nabla_{Z}f$ layer by layer in the back-propagation order, i.e., we will prove
\begin{equation}
\label{equ:lemma2a}
\exists K^{(l)}, \forall l \in [1, L], \left\| \nabla_{Z_{AS}^{(l)}} f_{AS}-\nabla_{Z^{(l)}} f \right\|_{\infty} \leq K^{(l)} \epsilon.
\end{equation}
Base case: by Eq.~\ref{equ:L}, the statement holds for $l = L$, where $K^{(L)} = \rho\dot{K}$.\\
Induction Hypothesis (IH):
$$\forall l' > l, \left\|\nabla_{Z_{AS}^{(l')}} f_{AS}-\nabla_{Z^{(l')}} f \right\|_{\infty} \leq K^{(l')} \epsilon.$$
Then for layer $l$, by Eq.~(\ref{equ:grad-s},\ref{equ:grad-as}) and the induction hypothesis,
\begin{align*}
        & \left\| \nabla_{Z_{AS}^{(l)}}f_{AS} - \nabla_{Z^{(l)}}f\right\|_{\infty}\\
=       & \left\| 
            \sigma'\left(Z_{AS}^{(l)}\right) \odot \hat{A}^{T} \nabla_{Z_{AS}^{(l+1)}} f_{AS} W^{(l)^T}\right.\\
        &   \left.- \sigma'\left(Z^{(l)}\right) \odot \hat{A}^{T} \nabla_{Z^{(l+1)}} f W^{(l)^T}
        \right\|_{\infty}\\
\leq    & C\!\left\{ \left\| \!
                \left[\! \sigma'\!\left(\!Z_{AS}^{(l)}\! \right)\! -\! \sigma'\!\left(\!Z^{(l)}\!\right)\! \right]\! \odot\! \left[\!\hat{A}^{T}\! \nabla_{Z_{AS}^{(l+1)}}\! f_{AS}\!\right]\!
                \right\|_{\infty}\! \left\|\!W^{(l)^T}\!\right\|_{\infty}\right.\\
                &\left. + \left\|\! \sigma'\!\left(\!Z^{(l)}\!\right)\! \odot\! \left[\!\hat{A}^{T}\! \left(\!\nabla_{Z_{AS}^{(l+1)}}\! f_{AS}\! -\! \nabla_{Z^{(l+1)}}\! f\!\right)\!\right]\! \right\|_{\infty}\! \left\|\!W^{(l)^T}\!\right\|_{\infty} \right\}\\
\leq    & C^2\!\left\{\!\left\|\! 
                \sigma'\!\left(\!Z_{AS}^{(l)} \!\right)\! -\! \sigma'\!\left(\!Z^{(l)}\!\right)\! \right\|_{\infty}\! \left\|\!\hat{A}^{T} \!\right\|_{\infty}\! \left\|\!\nabla_{Z_{AS}^{(l+1)}}\! f_{AS}\!
                \right\|_{\infty}\! \left\|\!W^{(l)^T}\!\right\|_{\infty}\right.\\
                &\left. + \left\|\! \sigma'\!\left(\!Z^{(l)}\!\right)\! \right\|_{\infty}\! \left\|\! \hat{A}^{T}\! \right\|_{\infty}\! \left\|\! \nabla_{Z_{AS}^{(l+1)}}\! f_{AS}\! -\! \nabla_{Z^{(l+1)}}\! f\! \right\|_{\infty}\! \left\|\!W^{(l)^T}\!\right\|_{\infty}
            \right\}\\
\leq    & C^2\rho \dot{K}\epsilon B^3 + C^2B^2\rho K^{(l+1)}\epsilon B\\
=       & \rho\left(\dot{K}+ K^{(l+1)}\right)B^3C^2\epsilon.
\end{align*}
Thus we set $K^{(l)} = \rho\left(\dot{K}+ K^{(l+1)}\right)B^3C^2$ and equation~(\ref{equ:lemma2a}) holds.

Similarly, we can also bound the difference of $\nabla_{W}f_{AS}$ and $\nabla_{W}f$ in each layer with $K_{W}$:
\begin{equation}
\label{equ:lemma2b}
\exists K_{W}^{(l)}, \forall l \in [0, L-1], \left\| \nabla_{W^{(l)}}f_{AS} - \nabla_{W^{(l)}}f\right\|_{\infty} \leq K_{W}^{(l)}\epsilon.
\end{equation}
By the inequality~(\ref{equ:lemma2a}, \ref{equ:lemma2b}), we have
\begin{align*}
& \left\|g_{AS, i}\left(W_{i}\right) - g_{i}(W_{i})\right\|_{\infty}\\
\leq & \max\limits_{l \in [0, L-1]}\left\|\nabla_{W^{(l)}} f_{AS}-\nabla_{W^{(l)}} f \right\|_{\infty}\\
\leq & K \epsilon,
\end{align*}
% \begin{align*}
%         & \left\|g_{AS, i}\left(W_{i}\right) - \nabla \mathcal{L}\left(W_{i}\right)\right\|_{\infty}\\
% \leq    & \max\limits_{l \in [0, L-1]}\left\|\nabla_{W^{(l)}} f_{AS}-\nabla_{W^{(l)}} f \right\|_{\infty}\\
% \leq    & K \epsilon,
% \end{align*}
where $K = \max\limits_{l \in [0, L-1]}\left\{K_{W}^{(l)}\right\}$.
\end{proof}
\mysubsubsection{Proof of Theorem~\ref{thm:async-gnn}}
\begin{customthm}{\ref{thm:async-gnn}}
Suppose that (1) the activation $\sigma(\cdot)$ is $\rho$-Lipschitz, 
(2) the gradient of the cost function $\nabla_{z}f(y,z)$ is $\rho$-Lipschitz and bounded, 
(3) gradients for weight updates $\left\| g_{AS}(W) \right\|_\infty$, $\left\| g(W)\right\|_\infty$, and $\left\| \nabla \mathcal{L}(W) \right\|_\infty$ are all bounded by some constant $G > 0$ for all $\hat{A}$, $X$, and $W$, 
(4) the loss $\mathcal{L}(W)$ is $\rho$-smooth.
Then given the local minimizer $W^{*}$, there exists a constant $K > 0$, s.t., $\forall N > L\times S$ where $L$ is the number of layers of the GNN model and $S$ is the staleness bound, if we train GCN with asynchronous \codeIn{Gather} under a bounded staleness for $R \leq N$ iterations where $R$ is chosen uniformly from $[1, N]$, we will have
\begin{equation*}
\mathbb{E}_{R}
\left\|\nabla \mathcal{L}\left(W_{R}\right)\right\|_{F}^{2} 
\leq 2 \frac{\mathcal{L}\left(W_{1}\right)-\mathcal{L}\left(W^{*}\right)+K+\rho K}{\sqrt{N}},
\end{equation*}
for the updates $W_{i+1} = W_i - \gamma g_{AS}(W_i)$ and the step size $\gamma = min\left\{ \frac{1}{\rho}, \frac{1}{\sqrt{N}}\right\}$.
\end{customthm}
\begin{proof}
We assume the asynchronous GNN training has run for $LS$ epochs with the initial weights $W_1$ as a warm up, and Lemma~\ref{lemma:grad} holds.
Denote $\delta_{i} = g_{AS, i}\left(W_{i}\right) - \nabla\mathcal{L}\left(W_{i}\right) = g_{AS, i}\left(W_{i}\right) - g_{i}\left(W_{i}\right)$. 
By the smoothness of $\mathcal{L}$ we have
\begin{equation}
\label{equ:L-bound}
\begin{aligned}
& \mathcal{L}\left(W_{i+1}\right)\\
\leq &  \mathcal{L}\left(W_{i}\right) + 
        \left\langle\nabla \mathcal{L}\left(W_{i}\right), W_{i+1}-W_{i}\right\rangle +
        \frac{\rho}{2} \gamma^{2}\left\|g_{AS, i}\left(W_{i}\right)\right\|_{F}^{2} \\
= &     \mathcal{L}\left(W_{i}\right) - 
        \gamma\left\langle\nabla \mathcal{L}\left(W_{i}\right), g_{AS, i}\left(W_{i}\right)\right\rangle +
        \frac{\rho}{2} \gamma^{2}\left\|g_{AS, i}\left(W_{i}\right)\right\|_{F}^{2} \\
= &     \mathcal{L}\left(W_{i}\right) -
        \gamma\left\langle\nabla \mathcal{L}\left(W_{i}\right), \delta_{i}\right\rangle -
        \gamma\left\|\nabla \mathcal{L}\left(W_{i}\right)\right\|_{F}^{2}\\
        & + \frac{\rho}{2} \gamma^{2}\left(
            \left\|\delta_{i}\right\|_{F}^{2} +
            \left\|\nabla \mathcal{L}\left(W_{i}\right)\right\|_{F}^{2}+2\left\langle\delta_{i}, \nabla \mathcal{L}\left(W_{i}\right)\right\rangle
        \right) \\
= &     \mathcal{L}\left(W_{i}\right) -
        \left(\gamma-\rho \gamma^{2}\right)\left\langle\nabla \mathcal{L}\left(W_{i}\right), \delta_{i}\right\rangle\\
        & - \left(\gamma-\frac{\rho \gamma^{2}}{2}\right)\left\|\nabla \mathcal{L}\left(W_{i}\right)\right\|_{F}^{2} +
        \frac{\rho}{2} \gamma^{2}\left\|\delta_{i}\right\|_{F}^{2}.
\end{aligned}
\end{equation}
We firstly bound the inner product term $\left\langle\nabla \mathcal{L}\left(W_{i}\right), \delta_{i}\right\rangle$. For all $i$, consider the sequence of $LS + 1$ weights: $\left\{W_{i-LS}, \ldots, W_{i}\right\}$:
\begin{align*}
&       \max _{i-LS \leq j, k \leq i}\left\|W_{j}-W_{k}\right\|_{\infty}\\
\leq &  \sum_{j=i-LS}^{i-1}\left\|W_{j}-W_{j+1}\right\|_{\infty}\\
= &     \sum_{j=i-LS}^{i-1} \gamma\left\|g_{AS, i}\left(W_{j}\right)\right\|_{\infty}\\
\leq &  \sum_{j=i-LS}^{i-1} \gamma G=LSG \gamma.
\end{align*}
By Lemma~\ref{lemma:grad}, there exists $\hat{K} > 0$, s.t.
\begin{equation*}
\left\|\delta_{i}\right\|_{\infty} = \left\|g_{AS, i}(W_{i}) - g_{i}(W_{i})\right\|_{\infty}
\leq \hat{K}LSG\gamma, \forall i > 0.
\end{equation*}
Assume that $W$ is $D$-dimensional, we have
\begin{align*}
\left\langle\nabla \mathcal{L}\left(W_{i}\right), \delta_{i}\right\rangle
\leq & D^2\left\|\nabla\mathcal{L}(W_i)\right\|_{\infty}\left\|\delta_i\right\|_{\infty} \\
\leq & \hat{K}LSD^2G^2\gamma\\
\leq & K\gamma,
\end{align*}
and
\begin{align*}
\left\|\delta_{i}\right\|_{F}^{2}
\leq & D^2\left\|g_{AS, i}(W_i) + \nabla\mathcal{L}(W_i)\right\|_{\infty}^{2}\\
\leq & D^2\left\|g_{AS, i}(W_i)\right\|_{\infty}^{2} + D^2\left\|\nabla\mathcal{L}(W_i)\right\|_{\infty}^{2}\\
\leq & 2D^2G^2\\
\leq & K,
\end{align*}
where $K = \max\left\{\hat{K}LSD^2G^2\gamma, 2D^2G^2\right\}$.
Apply these two inequality to equation~\ref{equ:L-bound}, we get
\begin{align*}
\mathcal{L}\left(W_{i+1}\right) 
\leq & \mathcal{L}\left(W_{i}\right) + 
\left(\gamma-\rho \gamma^{2}\right) K \gamma \\
& - \left(\gamma-\frac{\rho \gamma^{2}}{2}\right)\left\|\nabla \mathcal{L}\left(W_{i}\right)\right\|_{F}^{2} + 
\rho K \gamma^{2} / 2
\end{align*}
Summing up the above inequalities for all $i$ and rearranging the the terms, we get
\begin{equation*}
\begin{aligned}
& \left(\gamma-\frac{\rho \gamma^{2}}{2}\right) \sum_{i=1}^{N}\left\|\nabla \mathcal{L}\left(W_{i}\right)\right\|_{F}^{2} \\
& \leq    \mathcal{L}\left(W_{1}\right) -
        \mathcal{L}\left(W^{*}\right) + 
        K N\left(\gamma-\rho \gamma^{2}\right) \gamma +
        \frac{\rho K}{2} N \gamma^{2}.
\end{aligned}
\end{equation*}
Divide both sides of the summed inequality by $N\left(\gamma - \frac{\rho\gamma^2}{2}\right)$, and take $\gamma = \min\left\{\frac{1}{\rho}, \frac{1}{\sqrt{N}}\right\}$,
\begin{equation*}
\begin{aligned} 
& \mathbb{E}_{R \sim P_{R}}\left\|\nabla \mathcal{L}\left(W_{R}\right)\right\|_{F}^{2}
= \frac{1}{N}\sum_{i=1}^{N}\left\|\nabla \mathcal{L}\left(W_{i}\right)\right\|_{F}^{2}\\
\leq &  \frac{\mathcal{L}\left(W_{1}\right)-
            \mathcal{L}\left(W^{*}\right)+
            K N\left(\gamma-\rho \gamma^{2}\right) \gamma+
            \frac{\rho K}{2} N \gamma^{2}
        }{N \gamma(2-\rho \gamma)} \\ 
\leq &  \frac{\mathcal{L}\left(W_{1}\right)-
            \mathcal{L}\left(W^{*}\right)+
            K N\left(\gamma-\rho \gamma^{2}\right) \gamma+
            \frac{\rho K}{2} N \gamma^{2}
        }{N \gamma} \\ 
\leq &  \frac{\mathcal{L}\left(W_{1}\right)-
            \mathcal{L}\left(W^{*}\right)
        }{N \gamma}+
        K \gamma(1-\rho \gamma)+
        \rho K \gamma \\ 
\leq &  \frac{\mathcal{L}\left(W_{1}\right)-
            \mathcal{L}\left(W^{*}\right)
        }{\sqrt{N}}+
        K \gamma+
        \rho K / \sqrt{N} \\ 
\leq &  \frac{\mathcal{L}\left(W_{1}\right)-
            \mathcal{L}\left(W^{*}\right)+
            K+
            \rho K
        }{\sqrt{N}}.
\end{aligned}
\end{equation*}
Particularly, we have $\mathbb{E}_{R \sim P_{R}}\left\|\nabla \mathcal{L}\left(W_{R}\right)\right\|_{F}^{2} \rightarrow 0$ when $N\rightarrow \infty$, which implies that the gradient under asynchronous GNN training is asymptotically unbiased.
For simplicity purposes, we only prove the convergence of asynchronous GNN training for GCN here. However, the proof can be easily generalized to many other GNN models as long as they share similar properties in Lemma~\ref{lemma:act} and Lemma~\ref{lemma:grad}.
\end{proof}
\end{sloppypar}

\end{document}